\tikzstyle{every picture}=[level distance = 8mm, baseline=-0.5ex]
\tikzstyle{prop}=[shape=circle,minimum size=6mm, draw=black!80, fill=green!30]
\renewcommand{\d}{\text{d}}
\newcommand{\C}{\mathbb{C}}
\newcommand{\R}{\mathbb{R}}
\newcommand{\Z}{\mathbb{Z}}
\newcommand{\N}{\mathbb{N}}
\newcommand{\dsetminus}{//}
\newcommand{\resOm}{\widehat{\mathcal{R}}_\Omega}
\newcommand{\uniOm}{\widehat{\mathcal{U}}_\Omega}
\newcommand{\calK}{\mathcal{K}}
\newcommand{\scrS}{\mathscr{S}}
\newtheorem{thm}{Theorem}[section]
\newtheorem{claim}[thm]{Claim}
\newtheorem{lem}[thm]{Lemma}
\newtheorem{coro}[thm]{Corollary}
\newtheorem{prop}[thm]{Proposition}
\theoremstyle{definition}
\newtheorem{defn}[thm]{Definition}
\newtheorem{rk}[thm]{Remark}
\newtheorem{ex}[thm]{Example}
\newtheorem{quest}{Question}
\begin{document}

\title{Borel-\'Ecalle resummation of a two-point function}
\author{Pierre~J.~Clavier${}^{1,2}$\\
\normalsize \it $^1$  Institute of Mathematics, \\
\normalsize \it University of Potsdam,\\
\normalsize \it D-14476 Potsdam, Germany\\
~\\
\normalsize \it $^2$ Mathematics Laboratory, \\
\normalsize \it Technische Universit\"at, \\
\normalsize \it 10623 Berlin, Germany\\
~\\
\normalsize email: clavier@math.uni-potsdam.de}

\date{}

\maketitle

\begin{abstract} 
 We provide an overview of the tools and techniques of  resurgence theory used in the Borel-\'Ecalle resummation method, which we then apply to the 
 massless Wess-Zumino model.
 Starting from already known results on the anomalous dimension of the Wess-Zumino model, we solve its renormalisation group equation 
  for the two point 
 function in a space of formal series. We show that this solution is 1-Gevrey and that its Borel transform is resurgent.
 The Schwinger-Dyson equation of the model is then used to prove an asymptotic 
 exponential bound for the Borel transformed two point function on a star-shaped domain of a suitable ramified complex plane. This prove that the 
 two point function of the Wess-Zumino model is Borel-\'Ecalle summable.
\end{abstract}

\tableofcontents

\section{Introduction}

\subsection{State of the art and goals of the paper}

Recently, much  progress has been made towards a better analytic understanding of Quantum Field Theories (QFTs), both in perturbative and non perturbative approaches. To illustrate this, for the perturbative 
approach  let us quote  \cite{DaZh17}, where the authors proved that the regularised Feynman rules of a QFT have their image in a space of meromorphic families of distributions with linear poles. For non 
perturbative approaches, let us mention \cite{Pa19}, where an explicit solution to the Schwinger-Dyson equation is found,  a more powerful result than the ones previously obtained in \cite{BrKr99,Cl14} for 
similar Schwinger-Dyson equations. Of particular interest to us are the results of the articles \cite{BeMaVa19}, \cite{BeCl14} and \cite{BeCl16} which undertake a resurgent study of some Quantum Field 
Theories.

Resurgence theory was developed in the late 70s and early 80s almost single-handedly by Jean \'Ecalle \cite{Ecalle81,Ecalle81b,Ecalle81c}. It was initially applied to problems from the theory of dynamical systems 
(Dulac's problem, see \cite{Ec92}) but quickly found applications in other branches of mathematics. The theory of averages is  one important aspect of resurgence theory, developed by Fr\'ed\'eric Menous 
\cite{Me96,Menous} in the late 90s and much more recently by Emmanuel Vieillard-Baron \cite{VB14}.

Notwithstanding its successes, the theory of resurgence has not reached a very large audience until fairly recently. In the early 2010, In\`es Aniceto and Riccardo Schiappa started in \cite{AnSc13} a 
program  to apply various aspects of resurgence theory to physics. In particular they use alien calculus to compute non perturbative contributions to physical theories.

This approach has proved to be  {successful} and nowadays resurgence theory is becoming part of the standard toolkit available to physicists. Instead of listing the various topics and articles of the physics 
literature applying resurgence theory, we refer the reader to the review \cite{AnBaSc18} and  {to} the introductory article \cite{Do14} for a presentation of the physicists'  point of view {on} resurgence.

The theory of averages, which plays a central role in the aspects of resurgence theory used by physicists, is nevertheless somewhat hidden in their presentations of resurgence. {This theory is} a central tool {of} the 
Borel-\'Ecalle resummation process {and as such} is part of the  mathematical background necessary for the physicists' applications of resurgence. It has to be considered if one wishes to reach 
rigorous mathematical results {for} physical problems.

Bearing this in mind, the present paper has two main goals:
\begin{itemize}
 \item To present a self-contained introduction to the Borel-\'Ecalle resummation method, useful for the reader who wishes to apply it to specific problems,  e.g. those coming from physics.
 \item To illustrate the applicability of this method to physical problems, and in particular in QFT, in which context we explicitly work  out a resurgent analysis of a Wess-Zumino model.
\end{itemize}

\subsection{Summary of the paper}

The main result of the paper is the following result:
\begin{thm} \label{thm:main}
 The solution {$\tilde G(L,a)$} of the Schwinger-Dyson equation \eqref{eq:SDnlin} and the renormalisation group equation \eqref{eq:RGE} is Borel-\'Ecalle resummable along the positive real axis. For any real 
 value of the kinematic parameter $L$, the resummed function $a\mapsto G^{\rm res}(a,L)$ is analytic in the domain
 \begin{equation*}
  \left|a-\frac{1}{20L}\right| < \frac{1}{20L}.
 \end{equation*}
\end{thm}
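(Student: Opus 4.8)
The plan is to establish the three ingredients that together yield Borel-\'Ecalle summability along $\R_+$: (i) the formal solution is $1$-Gevrey in the coupling $a$, so that its Borel transform converges near the origin; (ii) this Borel transform is resurgent, extending analytically to a star-shaped domain of a suitable ramified plane; and (iii) on that domain it satisfies an exponential bound whose rate is linear in the kinematic variable $L$. The disk $|a-1/(20L)|<1/(20L)$ is exactly the image, under the (possibly averaged) Laplace transform in the direction $\R_+$, of the half-plane $\{\operatorname{Re}(1/a)>10L\}$ on which that integral converges.

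First I would solve \eqref{eq:RGE} recursively in the space of formal series: writing $\tilde G(L,a)=\sum_{k\ge 0}\gamma_k(a)\,L^k$, the renormalisation group equation becomes a triangular recursion expressing $\gamma_{k+1}$ through $\gamma_k$ and the anomalous dimension $\gamma(a)$, which is $1$-Gevrey with resurgent Borel transform by the results recalled earlier. Propagating the Gevrey-$1$ estimates through this recursion, and using stability of the $1$-Gevrey class under the product and the differential operators appearing in \eqref{eq:RGE}, yields coefficient bounds $|c_n(L)|\le C\,A^n\,n!$ for $\tilde G(L,a)=\sum_n c_n(L)\,a^n$, hence a Borel transform $\widehat G(L,\xi)=\sum_n \frac{c_n(L)}{n!}\,\xi^n$ with positive radius of convergence. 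Resurgence of $\widehat G(L,\cdot)$ then follows by transporting that of $\widehat\gamma$ through the recursion, since in Borel space the operations of \eqref{eq:RGE} become convolution and multiplication by $\xi$, under which endless continuability along paths avoiding the relevant discrete set of singularities is preserved.

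The core of the argument is the exponential bound, and for this I would use the Schwinger-Dyson equation \eqref{eq:SDnlin} rather than \eqref{eq:RGE}. On the Borel side \eqref{eq:SDnlin} becomes a nonlinear convolution equation for $\widehat G(L,\cdot)$; on the star-shaped domain I would work with the weighted norm $\|f\|_\kappa=\sup_\xi |f(\xi)|\,e^{-\kappa|\xi|}$ and show that, for $\kappa$ large enough, the Schwinger-Dyson operator is a contraction in $\|\cdot\|_\kappa$, convolution being almost submultiplicative for this weight so that higher powers of $\widehat G$ contribute geometrically small terms. Tracking the explicit combinatorial coefficients of \eqref{eq:SDnlin} identifies the smallest admissible rate as $\kappa=10L$, giving $|\widehat G(L,\xi)|\le C\,e^{10L|\xi|}$ uniformly on the domain. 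Feeding this into the Laplace transform along $\R_+$, and using an \'Ecalle well-behaved (e.g.\ median) average of the lateral sums if singularities actually lie on $\R_+$ --- which is what turns Borel-Laplace into Borel-\'Ecalle and keeps $G^{\rm res}$ real-analytic --- produces a function analytic on $\operatorname{Re}(1/a)>10L$, i.e.\ on the stated disk.

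I expect step (iii) to be the main obstacle: controlling the nonlinear convolution equation simultaneously along every direction of a ramified star-shaped domain, so that analytic continuation and the uniform exponential estimate are obtained at once, requires a Banach space whose norm is compatible both with convolution against the exponential weight and with endless continuation, and the bookkeeping of the Schwinger-Dyson combinatorics that pins down the constant $10L$ is where the real work lies. A secondary subtlety is verifying that the singular support of $\widehat G$ on $\R_+$ is tame enough for a well-behaved average to exist, so that the resummation is genuinely along the positive real axis as asserted.
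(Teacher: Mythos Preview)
Your three-step architecture (1-Gevrey, resurgent, exponentially bounded $\Rightarrow$ Borel-\'Ecalle summable) matches the paper, and your identification of the half-plane $\Re(1/a)>10L$ with the stated disk is correct. But two of your steps contain genuine gaps.

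\textbf{Resurgence of the sum.} You write that resurgence of $\widehat G(L,\cdot)$ ``follows by transporting that of $\widehat\gamma$ through the recursion'', since convolution and multiplication by $\xi$ preserve endless continuability. This only shows that each $\widehat\gamma_n$ is resurgent. It does \emph{not} show that the infinite sum $\widehat G=\sum_{n\ge1}\widehat\gamma_n\,L^n/n!$ is resurgent: the class $\widehat{\mathcal R}_\Omega$ is not closed under arbitrary pointwise-convergent series, and the derivative $\zeta\partial_\zeta$ in the recursion can in principle destroy uniform control along long paths. The paper addresses exactly this point: it uses Sauzin's quantitative convolution bound (Theorem~\ref{thm:bound_resu_Sauzin}) on the Fr\'echet space $\widehat{\mathcal R}_\Omega$, together with a Cauchy-estimate trick on nested domains $\calK_{\delta_n,L_n}(\Omega)$ to control the $\zeta\partial_\zeta$, to prove \emph{normal} convergence of $\sum\widehat\gamma_n L^n/n!$ on every compact $\calK_{\delta,L}(\Omega)$. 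That is the missing idea in your step~(ii).

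\textbf{Exponential bound and the constant $10L$.} Your plan is to run a contraction in $\|\cdot\|_\kappa$ on the Borel side of \eqref{eq:SDnlin} alone. But the Borel SDE is an equation for $\widehat\gamma$, expressed through the $\widehat\gamma_n$, which are themselves produced from $\widehat\gamma$ by the RGE; the system is coupled and does not close on $\widehat G$ without the RGE. The paper's route is different and explains where the $10$ comes from: using the RGE one shows $|\widehat\gamma_n(\zeta)|\le[(7g(\zeta)+3)|\zeta|]^{n-1}g(\zeta)$ for any common bound $g$ of $|\widehat\gamma|$ and $|\zeta\widehat\gamma'|$; the SDE is then used to prove that such a $g$ can be taken bounded by $1$ near infinity, whence $7g+3\le10$ and summation gives $|\widehat G(\zeta,L)|\le K|\zeta|^{-1}e^{10L|\zeta|}$. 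A pure contraction argument would have to reproduce both halves of this.

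\textbf{A concrete error.} You suggest using ``an \'Ecalle well-behaved (e.g.\ median) average''. The median average $\mathbf{mun}$ is \emph{not} well-behaved: it satisfies (P1) and (P2) but fails (P3), preservation of exponential growth. Since infinitely many alien derivatives act non-trivially on $\widehat G$, (P3) is essential here; one must use a genuinely well-behaved average such as the Catalan average $\mathbf{man}$.
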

Before  proving the theorem, we give a brief introduction to some aspects of Ecalle's resurgence theory, which allows to review crucial concepts of the Borel-\'Ecalle resummation method. We first define the 
Borel-Laplace resummation operator (Definition \ref{defn:Borel_Laplace}), which is generalised by the Borel-\'Ecalle resummation method. We then introduce the notion of resurgent functions (Definition 
\ref{defn:resum_function}) and state some bounds on their convolution products that will be of crucial use later on (Equation \eqref{eq:bound_star_shaped} and Theorem \ref{thm:bound_resu_Sauzin}). Finally we 
present the notion of well-behaved averages (Definition \ref{defn:well_behaved_averages}) which allows to state the main result of Borel-\'Ecalle resummation method, namely Theorem 
\ref{thm:Borel_Ecalle_resummation}.

In the next section we introduce the model we will focus on: the massless Wess-Zumino model. It is a massless supersymmetric model in four dimension. Supersymmetry prevents the need for a vertex 
renormalisation, thus drastically simplifying the Schwinger-Dyson equations. This makes this model a true QFT model simple enough to be used as a testing ground for the Borel-\'Ecalle resummation method. We 
first introduce the equations whose solution we will study using the tools of resurgence theory: a truncated version of the Schwinger-Dyson equation \eqref{eq:SDnlin} and the renormalisation group equation
\eqref{eq:RGE}. We then state the known results we base this study upon, in particular Theorem \ref{thm:resurgence_gamma}.

The following section focuses on the study of the renormalisation group equation. We start by building a solution to the renormalisation group equation in a space of formal series (Proposition 
\ref{prop:solution_RGE}). We then show that this solution is 1-Gevrey (Proposition \ref{prop:G_one_Gevrey}), and thus that its Borel transform is analytic in a disc around the origin. The main result of this 
section is Theorem \ref{thm:resurgence_two_points_function}, which states that the studied solution is indeed resurgent. Using basic results of the theory of resurgent function and previous results on the 
Wess-Zumino model, the proof of this statement is reduced to a proof of normal convergence of a series of functions. The proof of this property relies on Sauzin's non-linear analysis of resurgent functions 
\cite{Sa12}.

The last section is a study of the asymptotic behavior of the Borel transform of the solution of the Schwinger-Dyson equation and the renormalisation group equation. We are concerned with its behavior at 
infinity on {a star-shaped subset $\mathcal{U}_\Omega$ of $\C$ containing the origin that can be embedded into the ramified plane $\C\dsetminus\Omega$.} We first explain that a naive approach give the asymptotic bound 
\begin{equation*}
 |\hat G(\zeta,L)| \leq K\exp(c|\zeta|^2g(\zeta)L),
\end{equation*}
with $g$ an asymptotic bounds of the Borel transform of the anomalous dimension of the Wess-Zumino model. This bound is not satisfactory since numerical studies of \cite{BeCl14} suggest that $g$ does not 
vanishes at infinity. To obtain a better bound, and to study the asymptotic behavior of $g$, we need to make use of the Schwinger-Dyson equation. We start by expanding it (Equation 
\eqref{eq:SDE_Borel_expanded}) and finding bounds on the numbers that appear in the expansion (Lemma \ref{lem:bound_X_nm}). We then use the Schwinger-Dyson equation and the renormalisation group equation to find 
improved bounds on the functions whose series is the Borel transform of the two point function (Proposition \ref{prop:main_bound}).

In the last subsection of this paper, we prove an asymptotic bound for the Borel transform $\hat\gamma$ of the anomalous dimension of the Wess-Zumino model (Proposition \ref{prop:bound_gamma_infinity}). This 
can then be used together to derive an asymptotic bound for the two point function of the theory: Theorem \ref{thm:bound_two_point_infinity}. These bounds hold on {the star-shaped domain $\mathcal{U}_\Omega$ of} the ramified plane 
$\C\dsetminus\Omega$. Theorem \ref{thm:bound_two_point_infinity}, together with Theorem \ref{thm:resurgence_two_points_function}, implies that the solution of the renormalisation group equation and 
Schwinger-Dyson equation is Borel-\'Ecalle resummable (Corollary \ref{coro:BE_res}). The proof of Theorem \ref{thm:main} is {concluded} by Proposition \ref{prop:numerical_bound} which precises the analyticity 
domain of the resummed function.

\subsection{Some open questions}

This paper aiming at being a gentle introduction to the basic concepts of Borel-\'Ecalle resummation procedure we have tried to provide examples of the key concepts arising in the initial discussion. The rest 
of the paper, which is more technical, also has the objective to convince the reader that this procedure can actually be carried out in non trivial problems of mathematical physics. We therefore try to 
motivate each computation in order to guide the reader through the sometimes cumbersome computations. We feel that the results of this paper open new and exciting directions of  research which we now briefly 
describe.

The proof of the Borel-\'Ecalle summability of a simple, but non trivial QFT is a first step towards  physically more relevant models, a long-term goal being . 
\begin{quest}
 Are the Yang-Mills model Borel-\'Ecalle (accelero-)summable?
\end{quest}
There are still quite a few technical issues to be tackled before reaching this aim. For example, the Schwinger-Dyson equations of non supersymmetric models generally do not close. One then imposes one 
further equation to study the system, whose compatibility with the gauge symmetry is still open. Also, the study of a asymptotically free QFT would probably require the more sophisticated accelero-summation 
method {as suggested in \cite{BeCl18}}.

However, other less ambitious questions seem to be within short term reach. One concerns the transseries expansion of Borel-\'Ecalle resummed functions. There is a very precise analytical link between a Borel 
summable series and the associated Borel resummed function, known as Watson's theorem. {This classical result was generalised by Sokal (among others) in \cite{So80}. Sokal's generalisation of Watson's theorem, henceforth called Sokal-Watson theorem, is the one commonly used in QFT.} To the best of the author's knowledge, {the following question is still open}:
\begin{quest} \label{question:1}
 Is there a {Sokal-Watson's theorem} for the Borel-\'Ecalle resummation method?
\end{quest}
{The most general case of transseries of any level (which are, in principle, obtainable from the most general Borel-\'Ecalle resummation method) is still beyond the reach of resurgence theory. Let us precise this question on the example of a level one transseries, as the one found in the Wess-Zumino model studied below, and most of physical applications to resurgence:
\begin{equation*}
 \Phi(a) = \sum_{k=0}^{+\infty}\Phi_n(a)e^{-Sn/a}.
\end{equation*}
Provided the Borel transforms $\widehat\Phi_n$ all satisfy {simultaneously} the assumptions of the Sokal-Watson theorem, we readily obtain a positive answer. These assumptions are that each of the Borel transform $\widehat\Phi_n$ admits an analytic extension to a common strip containing the real line, and a common exponential bond at infinity on this strip. Then we would obtain common analytical properties and bounds for each of the $\Phi_n$ which can be used to prove the analyticity domain and bounds on $\Phi$.

However in general the Borel transforms do not admit an analytic extension in a strip containing the positive real line: they have singularities. They only admit analytic continuation along paths avoiding these singularities. This is the case we are dealing with for the Wess-Zumino model. It is more delicate but should, at least in principle, be solvable using Ecalle's resurgence relations which should allow us to reduce this case to the previous one.}

{A Sokal-Watson theorem for transseries} would be of importance for the physical implications of the Borel-\'Ecalle resummation method. Indeed, for these applications only a transseries expansions of the Borel-\'Ecalle resummed function 
were computed. These transseries are not the full Borel-\'Ecalle resummed functions but rather a good approximation which can then be compared to experimental results. A Watson's theorem for Borel-\'Ecalle 
resummation which would be formulated with transseries would provide a more precise meaning to the word ``good'' in the previous sentence and allow to have estimates for error margins coming from the 
truncations of the transseries.

Another reason why such a theorem would be of importance lies in the details of the physical applications of resurgence theory to physics. The coefficients of the transseries expansion are computed using the 
so-called median average, which can be expressed in terms of the alien derivatives of the formal series to be resummed. The median average is one special average, a notion that will be introduced below. 
However, it is not a ``well-behaved average'', which are the ones that should be used for the Borel-\'Ecalle resummation method. Nonetheless, one could expect the transseries expansion of a function to be 
unique. Thus Watson's theorem for Borel-\'Ecalle resummation would give a better mathematical ground to physical computations.

As we shall see later, in order to perform a Borel-\'Ecalle resummation on a formal series, a choice of a well-behaved average is required. This choice is not unique which raise a natural and important 
question:
\begin{quest} \label{question:2}
 How does the Borel-\'Ecalle resummed function depend on the choice of the well-behaved average? 
\end{quest}
{A priori, different well-behaved average being very different objects, one could assume the answer to be that different choices of average build different solutions. However, a finer analysis of the problem at hand leads to more subtle conclusions. In particular, for physics-related problems studied with resurgence, the Riemann surface on which the resurgent functions have their domain is highly structured. This should induce that different average coincide up to one (or possibly finitely many) free parameters, as in the case of ODEs \cite{Co98}, \cite{Co06}.}

{Therefore, one} could conjecture that it actually does not depends on the choice made and that changing averages amounts to a reparametrisation of the solution. This conjecture is motivated by an observation of 
\cite{Me97} that it indeed holds for a specific problem and from the fact that two averages are always related by a so-called passage automorphism. Even if the choice of the average changes the resummed 
function, one should expect stability of some physically relevant properties, for example the poles of the resummed function. This observation relates this question with question \ref{question:1}: one 
should not expect the transseries expansion to depend on a specific choice of an average.

One last important question lying outside the scope of the present article is 
\begin{quest}
 How and in which extent can one characterise the Borel-\'Ecalle resummed function solving a given problem?
\end{quest}
It was argued in \cite{BeCl14} that the Borel-\'Ecalle resummation method applied to QFT could give a non perturbative mass generation mechanism. In order to study the relevance of this mechanism, one needs 
to study the poles of the resummed function. {An attempt toward such a study is made at the very end of this paper.} This rather ambitious question is linked to the question \ref{question:1} and \ref{question:2}. Let us finally mention that this last question has motivated the 
present study.

\section{Elements of resurgence theory}

\subsection{The Borel-Laplace resummation method}

Many excellent introductions of the classical theory of Borel-Laplace resummation can be found in the literature. In particular, the PhD thesis 
\cite{Bo11} offers a well-written and short presentation of this topic (in French), while and the article \cite{Sa14} is a very thorough 
introduction. Nonetheless, {we shortly present the Borel-Laplace resummation method}  to obtain a self-contained paper.
\begin{defn}
The {\bf formal Borel transform} is defined on formal series as
 \begin{eqnarray*}
             \mathcal{B}: (z^{-1}~\mathbb{C}[[z^{-1}]],.) & \longrightarrow & (\mathbb{C}[[\xi]],\star) \\
  \tilde{f}(z) = \frac{1}{z}\sum_{n=0}^{+\infty}\frac{c_n}{z^n} & \longrightarrow & \hat{f}(\xi) = \sum_{n=0}^{+\infty}\frac{c_n}{n!}\xi^n.
 \end{eqnarray*}
\end{defn}
The formal Borel transform enjoys many useful properties, easy to prove by manipulation of formal series (see for example \cite{Sa14}, 
\S 4.3 and 5.1).
\begin{prop} 
 Let $\tilde{f}(z),\tilde{g}(z)\in z^{-1}~\mathbb{C}[[z^{-1}]]$ be two formal series and $\hat{f},\hat{g}\in\mathbb{C}[[\xi]]$ be their Borel 
 transforms. Then the following hold
 \begin{itemize}
  \item $\mathcal{B}(\tilde f.\tilde g) = \hat f\star \hat g$;
  \item $\mathcal{B}(\partial\tilde f) = -\zeta\hat f$;
  \item $\mathcal{B}(z^{-1}\tilde f) = \int\hat f$;
  \item if $\tilde{f}(z)\in z^{-2}~\mathbb{C}[[z^{-1}]]$, then $\mathcal{B}(z\tilde f)=\frac{d\hat f}{d\zeta}$;
 \end{itemize}
 where the derivatives and the integral are formal (i.e. defined term by terms) and $\star$ stands for the convolution product of formal 
 series. These properties stay true in the case where $\hat f, \hat g$ are convergent. In this case, the first property becomes
 \begin{equation*}
  \mathcal{B}(\tilde f.\tilde g)(\zeta) = (\hat f\star \hat g)(\zeta) = \int_0^\zeta\hat f(\eta) \hat g(\zeta-\eta)d\eta
 \end{equation*}
 for $\zeta$ in the intersection of the convergence domains of $\hat f$ and $\hat g$.
\end{prop}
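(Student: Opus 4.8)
The plan is to verify each identity by expanding the formal series into monomials and comparing coefficients; all four displayed properties reduce to elementary index shifts, and the statement about convergent Borel transforms follows from a single Euler--Beta integral computation together with the observation that termwise operations do not change the radius of convergence of a power series.

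First I would fix notation: write $\tilde f(z) = \sum_{n\geq 0} c_n z^{-n-1}$ and $\tilde g(z) = \sum_{n\geq 0} d_n z^{-n-1}$, so that $\hat f(\zeta) = \sum_{n\geq 0}\frac{c_n}{n!}\zeta^n$ and $\hat g(\zeta) = \sum_{n\geq 0}\frac{d_n}{n!}\zeta^n$ by the definition of $\mathcal{B}$. I would also recall that the convolution product $\star$ on $\C[[\zeta]]$ is the bilinear product fixed on monomials by $\zeta^n\star\zeta^m = \frac{n!\,m!}{(n+m+1)!}\,\zeta^{n+m+1}$, i.e.\ the unique product that makes $\mathcal{B}$ an algebra morphism.

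For the multiplicativity statement, collecting the coefficient of $z^{-k-1}$ in the Cauchy product gives $\tilde f\tilde g = \sum_{k\geq 1}\bigl(\sum_{n+m=k-1} c_n d_m\bigr)z^{-k-1}$, hence $\mathcal{B}(\tilde f\tilde g) = \sum_{k\geq 1}\frac1{k!}\bigl(\sum_{n+m=k-1}c_n d_m\bigr)\zeta^k$, whereas $\hat f\star\hat g = \sum_{n,m\geq 0}\frac{c_n d_m}{(n+m+1)!}\zeta^{n+m+1}$; the substitution $k=n+m+1$ identifies the two. For $\mathcal{B}(\partial\tilde f)=-\zeta\hat f$, write $\partial\tilde f = \sum_{n\geq 0}-(n+1)c_n z^{-n-2}$ and apply $\mathcal{B}$, obtaining $\sum_{n\geq 0}\frac{-(n+1)c_n}{(n+1)!}\zeta^{n+1} = -\zeta\sum_{n\geq 0}\frac{c_n}{n!}\zeta^n$. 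The identity $\mathcal{B}(z^{-1}\tilde f)=\int\hat f$ is the same shift read backwards: $z^{-1}\tilde f = \sum_{n\geq 0}c_n z^{-n-2}$ gives $\mathcal{B}(z^{-1}\tilde f) = \sum_{n\geq 0}\frac{c_n}{(n+1)!}\zeta^{n+1} = \int_0^\zeta\hat f$. Finally, the hypothesis $\tilde f\in z^{-2}\C[[z^{-1}]]$, i.e.\ $c_0=0$, is exactly what makes $z\tilde f = \sum_{n\geq 1}c_n z^{-n}$ lie in $z^{-1}\C[[z^{-1}]]$, and then $\mathcal{B}(z\tilde f) = \sum_{n\geq 1}\frac{c_n}{(n-1)!}\zeta^{n-1} = \frac{d\hat f}{d\zeta}$.

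For the convergent case, the operations $\hat f\mapsto -\zeta\hat f$, $\hat f\mapsto\int_0^\zeta\hat f$ and $\hat f\mapsto\frac{d\hat f}{d\zeta}$ all preserve the radius of convergence of a power series, so the last three identities hold verbatim as identities of holomorphic germs at the origin; for the first, on the intersection of the discs of convergence of $\hat f$ and $\hat g$ one may write $\hat f(\eta)\hat g(\zeta-\eta)$ as a Cauchy product and integrate term by term (legitimate by uniform convergence on compact subsets), using the Euler integral $\int_0^\zeta\eta^n(\zeta-\eta)^m\,d\eta = \frac{n!\,m!}{(n+m+1)!}\,\zeta^{n+m+1}$ (a Beta function, obtainable by an $(m+1)$-fold integration by parts), which recovers exactly the monomial expansion of $\hat f\star\hat g$ found above. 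I do not expect a genuine obstacle here: the only points deserving care are pinning down the $\frac{n!\,m!}{(n+m+1)!}$ normalisation of $\star$ and recording that the domain restriction in the last property is necessary, since $z\cdot z^{-1}=1\notin z^{-1}\C[[z^{-1}]]$.
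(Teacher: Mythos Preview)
Your proposal is correct and is precisely the ``manipulation of formal series'' the paper alludes to: the paper does not actually write out a proof of this proposition but simply states that the properties are ``easy to prove by manipulation of formal series'' and refers the reader to \cite{Sa14}, \S4.3 and 5.1. Your coefficient-by-coefficient verification and the Beta-integral identification of the formal convolution with the integral convolution are exactly the standard arguments found there.
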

We will in fact study the case where the Borel transform is convergent. There exists a simple necessary and sufficient 
condition of the convergence of the Borel transform, but we need one more definition.
\begin{defn}
 A formal series $\tilde f(z) = \frac{1}{z}\sum_{n=0}^{+\infty}\frac{a_n}{z^n}$ is {\bf 1-Gevrey} if
  \begin{equation*}
   \exists A,B>0:~|a_n|\leq AB^n n!~~\forall n\in\N.
  \end{equation*}
  In this case, we write $\tilde f(z)\in z^{-1}\C[[z^{-1}]]_1$.
\end{defn}
An easy but important result (see for example \cite{Sa14}, \S 4.2) is then
\begin{thm}
 Let $\tilde{f}(z)\in z^{-1}~\mathbb{C}[[z^{-1}]]$ be a formal series. Its Borel transform has a {strictly positive, possibly infinite} radius of convergence (in this case 
 we write $\hat f\in\C\{\zeta\}$) if and only if $\tilde f$ is 1-Gevrey.
\end{thm}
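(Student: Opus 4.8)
The statement to be proved is the classical equivalence: for $\tilde f(z) = \tfrac{1}{z}\sum_{n\geq 0} a_n z^{-n}$, the Borel transform $\hat f(\zeta) = \sum_{n\geq 0} \tfrac{a_n}{n!}\zeta^n$ has a strictly positive (possibly infinite) radius of convergence if and only if $\tilde f$ is $1$-Gevrey. The plan is to compute the radius of convergence of $\hat f$ via the Cauchy--Hadamard formula and translate the Gevrey bound into a statement about $\limsup |a_n/n!|^{1/n}$.

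First I would treat the ``if'' direction. Assume $|a_n|\leq AB^n n!$ for all $n$, with $A,B>0$. Then the coefficients of $\hat f$ satisfy $\bigl|\tfrac{a_n}{n!}\bigr|\leq A B^n$, so $\limsup_{n\to\infty}\bigl|\tfrac{a_n}{n!}\bigr|^{1/n}\leq B$, and hence by Cauchy--Hadamard the radius of convergence of $\hat f$ is at least $1/B>0$. (If $\hat f$ is a polynomial, or more precisely if the $\limsup$ is $0$, the radius is $+\infty$; this is the ``possibly infinite'' clause and costs nothing.)

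For the ``only if'' direction, suppose $\hat f\in\C\{\zeta\}$ with radius of convergence $\rho>0$. Pick any $r$ with $0<r<\rho$ (or any finite $r>0$ if $\rho=+\infty$). Since $\sum \tfrac{a_n}{n!}\zeta^n$ converges at $\zeta=r$, its general term is bounded: there is $A>0$ with $\bigl|\tfrac{a_n}{n!}\bigr| r^n\leq A$ for all $n$, i.e. $|a_n|\leq A\,(1/r)^n\,n!$. Setting $B=1/r$ gives exactly the $1$-Gevrey bound. This direction is essentially immediate once one recalls that convergence of a power series forces boundedness of the terms on any smaller disc.

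There is no real obstacle here; the only point requiring a little care is the bookkeeping of the ``possibly infinite radius'' case, which is handled by allowing $B$ to be arbitrarily small (equivalently $r$ arbitrarily large) in the argument above, and by interpreting $\limsup|a_n/n!|^{1/n}=0$ as radius $+\infty$ in Cauchy--Hadamard. One could alternatively phrase both directions uniformly by saying $\tilde f$ is $1$-Gevrey iff $\limsup_n |a_n|^{1/n}/n <\infty$ iff $\limsup_n |a_n/n!|^{1/n}<\infty$, using $n!^{1/n}\sim n/e$ (Stirling), but the elementary two-sided estimate above already suffices and avoids invoking Stirling.
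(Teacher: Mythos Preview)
Your argument is correct and is the standard proof of this classical equivalence. The paper does not actually give its own proof of this statement; it is stated as an ``easy but important result'' with a reference to \cite{Sa14}, \S4.2, so there is nothing to compare against beyond noting that your Cauchy--Hadamard computation is precisely the expected justification.
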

One can make other statements relating for example the Borel transform $\hat f$ and the convergence of its associated formal series 
$\tilde f$, however such considerations will play no role here. The importance of the Borel transform for us lies in particular in the 
existence of an inverse operation: the Laplace transform.
\begin{defn}
 Let $\theta\in[0,2\pi[$ and set $\Gamma_\theta:=\{Re^{i\theta},R\in[0,+\infty[\}$. Let $\hat f\in\C\{\zeta\}$ be a germ admitting an 
 analytic continuation in an open subset of $\C$ containing $\Gamma_\theta$ and such that 
 \begin{equation} \label{eq:exp_bound}
  \exists c\in\R,~K>0:|\hat f(\zeta)|\leq K e^{c|\zeta|}
 \end{equation}
 for any $\zeta$ in $\Gamma_\theta$. Then the {\bf Laplace transform } of $\hat f$ in the direction $\theta$ is defined as
 \begin{equation*}
  \mathcal{L}^{\theta}[\hat f](z) = \int_0^{e^{i\theta}\infty}\hat f(\zeta)e^{-\zeta z}\d\zeta.
 \end{equation*}
\end{defn}
The Laplace integral of this definition is finite for $z$ in an open subset of $\C$ to be specified later on. For the time being, let us say that
the composition of the Laplace and the Borel transforms is the so-called Borel-Laplace resummation method.
\begin{defn} \label{defn:Borel_Laplace}
 Let $\theta\in[0,2\pi[$, and $\tilde f(z)\in z^{-1}\C[[z^{-1}]]_1$ such that the Laplace transform of its Borel transform exists in the 
 direction $\theta$. Then $\tilde f(z)$ is said to be {\bf Borel summable} in the direction $\theta$. 
 
 The {\bf Borel-Laplace resummation operator} in the direction $\theta$ is defined on the functions Borel summable in the direction $\theta$ 
 as 
 \begin{equation*}
  S_{\theta} = \mathcal{L}^{\theta}\circ\mathcal{B}.
 \end{equation*}
 For a Borel summable formal series $\tilde f$, the function $z\mapsto S_\theta[f](z)$ is called the {\bf Borel sum} of $\tilde f$.
\end{defn}
Varying the direction $\theta$ of the resummation leads to interesting concepts and phenomena such as sectorial resummation and the 
Stokes phenomenon, however we will not {deal with} them here. 
\begin{rk}
 It is easy to see that a formal series $\tilde f(z)\in z^{-1}\C[[z^{-1}]]$ with a finite non-zero radius of convergence has a Borel transform admitting
 an exponential 
 bound \eqref{eq:exp_bound} for all $\theta\in[0,2\pi[$ and that its the Borel sum in any direction coincide with the usual summation of 
 series. Thus the Borel-Laplace resummation method is an extension of the usual summation of series.
\end{rk}
We claimed in Definition \ref{defn:Borel_Laplace} that the Borel sum of a Borel summable function is a function. This is a consequence of 
the following theorem, which is itself a consequence of classical results of the theory of the Laplace transformation.
\begin{thm}
 Let $\tilde f$ be a formal series, Borel summable in the direction $\theta$ with the exponential bound \eqref{eq:exp_bound}:
 \begin{equation*}
  \exists c\in\R,~{K>0}:|\hat f(\zeta)|\leq K e^{c|\zeta|}.
 \end{equation*}
 Then its Borel sum is analytic as a function of $z$ in the half-plane $\Re(ze^{i\theta})>c$.
\end{thm}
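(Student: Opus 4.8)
The plan is to parametrise the Laplace integral and recognise it as a classical Laplace-type transform whose domain of holomorphy is controlled by the exponential bound. Writing $\zeta = te^{i\theta}$ with $t\in[0,+\infty)$, so that $\d\zeta = e^{i\theta}\d t$, the Borel sum becomes
\begin{equation*}
 S_\theta[f](z) = \mathcal{L}^\theta[\hat f](z) = e^{i\theta}\int_0^{+\infty}\hat f(te^{i\theta})\,e^{-te^{i\theta}z}\,\d t .
\end{equation*}
By hypothesis $\hat f\in\C\{\zeta\}$ admits an analytic continuation to an open neighbourhood of $\Gamma_\theta$, so $t\mapsto\hat f(te^{i\theta})$ is continuous (indeed analytic) on $[0,+\infty)$, the bound \eqref{eq:exp_bound} holds along $\Gamma_\theta$, and for each fixed $t\geq 0$ the integrand is an entire function of $z$.

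First I would establish absolute convergence on the target half-plane. For $z$ with $\Re(ze^{i\theta})>c$ one has
\begin{equation*}
 \left|\hat f(te^{i\theta})\,e^{-te^{i\theta}z}\right| \leq K e^{ct}e^{-t\Re(ze^{i\theta})} = K e^{-t(\Re(ze^{i\theta})-c)},
\end{equation*}
which is integrable in $t$. This shows that $S_\theta[f](z)$ is well defined exactly on $\{\Re(ze^{i\theta})>c\}$, thereby also pinning down the domain left ``to be specified later on'' after Definition~\ref{defn:Borel_Laplace}.

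Then I would deduce analyticity by a Weierstrass-convergence argument applied to the truncated integrals $F_R(z):=e^{i\theta}\int_0^R\hat f(te^{i\theta})e^{-te^{i\theta}z}\,\d t$. Each $F_R$ is holomorphic on all of $\C$: this follows by differentiating under the integral sign over the compact interval $[0,R]$, or, more carefully, from Morera's theorem combined with Fubini (the integrand being jointly continuous in $(t,z)$ and, for each $t$, entire in $z$). Given a compact set $L\subset\{\Re(ze^{i\theta})>c\}$, pick $\delta>0$ with $\Re(ze^{i\theta})\geq c+\delta$ for all $z\in L$; the estimate above yields
\begin{equation*}
 |S_\theta[f](z)-F_R(z)| \leq K\int_R^{+\infty}e^{-t\delta}\,\d t = \frac{K}{\delta}\,e^{-R\delta},
\end{equation*}
which tends to $0$ as $R\to+\infty$, uniformly for $z\in L$. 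Hence $S_\theta[f]$ is a locally uniform limit of holomorphic functions on $\{\Re(ze^{i\theta})>c\}$, and is therefore itself holomorphic there.

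I do not expect any serious obstacle: the argument is entirely classical. The only points needing a little care are the continuity of $t\mapsto\hat f(te^{i\theta})$ near the origin (guaranteed since $\hat f$ is analytic at $0$ and the continuation hypothesis extends it along the ray $\Gamma_\theta$) and the justification that the $F_R$ are holomorphic, for which Morera plus Fubini is the clean route. Alternatively one can skip the truncations and apply Morera's theorem directly to $S_\theta[f]$, the hypotheses for interchanging the contour integral of Morera with the Laplace integral being supplied by the absolute-convergence estimate above.
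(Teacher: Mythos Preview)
Your proof is correct and is exactly the standard argument one finds in textbooks on the Laplace transform. The paper does not actually give its own proof of this statement: it is presented as a theorem ``which is itself a consequence of classical results of the theory of the Laplace transformation'', with no further details. Your parametrisation of the ray, the exponential majorant, and the Weierstrass/Morera argument for holomorphy of the limit of the truncated integrals constitute precisely the classical proof the paper is alluding to; there is nothing to compare and no gap to point out.
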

We have seen that one can perform the Borel-Laplace resummation method in non-singular directions of the Borel transform only. However, in 
many problem of interest (in particular, of interest to physicists), the Borel transform will have singularities in the direction where we wish to perform 
the resummation. \'Ecalle defines objects where the poles have a specified location (resurgent functions) and objects allowing to 
compute these singularities ({alien} derivatives). The introduction of these concepts is the subject of the remaining part of this section.

\subsection{Resurgent functions} \label{subsec:res_fct}

In the rest of this text, we take $\Omega$ a non-empty, discrete and closed subset of $\C$. We recall that a function (or a germ) $f$ holomorphic
in a disc $D$, around the origin is continuable along a path $\gamma$ in $\C$ starting within the disc of convergence of the function if there is a finite 
family $(D_i)_{i\in \{1,\cdots,n\}}$ of convex open subset of $\C$ covering $\gamma$ such that $f$ is analytically continuable to $D\cup D_1\cup\cdots\cup D_n$.
\begin{rk}
 Being continuable along a path is much less strict than being continuable. In particular, a function continuable along a family of paths can be seen 
 as a function over an open subset of a cover of $\C$ rather than a function of $\C$. 
\end{rk}
\begin{defn} \label{defn:resum_function}
 A germ $\phi\in\C\{\zeta\}$ is said to be an {\bf $\Omega$-resurgent function} if it is continuable along 
  any rectifiable (i.e. of finite length) path in $\C\setminus\Omega$. We set
  \begin{equation*}
   \widehat{\mathcal{R}}_\Omega := \{\text{all $\Omega$-continuable germs}\} \subset\C\{\zeta\}.
  \end{equation*}
\end{defn}
Now, the convolution product of two $\Omega$-resurgent function is well-defined inside the intersection of their convergence discs. A difficult question 
is whether or not this convolution product defines an $\Omega$-resurgent function. 
The following theorem is a cornerstone of resurgence theory, as it states when this is indeed the case 
and thus {when} resurgent functions are stable under an extension of the convolution 
product {and are therefore suited} to the study of non-linear differential equations.
\begin{thm}({Ecalle}, Sauzin \cite{Sa14}[Theorem 21.1]) \label{thm:stability_resu_fct}\\
 Let $\Omega\subset\C$ be non-empty, discrete and closed. Then $\widehat{\mathcal{R}}_\Omega$ is stable under the convolution product if, and only if, 
 $\Omega$ is closed under addition.
\end{thm}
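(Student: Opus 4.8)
The plan is to prove both directions of the equivalence, isolating the arithmetic condition on $\Omega$ from the analytic continuation argument.

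\medskip

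\noindent\textbf{Necessity.} First I would show that if $\widehat{\mathcal{R}}_\Omega$ is stable under convolution, then $\Omega$ is closed under addition. The idea is to exhibit, for given $\omega_1,\omega_2\in\Omega$, a pair of $\Omega$-resurgent germs whose convolution product has a singularity precisely at $\omega_1+\omega_2$, so that if $\omega_1+\omega_2\notin\Omega$ the convolution fails to be $\Omega$-resurgent. Natural candidates are the germs $\phi_i(\zeta) = \frac{1}{2\pi i}\log\!\left(1-\zeta/\omega_i\right)$ (or $(\omega_i-\zeta)^{-1}$, or any germ singular only at $\omega_i$ and continuable along all rectifiable paths in $\C\setminus\{\omega_i\}\supset\C\setminus\Omega$). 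One then computes, via the integral representation $(\phi_1\star\phi_2)(\zeta)=\int_0^\zeta\phi_1(\eta)\phi_2(\zeta-\eta)\,d\eta$ and analytic continuation of the endpoint $\zeta$, that the continued convolution develops a singularity at $\omega_1+\omega_2$ coming from the pinching of the integration path between the singularity $\eta=\omega_1$ of $\phi_1$ and the singularity $\eta=\zeta-\omega_2$ of $\phi_2$. Since $\phi_1,\phi_2\in\widehat{\mathcal{R}}_\Omega$ but $\phi_1\star\phi_2$ has a singularity outside $\Omega$, stability forces $\omega_1+\omega_2\in\Omega$.

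\medskip

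\noindent\textbf{Sufficiency.} The substantial direction is: if $\Omega$ is closed under addition, then $\phi,\psi\in\widehat{\mathcal{R}}_\Omega$ implies $\phi\star\psi\in\widehat{\mathcal{R}}_\Omega$. Here I would follow the symmetric-contour / path-deformation technique. Given a rectifiable path $\gamma$ in $\C\setminus\Omega$ from near $0$ to an endpoint $\zeta_\ast$, one must analytically continue $(\phi\star\psi)(\zeta)=\int_0^\zeta\phi(\eta)\psi(\zeta-\eta)\,d\eta$ along $\gamma$. The key device is to replace the straight segment $[0,\zeta]$ by a homotopic integration contour $\Gamma_\zeta$ that moves together with $\zeta$ and avoids, simultaneously, the points of $\Omega$ (the singularities of $\phi$ in the $\eta$-variable) and the points $\zeta-\Omega$ (the singularities of $\psi(\zeta-\cdot)$). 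Because $\Omega$ is closed under addition, whenever $\zeta$ stays in $\C\setminus\Omega$ one never has a point $\omega\in\Omega$ with $\zeta-\omega\in\Omega$ forcing an unavoidable pinch, so such a family of contours exists and depends continuously on $\zeta$ along $\gamma$; the integral over $\Gamma_\zeta$ then provides the analytic continuation, and a compactness/finite-subcover argument over $\gamma$ (using that $\gamma$ has finite length and $\Omega$ is discrete and closed) upgrades this to continuability in the sense of Definition \ref{defn:resum_function}. I would quote the detailed contour construction from Sauzin's treatment rather than reproduce it.

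\medskip

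\noindent\textbf{Main obstacle.} The hard part is the sufficiency direction, and specifically the book-keeping needed to show the deformed contour $\Gamma_\zeta$ can always be chosen avoiding both obstacle sets $\Omega$ and $\zeta-\Omega$ as $\zeta$ traverses $\gamma$ — i.e. that no genuine pinch singularity arises as long as $\zeta\notin\Omega$. This is exactly where closure of $\Omega$ under addition is used: a pinch between $\eta=\omega_1\in\Omega$ and $\eta=\zeta-\omega_2$ with $\omega_2\in\Omega$ would produce a singularity of $\phi\star\psi$ at $\zeta=\omega_1+\omega_2\in\Omega$, which is allowed, but no pinch can occur at $\zeta\notin\Omega$. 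Making the homotopy of contours precise near the singularities, and controlling it uniformly along the compact path $\gamma$, is the technical heart; everything else (the integral representation, the local analyticity of the integrand, the finite-cover wrap-up) is routine.
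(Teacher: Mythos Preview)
Your proposal is correct and matches the paper's own treatment: the paper does not prove this theorem in full but only illustrates necessity via Example~\ref{ex:comvo_resu} (the convolution of $\hat f_i(\zeta)=(\zeta-\omega_i)^{-1}$, one of the very candidates you mention), remarks that sufficiency is ``the hard part'', and defers to the cited reference~\cite{Sa14} for it---precisely as you do. Your sketch of the sufficiency direction (the symmetric-contour deformation avoiding $\Omega$ and $\zeta-\Omega$, with the pinch analysis showing why additive closure is the right hypothesis) goes a bit further than the paper, but is consistent with Sauzin's argument you ultimately cite.
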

The {classical} example below is already enough to show that $\Omega$ being closed under addition is a necessary condition. The hard part of the Theorem is thus to 
show that it is sufficient.
\begin{ex} \label{ex:comvo_resu}
Take $\omega_1,\omega_2\in\Omega$ {non zero and not proportional} and two meromorphic (and therefore resurgent) functions defined by
 $\hat{f}_1(\zeta) = \frac{1}{\zeta-\omega_1}$, $\hat{f}_2(\zeta) = \frac{1}{\zeta-\omega_2}$.
 Then a direct computation gives
 \begin{align*}
  (\hat{f}_1\star\hat{f}_2)(\zeta):= & \int_0^\zeta \hat{f}_1(\eta)\hat{f}_2(\zeta-\eta)d\eta \\
                                  = & \frac{1}{\zeta-\omega_1-\omega_2}\left[\int_0^\zeta\frac{d\eta}{\eta-\omega_1} + \int_0^\zeta\frac{d\eta}{\eta-\omega_2}\right] \\
 \end{align*}
 {Now, take $R$ the Riemann surface obtained from removing the lines $\omega_i[1,+\infty[$ from $\C$.}
 One can {then} check that the R.H.S. has indeed a pole in $\omega_1+\omega_2$ {on some sheets of the Riemann surface $R$}. Therefore if $\omega_1+\omega_2$ is not an element of $\Omega$, $\hat{f}_1\star\hat{f}_2$
 is not $\Omega$-resurgent. {For more details, see for example \cite[Section 20]{Sa14}.}
\end{ex}
 For $\Omega\subset\C$ non-empty, discrete and closed we set $\rho(\Omega):=\min\{|\omega|:\omega\in\Omega^*\}$, with $\Omega^*=\Omega$ if 
 $0\notin\Omega$ (this will be the case we will work with) and $\Omega^*=\Omega\setminus\{0\}$ otherwise.

Finally, we will use here bounds on convolution products of resurgent functions. First, recall that for any open set $U\subset\C$ containing the origin and 
star-shaped with respect to the origin, the following bound holds by direct computation:
\begin{equation} \label{eq:bound_star_shaped}
 |(\hat\phi_1\star\cdots\star\hat\phi_n)(\zeta)|\leq \frac{|\zeta|^{n-1}}{(n-1)!}\max_{[0,\zeta]}|\hat\phi_1|\cdots\max_{[0,\zeta]}|\hat\phi_n|
\end{equation}
for any $\hat\phi_1,\cdots,\hat\phi_n$ holomorphic on $U$ and $\zeta$ in $U$. We used $[0,\zeta]$ to denote the straight line between $0$ and $\zeta$.

This bound will be useful to show that the two-point function has the right type of bound at infinity on {$\mathcal{U}_\Omega$, the connected star-shaped domain in $\C$ obtained from placing a radial cut starting from the first singularities of $\Omega$,} and converges near the 
origin. However, it will not allow {us} to prove that it is resurgent. For this we will need to prove the normal convergence of a series of analytic 
continuations of functions along any paths avoiding $\Omega$. It will require the refined results of \cite{Sa12}, specific to resurgence 
theory. In order to state {the main} result, we need to introduce some notations, the same as in \cite{Sa12}.

First, let $\mathscr{S}_\Omega$ be the set of homotopy classes with fixed endpoints of path $\gamma:[0,l]\longrightarrow\C\setminus\Omega^*$ 
such that $\gamma(0)=0$. Then, for $\delta,L\geq0$ we set 
\begin{equation*}
 \calK_{\delta,L}(\Omega) := \{\zeta\in\scrS_\Omega|\exists\gamma\in\scrS_\Omega:\gamma(l)=\zeta,~\gamma\text{ of length }\leq L, \text{ dist}(\gamma(t),\Omega^*)\geq\delta~\forall t\in[0,l]\}.
\end{equation*}
It was shown in \cite{KaSa16} that $\mathscr{S}_\Omega$ has the structure of a Riemann surface, which is a cover of $\C\setminus\Omega$. 
Then $\calK_{\delta,L}(\Omega)$ can be described as the set of point of this cover which can be reached by paths of length less than 
$L$\footnote{in order to avoid confusion between the kinematic parameter of the two-points function and the length of the path we will 
denote the former by the letter $\Lambda$.} and staying at a distance at least $\delta$ of $\Omega^*$. One 
can in particular see the set of $\Omega$-resurgent functions as the set of locally integrable maps $f:\mathscr{S}_\Omega\longrightarrow\C$. This 
observation will become important to define the notion of average.
\begin{thm} \label{thm:bound_resu_Sauzin} \cite[Theorem 1]{Sa12}\\
 Let $\Omega\subset\C$ be discrete, closed and stable 
 under addition. Let $\delta,L>0$ with 
 $\delta<\rho(\Omega)$. 
 Set
 \begin{equation*}
  C:=\rho(\Omega)\exp\left(3+\frac{6L)}{\delta}\right), \qquad \delta':=\frac{1}{2}\rho(\Omega)\exp\left(-2-\frac{4L}{\delta}\right), \qquad L':=L+\frac{\delta}{2};
 \end{equation*}
  Then, for any any $n\geq1$ and $\hat\phi_1,\cdots,\hat\phi_n\in\widehat{\mathcal{R}}_\Omega$
 \begin{equation} \label{eq:bound_conv_resu}
  \max_{\calK_{\delta,L}(\Omega)}|\hat\phi_1\star\cdots\star\hat\phi_n| \leq \frac{2}{\delta}\frac{C^n}{n!} \max_{\calK_{\delta',L'}(\Omega)}|\hat\phi_1|\cdots\max_{\calK_{\delta',L'}(\Omega)}|\hat\phi_n|.
 \end{equation}
\end{thm}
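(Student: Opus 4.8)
This statement is Theorem~1 of \cite{Sa12}, so the plan is to follow Sauzin's argument; the non-ramified prototype of the bound is exactly \eqref{eq:bound_star_shaped}, and the substance of the theorem is to carry that simplex estimate over to the Riemann surface $\scrS_\Omega$ with all the constants made explicit, knowing already (Theorem~\ref{thm:stability_resu_fct}, since $\Omega$ is stable under addition) that the convolution does extend to $\resOm$. First I would fix $\zeta\in\calK_{\delta,L}(\Omega)$ and, by definition of that set, choose a representative path $\gamma\colon[0,1]\to\C$ with $\gamma(0)=0$, $\gamma(1)=\zeta$, of length $\ell\leq L$, and with $\mathrm{dist}(\gamma(t),\Omega^*)\geq\delta$ for all $t$; after reparametrising by a constant multiple of arclength I may assume $|\gamma'|$ is constant equal to $\ell$.

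Next I would unfold the convolution along $\gamma$ as an iterated integral over the standard simplex,
\[
 (\hat\phi_1\star\cdots\star\hat\phi_n)(\zeta)=\int_{0\leq t_1\leq\cdots\leq t_{n-1}\leq1}\hat\phi_1\big(\gamma(t_1)\big)\,\hat\phi_2\big(\gamma(t_2)-\gamma(t_1)\big)\cdots\hat\phi_n\big(\gamma(1)-\gamma(t_{n-1})\big)\,\gamma'(t_1)\cdots\gamma'(t_{n-1})\,\d t_1\cdots\d t_{n-1},
\]
where each $\hat\phi_i$ is read as its analytic continuation along an appropriate sub-path of $\gamma$ obtained by translation (and, for the last factor, by reflection). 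The Jacobian contributes $\int_{\mathrm{simplex}}\prod_j|\gamma'(t_j)|\,\d t_j=\ell^{n-1}/(n-1)!\leq L^{n-1}/(n-1)!$, which is the ramified analogue of the factor $|\zeta|^{n-1}/(n-1)!$ in \eqref{eq:bound_star_shaped}.

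The heart of the proof is the geometric lemma bounding the arguments of the $\hat\phi_i$: one shows that each intermediate point $\gamma(t_i)-\gamma(t_{i-1})$, together with the sub-path used to continue $\hat\phi_i$ to it, determines a point of $\calK_{\delta',L'}(\Omega)$, so that $|\hat\phi_i|$ at that point is $\leq\max_{\calK_{\delta',L'}(\Omega)}|\hat\phi_i|$. The naive translated path $s\mapsto\gamma(s)-\gamma(t_{i-1})$ need not avoid $\Omega^*$ (this would require $\gamma(s)\notin\Omega^*+\gamma(t_{i-1})$, which can fail), and this is precisely where stability of $\Omega$ under addition enters, to guarantee that the relevant families of translated obstructions stay inside one controlled structure and that the deformed paths land back in $\scrS_\Omega$. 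One then deforms the path around each obstruction it meets by a small circular detour of radius comparable to $\delta$; because $\gamma$ has length $\leq L$ and stays $\delta$-away from $\Omega^*$, it can meet only of order $L/\delta$ such obstructions, each costing a bounded additive length and a bounded multiplicative loss in the distance to $\Omega^*$. Iterating these $\sim L/\delta$ steps and optimising the per-detour constants is what produces the closed-form values $C=\rho(\Omega)\exp(3+6L/\delta)$, $\delta'$ and $L'$; a separate, more careful treatment of the innermost integration near the endpoints produces the prefactor $2/\delta$. Combining this with the simplex bound and rewriting $\ell^{n-1}/(n-1)!=(n/\ell)\,(\ell^{n}/n!)$, absorbing the factor $n$, $\ell\leq L$, and the per-step corrections into $C^n$ (possible since $C$ already carries the large exponential $\exp(6L/\delta)$ and $C\geq\rho(\Omega)$) and the endpoint factor into $2/\delta$, yields \eqref{eq:bound_conv_resu}.

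The main obstacle is exactly the geometric lemma of the previous paragraph: exhibiting a \emph{uniform} and explicit path-deformation scheme that simultaneously (i) keeps every intermediate convolution argument on $\scrS_\Omega$, (ii) keeps its length below $L'$ and its distance to $\Omega^*$ above $\delta'$ independently of $n$ and of the chosen $\zeta\in\calK_{\delta,L}(\Omega)$, and (iii) tracks the multiplicative cost of each detour sharply enough to land on the stated constants. Everything else — the simplex/Jacobian estimate and the final bookkeeping — is routine by comparison, so I would organise the write-up around proving that lemma first and only then assembling the estimate.
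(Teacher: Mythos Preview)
The paper does not give its own proof of this statement: it is quoted verbatim as \cite[Theorem~1]{Sa12} and used as a black box, so there is no argument in the paper to compare against. Your sketch is therefore not a comparison but an attempted reconstruction of Sauzin's original proof.

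As such a reconstruction, your outline has the right ingredients and correctly isolates the main difficulty, but the order of operations is subtly off relative to Sauzin's actual argument. You write the iterated simplex formula for the continued convolution along an \emph{arbitrary} representative path $\gamma$ and then propose to deform the translated sub-paths for each $\hat\phi_i$ so that they land in $\calK_{\delta',L'}(\Omega)$. In Sauzin's proof the logic runs the other way: the iterated integral representation is \emph{not} valid for a general $\gamma$, and establishing it is itself the core of the work. One first proves the geometric lemma --- that every $\zeta\in\calK_{\delta,L}(\Omega)$ can be reached by a so-called symmetrically contractile path in $\calK_{\delta',L'}(\Omega)$, constructed via an explicit $\Omega$-homotopy with the detour bookkeeping you describe --- and it is precisely for symmetrically contractile paths that the formula $(\hat\phi_1\star\cdots\star\hat\phi_n)(\zeta)=\int_{\text{simplex}}\cdots$ holds as an identity for the analytic continuation of the convolution. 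Only then does the simplex/Jacobian estimate apply directly. If you tried to execute your plan literally, you would find that ``unfolding the convolution along $\gamma$'' is not a step you can take for free; the deformation has to happen at the level of $\gamma$ itself, not a posteriori on the arguments of the integrand. Your paragraph on the ``main obstacle'' essentially describes the right construction, but it should be positioned as producing the path along which the formula is then written, rather than as a repair applied after the formula.
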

\begin{rk}
 In subsequent work \cite{KaSa16}, Sauzin and Kamimoto have generalised this result to the cases where $\Omega$ is not stable under addition. One could 
 in principle use the result of \cite{KaSa16} to prove resurgence of the two-point functions on $\Z^*/3$ rather than $\N^*/3$, 
 However this is not needed for the Ecalle-Borel resummation procedure along the positive real axis, and we will satisfy ourselves with {exploiting} the above 
 bound, which is simpler to use.
\end{rk}
Theorem \ref{thm:bound_resu_Sauzin} implies that the convolution product is bicontinuous for the natural topology induced by the family of semi-norms
\begin{equation*}
 ||\hat\phi||_{\delta,L} := \max_{\zeta\in\calK_{\delta,L}(\Omega)}|\hat\phi(\zeta)|.
\end{equation*}
More precisely we have
\begin{coro} \cite[Theorem 2, Remark 3.2]{Sa12} \\
 $(\widehat{\mathcal{R}}_\Omega,\star)$ is a Fr\'echet algebra.
\end{coro}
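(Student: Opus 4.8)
The plan is to verify, in turn, the three requirements in the definition of a Fréchet algebra: that $\widehat{\mathcal{R}}_\Omega$ equipped with the topology induced by the seminorms $\|\cdot\|_{\delta,L}$ is a Fréchet space; that $\star$ maps $\widehat{\mathcal{R}}_\Omega\times\widehat{\mathcal{R}}_\Omega$ into $\widehat{\mathcal{R}}_\Omega$ and is an associative (here even commutative) bilinear law; and that $\star$ is jointly continuous. Throughout, $\Omega$ is taken discrete, closed and stable under addition, as in Theorems \ref{thm:stability_resu_fct} and \ref{thm:bound_resu_Sauzin}.

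First I would settle the Fréchet space structure. That $\widehat{\mathcal{R}}_\Omega$ is a $\C$-vector space is immediate, since the analytic continuation along a given path of a linear combination of germs is the same linear combination of the continuations. The topology is metrizable: since $\mathcal{K}_{\delta,L}(\Omega)\subseteq\mathcal{K}_{\delta',L'}(\Omega)$ as soon as $\delta\geq\delta'$ and $L\leq L'$, the family $\{\|\cdot\|_{\delta,L}\}_{\delta,L>0}$ is directed and admits the countable cofinal subfamily $\{\|\cdot\|_{1/k,k}\}_{k\geq k_0}$, with $1/k_0<\rho(\Omega)$. It is Hausdorff because, for $\delta<\rho(\Omega)$ and any $L>0$, $\mathcal{K}_{\delta,L}(\Omega)$ contains a genuine disc centred at the origin (the straight segments issued from $0$ stay far from $\Omega^*$), so a germ on which every seminorm vanishes vanishes near $0$, hence identically. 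For completeness I would use the description of $\widehat{\mathcal{R}}_\Omega$ as holomorphic functions on the Riemann surface $\mathscr{S}_\Omega$ from \cite{KaSa16}: a sequence Cauchy for all $\|\cdot\|_{\delta,L}$ converges uniformly on each $\mathcal{K}_{\delta,L}(\Omega)$, the interiors of these sets exhaust $\mathscr{S}_\Omega$, so Weierstrass' theorem yields a holomorphic limit on $\mathscr{S}_\Omega$ whose germ at $0$ lies in $\C\{\zeta\}$, i.e. an element of $\widehat{\mathcal{R}}_\Omega$, towards which the sequence converges in the topology.

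Next I would handle the algebra structure. Commutativity of $\star$ comes from the substitution $\eta\mapsto\zeta-\eta$ in $\int_0^\zeta\hat\phi_1(\eta)\hat\phi_2(\zeta-\eta)\,d\eta$, and associativity from Fubini's theorem on the iterated convolution integral over the common disc of convergence, both identities propagating by analytic continuation; that $\hat\phi_1\star\hat\phi_2$ again belongs to $\widehat{\mathcal{R}}_\Omega$ is exactly Theorem \ref{thm:stability_resu_fct}. Joint continuity is then a direct reading of \eqref{eq:bound_conv_resu} with $n=2$: for every $\delta,L>0$ with $\delta<\rho(\Omega)$,
\begin{equation*}
 \|\hat\phi_1\star\hat\phi_2\|_{\delta,L}\leq\frac{C^2}{\delta}\,\|\hat\phi_1\|_{\delta',L'}\,\|\hat\phi_2\|_{\delta',L'},
\end{equation*}
with $C,\delta',L'$ depending only on $\delta,L$ and $\Omega$; picking $k$ with $1/k\leq\delta'$ and $k\geq L'$ dominates the right-hand side by $\frac{C^2}{\delta}\|\hat\phi_1\|_{1/k,k}\|\hat\phi_2\|_{1/k,k}$, and such a bound on each basic seminorm of the product by a single basic seminorm of each factor is precisely the criterion for continuity of a bilinear map between Fréchet spaces.

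I expect the only genuinely non-routine point to be completeness, and within it the assertion that uniform limits over the $\mathcal{K}_{\delta,L}(\Omega)$ glue into a single-valued holomorphic function on $\mathscr{S}_\Omega$, which the interiors of these sets exhaust; this rests on the Riemann-surface picture of $\widehat{\mathcal{R}}_\Omega$ established in \cite{KaSa16}. Everything else is either standard Fréchet-space bookkeeping or an immediate consequence of the convolution estimate \eqref{eq:bound_conv_resu}, so that this corollary is in essence a restatement of Theorem \ref{thm:bound_resu_Sauzin}.
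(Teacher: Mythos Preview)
Your proposal is correct and matches the paper's approach: the paper does not actually prove this corollary but simply cites \cite[Theorem~2, Remark~3.2]{Sa12} and observes that Theorem~\ref{thm:bound_resu_Sauzin} gives bicontinuity of the convolution for the seminorms $\|\cdot\|_{\delta,L}$, which is exactly the non-trivial ingredient you isolate. Your write-up supplies the standard Fréchet-space verifications (countable cofinal family, Hausdorff, completeness via Weierstrass on the exhaustion of $\mathscr{S}_\Omega$) that the paper leaves to the cited reference, and your use of \eqref{eq:bound_conv_resu} with $n=2$ for joint continuity is precisely what the paper has in mind.
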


\subsection{Borel-\'Ecalle resummation method}

In practice we do not need to consider path going backward to perform a Borel-\'Ecalle resummation. To simplify the statements we take from now on $\Omega$ to be 
a subset of $\R_+^*$.
\begin{defn}
 Let $\C\dsetminus\Omega$ be the {\bf$\Omega$-ramified plane}, namely the space of homotopy classes $[\gamma]$ of rectifiable paths 
 $\gamma:[0,1]\mapsto\C\setminus\Omega$ such that $\forall t,t'\in[0,1],~t<t'\Rightarrow \Re(\gamma(t))< \Re(\gamma(t'))$.
\end{defn}
One can show that $\C\dsetminus\Omega$ has the structure of a Riemann surface, see \cite{KaSa16}. $\C\dsetminus\Omega$ is a cover of 
$\C\setminus\Omega$. We call $\pi:\C\dsetminus\Omega\longrightarrow\C\setminus\Omega$ the 
canonical local biholomorphism associated to this Riemann surface. We refer the reader to \cite[Section 3]{KaSa16} for a precise definition of this 
geometric object. We omit these definitions as they will play only a minor role in the present work.

Let $\zeta\in\C\setminus\Omega$ and $\underline\zeta\in\C\dsetminus\Omega$ such that $\pi(\underline{\zeta})=\zeta$. 
If $\Omega=\{\omega_1,\omega_2,\cdots\}\subset\R_+^*$ with $\omega_0:=0<\omega_1<\omega_2<\cdots$, we write $\zeta^{\epsilon_1,\cdots,\epsilon_n}$ 
instead of $\underline\zeta$, with 
$\epsilon_i\in\{+,-\}$, $(\epsilon_1,\cdots,\epsilon_n)$ the signature of the branch of $\C\dsetminus\Omega$ on which 
$\zeta^{\epsilon_1,\cdots,\epsilon_n}$ stands and 
$|\pi(\zeta^{\epsilon_1,\cdots,\epsilon_n})|\in]\omega_n,\omega_{n+1}[$. 

{We further write $\mathcal{U}_\Omega$ the connected star-shaped domain in $\C$ obtained from placing a radial cut starting from $\omega_1$, the first singularity of $\Omega$. We identify $\mathcal{U}_\Omega$ with a subset of $\C\dsetminus\Omega$: for $\zeta\in\mathcal{U}_\Omega$, if $\Re(\zeta)\leq\omega_1$, we identify it with the homotopy class of the straight path from the origin to $\zeta$. Otherwise we identify it with $\zeta^{+,\cdots,+}$ if $\Im(\zeta)>0$ and with $\zeta\leftrightarrow\zeta^{-,\cdots,}$ if $\Im(\zeta)<0$. Notice that in the more common case where $\Omega\subset\R^*$, $\mathcal{U}_\Omega$ is obtained from placing two radial cuts at the first singularities of $\Omega$.
}

From now one we will make the simplifying assumption that $\Omega=\omega\N^*$ 
for some $\omega\in\R_+^*$. 

While performing a Borel-\'Ecalle resummation, it will be useful to see $\Omega$-resurgent functions as locally integrable functions from 
$\C\dsetminus\Omega$ to $\C$.
We also denotes by 
$\widehat{\mathcal{U}}_\Omega$ the set of uniform functions on $\C\dsetminus\Omega$; i.e. the set of functions $\hat\phi$ whose value at $\zeta$ do not depend on 
the branch of $\C\dsetminus\Omega$ $\zeta$ sits on:
\begin{equation*}
 \forall (\zeta,\zeta')\in(\C\dsetminus\Omega)^2,~\pi(\zeta)=\pi(\zeta')\Longrightarrow\hat\phi(\zeta)=\hat\phi(\zeta').
\end{equation*}
\begin{defn}
 An {\bf average} is a linear map ${\bf m}:\resOm\longrightarrow\uniOm$ defined by its weights $\{{\bf m}^{\varepsilon_1,\cdots,\varepsilon_n}\}$ and its 
 action on resurgent functions: for any such $\phi$ and any $\zeta\in\C\setminus\Omega$ with $|\zeta|\in[n\omega,(n+1)\omega[$
  \begin{equation*}
   ({\bf m}\phi)(\zeta) = \sum_{\varepsilon_1,\cdots,\varepsilon_n=\pm}{\bf m}^{\varepsilon_1,\cdots,\varepsilon_n}\phi(\zeta^{\varepsilon_1,\cdots,\varepsilon_n});
  \end{equation*}
  with the coherence relations ${\bf m}^\emptyset=1$ and 
  \begin{align*}
   {\bf m}^{\varepsilon_1,\cdots,\varepsilon_{n-1},+} & + {\bf m}^{\varepsilon_1,\cdots,\varepsilon_{n-1},-} = {\bf m}^{\varepsilon_1,\cdots,\varepsilon_{n-1}}\\
   {\sum_{\varepsilon_i=\pm}{\bf m}^{\varepsilon_1,\cdots,\varepsilon_i,\cdots,\varepsilon_{n}}} & {= {\bf m}^{\varepsilon_1,\cdots,\varepsilon_{i-1},\varepsilon_{i+1},\cdots,\varepsilon_{n}} \quad\forall i\in\{1,\cdots,n-1\}}
  \end{align*}
  {with ${\bf m}^{\varepsilon_1,\cdots,\varepsilon_{i-1},\varepsilon_{i+1},\cdots,\varepsilon_{n}}$ in the last condition an average over $\Omega\setminus\{\omega_i\}$ if $\Omega=\{\omega_j|j\in\N^*\}$ with the convention $0<\omega_1<\cdots<\omega_i<\cdots$ as before.}
\end{defn}
It is a simple exercise to check that the following are examples of averages.
\begin{ex} \label{ex:averages}
  \begin{itemize}
  \item Left lateral average:
  \begin{align*}
   {\bf mul}^{\varepsilon_1\cdots\varepsilon_n} = \begin{cases}
                                                   & 1 \quad \text{ if }\varepsilon_1=\cdots=\varepsilon_n=+ \\
                                                   & 0 \quad \text{ otherwise.}
                                                  \end{cases}
  \end{align*}
  \item Median average:
  \begin{equation*}
   {\bf mun}^{\varepsilon_1\cdots\varepsilon_n} = \frac{(2p)!(2q)!}{4^{p+q}(p+q)!p!q!}
  \end{equation*}
  with $p$ (resp. $q$) the number of $+$ (resp. of $-$) in $\{\varepsilon_1,\cdots,\varepsilon_n\}$.
  \item  Catalan average: Let 
  $Ca_n$ be the $n$-th Catalan number, $Qa_n(x)$ the $n$-th Catalan polynomial, $\alpha,\beta\in\R$, $\alpha+\beta=1$. 
  
  Write $\pmb{\varepsilon}=\varepsilon_1\cdots\varepsilon_n =(\pm)^{n_1}(\mp)^{n_2}\cdots(\varepsilon_s)^{n_s}$, set 
  \begin{equation*}
   {\bf man}_{(\alpha,\beta)}^{\pmb \varepsilon} = (\alpha\beta)^nCa_{n_1}\cdots Ca_{n_{s-1}}Qa_{n_s}((\alpha/\beta)^{\varepsilon_n}).
  \end{equation*}
 \end{itemize}
\end{ex}
The notion of average is too weak to be used as such. Indeed, we want the averaged function {\bf m}$\phi$ to 
\begin{itemize}
 \item Solve the same equation as $\phi$;
 \item Be a real function\footnote{$f:U\subset\C\longrightarrow\C$ is real if $f(\bar z) =\overline{f(z)}$ whenever both sides of the equation make sense. 
 We require this condition since we want the resummed function to represent a physical quantity.}
 \item To admit a Laplace transform provided that $\phi$ had a reasonable behavior at infinity.
\end{itemize}
These requirements are formalized by the notion of well-behaved average.
\begin{defn} \label{defn:well_behaved_averages}
 An average {\bf m} is called {\bf well-behaved} if
  \begin{itemize}
   \item {\bf (P1)} It preserves the convolution ${\bf m}(\phi\star\psi) = ({\bf m}\phi)\star({\bf m}\psi)$.
   \item {\bf (P2)} It preserves reality: ${\bf m}^{\varepsilon_1\cdots\varepsilon_n} = \overline{\bf m}^{\bar\varepsilon_1\cdots\bar\varepsilon_n}$, 
   with $\bar\pm=\mp$.
   \item {\bf (P3)} It preserves exponential growths: 
   $\forall\phi\in\resOm,\zeta\in\C\setminus\Omega$
   \begin{equation*}
   |\phi(\zeta^{\pm\cdots\pm})|\leq Ke^{c|\zeta|} ~ \Longrightarrow ~ |({\bf m}\phi)(\zeta)|\leq  Ke^{c|\zeta|}
   \end{equation*}
  \end{itemize}
\end{defn}
{\begin{rk}
     The $\zeta^{\pm\cdots\pm}$ appearing in condition {\bf (P3)} can be seen as an element of $\mathcal{U}_\Omega$. So condition {\bf (P3)} can be formulated as: the average preserves exponential growth along $\mathcal{U}_\Omega$.
    \end{rk}
}
\begin{rk}
 In general the equation one is studying with the Borel-\'Ecalle resummation method is a differential equation. However, averages naturally preserve the 
 differential structure: since $\mathcal{B}(\partial_z\tilde f)(\zeta)=-\zeta\hat f(\zeta)$ and since $\zeta\mapsto-\zeta$ is in $\uniOm$, 
 $\mathcal{L}[{\bf m}\mathcal{B}(\partial_z\tilde f)](z)=\partial_z\mathcal{L}[{\bf m}\mathcal{B}(\tilde f)](z)$. We used the variable $z=1/a$ for the 
 Borel transform for simplicity.
\end{rk}
The following table lists the properties of the averages of Example \ref{ex:averages}.
\begin{center}
\begin{tabular}{|l||c|c|c|}
\hline
             & {\bf (P1)} & {\bf (P2)} & {\bf (P3)} \\ \hline
 ${\bf mul}$ & {\Checkmark} & N & \Checkmark  \\ \hline
 ${\bf mun}$ & \Checkmark & \Checkmark & N  \\ \hline 
 ${\bf man}$ & \Checkmark & \Checkmark & \Checkmark  \\ \hline
\end{tabular}
\end{center}
In particular, the fact that the Catalan average is a well-behaved average is a highly non-trivial result of \cite{Menous}. A finite number of other 
families of well-behaved averages are known. It is conjectured there are no more than the ones already known. Progresses toward a classification of 
well-behaved averages were recently made in \cite{VB19}, using methods from the theory of Rota-Baxter algebras.
{\begin{rk}
     One can use an average that has only conditions {\bf (P1)} and {\bf (P2)} when one works on a problem with only one (or eventually finitely many) alien derivatives acting non-trivially on the Borel transform. This is the case in non-trivial problems, e.g. for non-linear systems of ODEs of rank one studied in \cite{Co06} or the Schwinger-Dyson equation of the Yukawa model studied in \cite{BoDu20}.
     
     As soon as the an infinite number of alien derivatives act non-trivially on the Borel transform, condition {\bf (P3)} is needed, as implied by Lemma 5 and Equation (40) of \cite{EcMe95}. We will see in the next Section that we are in this case.
    \end{rk}
}

Finally, the core of the Borel-\'Ecalle resummation method can be summed up in the following theorem:
\begin{thm} \label{thm:Borel_Ecalle_resummation}
 Let $(E)$ a differential equation admitting a solution $\tilde f\in\C[[a]]_1$ such that $\hat f\in\resOm$ for some $\Omega=\omega\N^*\subset\R^*_+$ 
 and $|\phi(\zeta^{\pm\cdots\pm})|\leq Ke^{c|\zeta|}$ for $|\zeta|$ big enough. Let {\bf m} be a well-behaved average. Then 
 \begin{equation*}
  f^{\rm res}:=\mathcal{L}\circ{\bf m}\circ\mathcal{B}\circ\tilde f
 \end{equation*}
 is a solution of $(E)$ analytic in the open set
 \begin{equation*}
  U =\{a\in \C: |a-c/2|<c/2\}.
 \end{equation*} 
\end{thm}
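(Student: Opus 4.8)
The plan is to follow the composite $f^{\rm res}=\mathcal{L}\circ{\bf m}\circ\mathcal{B}$ (with $\mathcal{L}=\mathcal{L}^0$, since $\Omega\subset\R_+^*$) through its three stages, checking at each stage that the operation is well defined on the output of the previous one, that the equation $(E)$ is transported faithfully, and that one controls the domain of analyticity at the end. Since $\tilde f\in\C[[a]]_1$ is $1$-Gevrey, $\hat f:=\mathcal{B}(\tilde f)$ has positive radius of convergence, so $\hat f\in\C\{\zeta\}$, and by hypothesis $\hat f\in\resOm$ with $\Omega=\omega\N^*$ closed under addition. The Borel dictionary transports $(E)$ into an equation on the Borel side, satisfied by $\hat f$, in which products become convolution products $\star$, the derivation $\partial_z$ becomes multiplication by $-\zeta$, and multiplication by $z^{-1}=a$ becomes the primitive $\zeta\mapsto\int_0^\zeta$; all the convolution products occurring there lie in $\resOm$ by Theorem \ref{thm:stability_resu_fct}.

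Apply now the well-behaved average ${\bf m}$. One has ${\bf m}\hat f\in\uniOm$, and since $\uniOm\subset\resOm$ (a uniform function is in particular continuable along every $\Omega$-avoiding path) the convolution products of averaged functions are again legitimate elements of $\resOm$. The crucial point is that ${\bf m}$ commutes with each operation appearing in the Borel-plane equation: with multiplication by a uniform function — in particular by $-\zeta$ — because such a factor does not see the sheet, $g(\zeta^{\varepsilon_1\cdots\varepsilon_n})=g(\zeta)$; with the primitive $\int_0^\zeta$, thanks to the coherence relation ${\bf m}^{\varepsilon_1\cdots\varepsilon_{n-1}+}+{\bf m}^{\varepsilon_1\cdots\varepsilon_{n-1}-}={\bf m}^{\varepsilon_1\cdots\varepsilon_{n-1}}$, which makes the weighted sum consistent across the successive intervals $(\omega_n,\omega_{n+1})$; and with $\star$ by property \textbf{(P1)}. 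Hence ${\bf m}\hat f$ solves the same Borel-plane equation. Moreover on the disc $\{|\zeta|<\omega\}$ there is no singularity to turn around, so ${\bf m}$ acts trivially (${\bf m}^\emptyset=1$) and ${\bf m}\hat f$ coincides there with $\hat f$: it is a genuine holomorphic germ at the origin. Finally, property \textbf{(P3)} applied to the hypothesis $|\hat f(\zeta^{\pm\cdots\pm})|\leq Ke^{c|\zeta|}$ gives $|({\bf m}\hat f)(\zeta)|\leq Ke^{c|\zeta|}$ for $|\zeta|$ large along the positive axis.

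Thus ${\bf m}\hat f$ is holomorphic near $0$, locally integrable along $\R_+$ (its behaviour near each $\omega_i$ being controlled by the finitely many lateral continuations of the resurgent germ $\hat f$), and of at most exponential type $Ke^{c|\zeta|}$ at infinity, so its Laplace transform in the direction $\theta=0$ converges absolutely for $\Re(z)>c$. By the theorem on the analyticity of the Borel sum, $f^{\rm res}=\mathcal{L}^0[{\bf m}\hat f]$ is analytic on the half-plane $\{\Re(z)>c\}$, which written in the variable $a=1/z$ is the disc $U$ of the statement (the image of that half-plane under the inversion $z\mapsto 1/z$, a disc tangent to the imaginary axis at the origin). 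It remains to check that $f^{\rm res}$ solves $(E)$: apply the inverse Borel dictionary to the Borel-plane equation satisfied by ${\bf m}\hat f$ — the Laplace transform sends the primitive $\int_0^\zeta$ back to multiplication by $z^{-1}$, multiplication by $-\zeta$ back to $\partial_z$ (as recalled in the remarks following Definition \ref{defn:well_behaved_averages}), and, because all the functions involved share the bound $Ke^{c|\zeta|}$, the convolution products back to ordinary products. Hence $(E)$ holds for $f^{\rm res}$ on $U$.

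The step I expect to be the main obstacle is the treatment of the non-linear terms of $(E)$ through the last two stages: one must check that \textbf{(P1)} genuinely applies to the convolution products on the Borel side (which requires the averaged factors to be resurgent, guaranteed by $\uniOm\subset\resOm$ together with Theorem \ref{thm:stability_resu_fct}) and that, after averaging, the classical convolution theorem still converts these $\star$-products into pointwise products — which rests precisely on the common exponential bound $Ke^{c|\zeta|}$ supplied by \textbf{(P3)}. One should also be careful that the Laplace integral along $\R_+$ passes through $\Omega$: its convergence uses the integrability of ${\bf m}\hat f$ near the points of $\Omega$, inherent to the description of $\Omega$-resurgent functions as locally integrable functions on $\C\dsetminus\Omega$. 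By contrast, the commutation of ${\bf m}$ with the linear part of the Borel dictionary follows from the coherence relations on the weights alone, and the passage from the half-plane $\{\Re(z)>c\}$ to the disc $U$ is a routine computation.
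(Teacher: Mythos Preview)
The paper does not give a proof of Theorem~\ref{thm:Borel_Ecalle_resummation}; it is stated as the summary of the Borel--\'Ecalle resummation machinery built up in the preceding subsections, with the ingredients (Borel dictionary, stability of $\resOm$ under convolution, properties \textbf{(P1)}--\textbf{(P3)} of well-behaved averages, analyticity of the Laplace transform) being introduced but the theorem itself left unproved. Your sketch assembles exactly these ingredients in the expected order and is a faithful reconstruction of the intended argument.

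Two minor points deserve attention. First, your commutation of ${\bf m}$ with the primitive $\int_0^\zeta$ via the coherence relation is correct in spirit, but it is worth noting that this also uses local integrability of the lateral continuations across each $\omega_i$, which you invoke only later for the Laplace integral; the two issues are really the same and could be grouped. Second, the domain $U=\{|a-c/2|<c/2\}$ as written in the statement does not literally coincide with the inversion of the half-plane $\Re(z)>c$, which is $\{|a-1/(2c)|<1/(2c)\}$; the paper's own application of the theorem in Proposition~\ref{prop:numerical_bound} (where $c=10L$ yields the disc of radius $1/(20L)$) confirms that the latter is what is meant. You gloss over this as a ``routine computation'', which is fine, but be aware that the statement carries a typographical slip you would otherwise be reproducing.
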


\section{The Wess-Zumino model}

We introduce the model we are going to study and state some known facts about it. Some of these results are well-known (e.g. the derivation of 
the Schwinger-Dyson and renormalisation group equations) while other are more recent. These can all be found in the PhD thesis \cite{Cl15}.

\subsection{Presentation of the model}

The Wess-Zumino is one of the simplest possible supersymmetric model: it is massless and exactly supersymmetric. It was first introduced and 
studied in the papers \cite{WeZu74a,WeZu74b}, seminal to supersymmetry. This model has two features that make it suitable as a first QFT to study 
within the framework of resurgence theory. 

First, the $\beta$ and $\gamma$ functions are proportional: $\beta=3\gamma$. This can in particular be shown using Hopf-algebraic techniques. It also presents the 
striking feature that it needs no vertex renormalisation, due to its (exact) supersymmetry. Therefore the Schwinger-Dyson equation for the 
two point function, truncated to the first loop, actually decouples from the Schwinger-Dyson equations for higher point functions. It reads
\begin{equation}\label{eq:SDnlin}
\left(
\tikz \node[prop]{} child[grow=east] child[grow=west];
\right)^{-1} = 1 - a \;\;
\begin{tikzpicture}[level distance = 5mm, node distance= 10mm,baseline=(x.base)]
 \node (upnode) [style=prop]{};
 \node (downnode) [below of=upnode,style=prop]{}; 
 \draw (upnode) to[out=180,in=180]   
 	node[name=x,coordinate,midway] {} (downnode);
\draw	(x)	child[grow=west] ;
\draw (upnode) to[out=0,in=0] 
 	node[name=y,coordinate,midway] {} (downnode) ;
\draw	(y) child[grow=east]  ;
\end{tikzpicture}.
\end{equation}
The other equation we are going to study is the renormalisation group equation. It takes the particularly simple form
\begin{equation} \label{eq:RGE}
 \partial_LG(L,a) = \gamma(a)(1+3a\partial_a)G(L,a)
\end{equation}
with $\gamma(a):=\partial_L {G(L,a)}|_{L=0}$ the anomalous dimension of the theory.

Using some known results of this model which are going to be listed in the next subsection, we will study the system composed of the renormalisation 
group and the Schwinger-Dyson equations. Let us emphasize that this study will be purely mathematical. Within the assumption that this study actually 
carries most of the information of the non perturbative regime of the Wess-Zumino model, we will then derive some physical interpretations of our work at 
the very end of this paper. 

{As already stated, we will in this paper rigorously prove the Borel-\'Ecalle summability of the solution of the system \eqref{eq:SDnlin}-\eqref{eq:RGE}. For this, we will start from results of previous articles \cite{BeCl16,BeCl13,BeCl14} and use in particular analytical tools for resurgent functions {developed} in \cite{Sa12,Sa14}. To the best of the author's knowledge, it is the first time these tools are applied to {an} analyse a QFT model. However, other methods also allow to analyse summability of PDEs, see for example \cite{Co08,CoTa07} and references therein. These other methods could also be used in future studies of other QFT models.

}

\subsection{State of the art}

Writing $G$ as a formal series in $L$
\begin{equation} \label{eq:L_expansion_G}
 G(L,a) = \sum_{k=0}^{+\infty}\gamma_k(a)\frac{L^k}{k!},
\end{equation}
(with $\gamma_0(a)=1$ and $\gamma_1(a)=:\gamma(a)$)
we can easily write the RGE \eqref{eq:RGE} as an induction relation on the $\gamma_k$s:
\begin{equation} \label{eq:recursion_gamma}
 \gamma_{k+1}(a) = {\gamma(a)}(1+ 3a\partial_a)\gamma_k.
\end{equation}
This justifies that we look for an equation {for} $\gamma$ rather than an equation {for} $G$. Plugging the expansion \eqref{eq:L_expansion_G} into the 
Schwinger-Dyson equation \eqref{eq:SDnlin} and computing the Feynman integral we obtain
\begin{equation} \label{SDE}
 \gamma(a) = \left.a\left(1+\sum_{n=1}^\infty\frac{\gamma_n(a)}{n!}\frac{d^n}{dx^n}\right)\left(1+\sum_{m=1}^\infty\frac{\gamma_m(a)}{m!}\frac{d^m}{dx^m}\right)H(x,y)\right|_{x=y=0},
\end{equation}
with $H$ the one-loop Mellin transform:
\begin{align}
 H(x,y) & = \frac{\Gamma(1+x)\Gamma(1+y)\Gamma(1-x-y)}{\Gamma(1-x)\Gamma(1-y)\Gamma(2+x+y)} \label{eq:Mellin0} \\
	& = \frac{1}{1+x+y}\exp\Bigl(2\sum_{k=1}^{+\infty}\frac{\zeta(2k+1)}{2k+1}\left((x+y)^{2k+1}-x^{2k+1}-y^{2k+1}\right)\Bigr). \label{eq:Mellin}
\end{align}
We will study the Borel transform of {Equation \ref{SDE}}. It maps the usual product of formal series to a convolution product and the identity function to the 
constant function $\zeta\mapsto1$. Separating the $1$ in the equation above from the rest we end up with 
\begin{equation*}
 \hat\gamma(\zeta) = 1 + \left.2\sum_{n=1}^\infty \frac{(1\star\hat\gamma_n)(\zeta)}{n!}\frac{d^n}{dx^n}H(x,y)\right|_{x=y=0} + \left.\sum_{n,m=1}^\infty \frac{(1\star\hat\gamma_n\star\hat\gamma_m)(\zeta)}{n!m!}\frac{d^n}{dx^n}\frac{d^m}{dy^m}H(x,y)\right|_{x=y=0}.
\end{equation*}
Similarly, taking the Borel transform of the renormalisation group equation \eqref{eq:recursion_gamma} one obtains
\begin{equation} \label{eq:RGE_borel}
 \hat\gamma_{n+1}(\zeta) = \hat\gamma\star(4+3\zeta\partial_\zeta)\hat\gamma_n(\zeta).
\end{equation}
Now, $\gamma(a)$ is a formal series with coefficients in $\C$, without constant term:
\begin{equation*}
 \gamma(a) = \sum_{n=1}^{+\infty} c_n a^n.
\end{equation*}
{The asymptotic behavior of the coefficients $c_n$ was found in \cite[Equation (18)]{Be10} (a result that was derived again in \cite{Be10a},} with more orders computed in \cite{BeCl13}){. The asymptotic behavior of the coefficients $c_n$ is}:
\begin{equation} \label{eq:asymp_behavior_cn}
 c_{n+1} = -(3n+2+\mathcal{O}(n^{-1}))c_n.
\end{equation}
Furthermore, one easily check that the first terms of this expansion are given by $c_1=1$ and $c_2=-2$.

One important result we will build upon is
\begin{claim} \label{thm:resurgence_gamma}
 $\hat\gamma$ is $\Z^*/3$-resurgent.
\end{claim}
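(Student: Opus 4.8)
The plan is to recast the defining system \eqref{SDE}--\eqref{eq:RGE_borel} in the Borel plane as a fixed-point equation and to solve it by iteration inside the Fr\'echet algebra $(\widehat{\mathcal{R}}_\Omega,\star)$ with $\Omega=\Z^*/3$. Given a germ $\phi$, put $\phi_1:=\phi$ and, following \eqref{eq:RGE_borel}, $\phi_{n+1}:=\phi\star(4+3\zeta\partial_\zeta)\phi_n$, and set
\begin{equation*}
 \mathcal{N}(\phi):=1+2\sum_{n\ge1}\frac{h_{n,0}}{n!}\,(1\star\phi_n)+\sum_{n,m\ge1}\frac{h_{n,m}}{n!\,m!}\,(1\star\phi_n\star\phi_m),
\end{equation*}
with $h_{n,m}:=\partial_x^n\partial_y^mH(0,0)$ the Taylor coefficients at the origin of the one-loop Mellin transform \eqref{eq:Mellin}; then $\hat\gamma$ is the unique formal solution of $\hat\gamma=\mathcal{N}(\hat\gamma)$. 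Two preliminary observations are in order. First, \eqref{eq:Mellin} exhibits $H$ as holomorphic on a fixed bidisc about the origin (the nearest obstructions being $|x|=1$, $|y|=1$ and $x+y=\pm1$), so by Cauchy's inequalities there are $M,r>0$ with $|h_{n,m}|\le M r^{-(n+m)}n!\,m!$ for all $n,m\ge0$. Second, $\Omega=\Z^*/3$ is forced: it is the closure under addition of $\{\pm1/3\}$, while by \eqref{eq:asymp_behavior_cn} the germ $\hat\gamma$ has a finite radius of convergence (equal to $1/3$), so by Theorem \ref{thm:stability_resu_fct} and Example \ref{ex:comvo_resu} no smaller addition-closed set could support the convolutions occurring in $\mathcal{N}$.

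The first step is qualitative. The space $\widehat{\mathcal{R}}_\Omega$ is stable under convolution (Theorem \ref{thm:stability_resu_fct}; $\Z/3$ is stable under addition, and the base point $0$ is harmless since $\hat\gamma$ is holomorphic there), under multiplication by the entire function $\zeta$, and under $\partial_\zeta$; an immediate induction then shows that if $\phi\in\widehat{\mathcal{R}}_\Omega$ every $\phi_n$ lies in $\widehat{\mathcal{R}}_\Omega$, hence so does each partial sum of $\mathcal{N}(\phi)$. Starting the iteration at $\hat\gamma^{[0]}:=1$ and putting $\hat\gamma^{[k+1]}:=\mathcal{N}(\hat\gamma^{[k]})$, every iterate is therefore $\Omega$-resurgent. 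Moreover $1\star\psi$ is a primitive of $\psi$, so $\mathcal{N}$ strictly raises the $\zeta$-adic valuation both of $\phi\mapsto\mathcal{N}(\phi)-1$ and of differences $\mathcal{N}(\phi)-\mathcal{N}(\psi)$ (the Euler operator $\zeta\partial_\zeta$ occurring in \eqref{eq:RGE_borel} preserves valuation); hence $\mathcal{N}$ is a $\zeta$-adic contraction, the iterates converge coefficientwise to the unique formal fixed point, which is exactly $\hat\gamma$. It remains to see that the convergence also takes place in $\widehat{\mathcal{R}}_\Omega$.

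The second step is quantitative: one proves that, for every admissible pair $(\delta,L)$ (first for $L$ below a threshold, then for all $L$ by propagating the bound along paths using the equation itself), the semi-norms $\|\hat\gamma^{[k]}\|_{\delta,L}$ stay bounded uniformly in $k$ and $\|\hat\gamma^{[k+1]}-\hat\gamma^{[k]}\|_{\delta,L}$ is geometrically small. The inputs are: (i) the convolution bound \eqref{eq:bound_conv_resu} of Theorem \ref{thm:bound_resu_Sauzin}, which controls the $N$-fold convolution hidden inside $\phi_n$ and $1\star\phi_n\star\phi_m$ on $\calK_{\delta,L}(\Omega)$ by $\tfrac2\delta\tfrac{C^N}{N!}$ times a product of semi-norms on a slightly larger $\calK_{\delta',L'}(\Omega)$, the factorial $N!$ being the decisive gain; (ii) the factorial decay of the Mellin coefficients recorded above, which cancels the explicit $1/(n!\,m!)$ and leaves geometric series in $r^{-1}$; and (iii) Cauchy estimates to handle the operator $3\zeta\partial_\zeta$ of \eqref{eq:RGE_borel}: once $\phi_n$ is expanded, via the Leibniz rule, into a sum of $N$-fold convolutions of $\zeta$-monomials times $\partial_\zeta$-iterates of $\phi$, each such iterate $\partial_\zeta^{\,j}\phi$ is bounded on $\calK_{\delta-\varepsilon,L}(\Omega)$ in terms of $\|\phi\|_{\delta,L}$ with constant of order $j!/\varepsilon^{j}$, and one uses $\max_{\calK_{\delta,L}(\Omega)}|\zeta|\le L$. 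Feeding (i)--(iii) through the recursion and tracking the combinatorics of the expansion of $\phi_n$ yields the required estimates; since $(\widehat{\mathcal{R}}_\Omega,\star)$ is complete, $(\hat\gamma^{[k]})_k$ converges there, and its limit --- already identified with $\hat\gamma$ --- is $\Omega$-resurgent.

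The main obstacle is step (iii) and its interplay with (i): the recursion \eqref{eq:RGE_borel} genuinely involves the unbounded operator $3\zeta\partial_\zeta$, so one must check that the combinatorial gain $C^N/N!$ of \eqref{eq:bound_conv_resu} together with the factorial decay of the Mellin coefficients still beats, term by term, the derivatives' Cauchy factors $j!/\varepsilon^{j}$ and the growth of $\max_{\calK}|\zeta|$ --- and, since the constant $C$ in \eqref{eq:bound_conv_resu} itself grows with $L$, that this can be arranged for every length $L$. Organising this estimate is precisely what Sauzin's non-linear analysis of resurgent functions \cite{Sa12} provides, and it is what makes the statement non-trivial; the bare closure of $\widehat{\mathcal{R}}_\Omega$ under each individual operation is routine. (An alternative route, closer to \cite{BeCl16}, computes the alien derivatives $\Delta_\omega\hat\gamma$, $\omega\in\Z^*/3$, directly from \eqref{SDE} and reads off the resurgence structure from the resulting relations.) One last technical point is that $\Z^*/3$ is not itself closed under addition, since $1/3+(-1/3)=0$; this is dealt with by enlarging $\Omega$ by the harmless point $0$, or by working with $\N^*/3$ and $-\N^*/3$ separately for the two directions.
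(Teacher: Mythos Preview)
The paper does not actually prove this statement. It is deliberately labeled a \emph{Claim} rather than a Theorem, and the accompanying Remark explains that a rigorous proof ``would add another rather lengthy and technical section'' and was ``not deemed important enough''; instead the paper recounts how the result is extracted from the physics literature \cite{BeCl13,BeCl14,BeCl15}, where the singularity locus of $\hat\gamma$ on $\Z^*/3$ was determined by computing transseries/alien-derivative data sector by sector. Your proposal is therefore not a variant of the paper's argument but an attempt to supply the proof the paper explicitly omits.

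As a strategy, the fixed-point scheme in $(\widehat{\mathcal{R}}_\Omega,\star)$ is reasonable and is in the spirit of Sauzin's non-linear analysis, but your write-up has a genuine gap precisely where you flag the difficulty. In your Step~2 you assert that the factorial gain $C^N/N!$ from \eqref{eq:bound_conv_resu}, together with the Mellin-coefficient geometry, ``beats'' the Cauchy factors $j!/\varepsilon^{j}$ generated by the Euler operator $3\zeta\partial_\zeta$ --- but you do not carry this out, and it is not automatic. Even for the \emph{easier} problem the paper does treat (Theorem~\ref{thm:resurgence_two_points_function}, resurgence of $\hat G$ \emph{given} resurgence of $\hat\gamma$), Sauzin's bound alone is insufficient: the paper needs an additional combinatorial decomposition (the word sets $W_n$ and the maps $f_w$) together with a carefully tuned sequence $\delta_n=\delta/2+(n-1)\delta/(2N)$ so that the accumulated Cauchy constants are of order $N^N$, which is then absorbed by the $1/(N+1)!$ via Stirling. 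Your problem is strictly harder, because it is self-referential: you must obtain these bounds \emph{uniformly in the iteration index $k$} so that the Picard sequence $(\hat\gamma^{[k]})$ is Cauchy in every seminorm, and you must show that $\mathcal{N}$ maps a ball of $\widehat{\mathcal{R}}_\Omega$ into itself before you can invoke completeness. Neither point is addressed. The sentence ``first for $L$ below a threshold, then for all $L$ by propagating the bound along paths using the equation itself'' is the other place where real work is hidden: the constant $C$ in \eqref{eq:bound_conv_resu} grows like $\exp(cL/\delta)$, so passing from small $L$ to arbitrary $L$ is not a bootstrap one can wave through; for a fixed-point equation (as opposed to a linear recursion in $n$) it is not clear what ``propagating using the equation'' means concretely.

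In short: your outline is a plausible route to a rigorous proof, and it is different from the paper's literature-based justification, but the decisive estimate --- controlling the interaction of $3\zeta\partial_\zeta$ with the convolution bound uniformly in both the RGE depth $n$ and the Picard index $k$, for all path-lengths $L$ --- is stated rather than proved. That estimate is exactly the ``lengthy and technical'' content the author chose to omit.
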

\begin{rk} \label{remark:need_for_well_behaved1}
 Since this result has not be written in the literature in the form we give above let us explain how it follows from previous work. Notice that we do not 
 write Theorem for this result, since the articles we will quote belong to the physics literature. We choose not to 
 write a complete proof of this result as it would add another rather lengthy and technical section of this article. Such a section, aimed only at making 
 {rigorous} an already known result was not deemed important enough.
 
 In \cite{BeCl13} perturbations to the asymptotic behaviour of $\gamma$ were computed. These were computed using ``formal parameters'' around which 
 perturbation theory was performed. In a more modern language, these formal parameters would be understood as transseries terms, and the expansions 
 computed in \cite{BeCl13} the perturbative {expansions} in the one instanton and one anti-instanton sectors. The {nowadays} well-understood fact that 
 this transseries expansion is equivalent to {computing} around the first poles of the Borel transform $\hat\gamma$ was {explicitly} shown in \cite{BeCl15}.
 
 This implies that the Borel transform $\hat\gamma$ can be continued along paths going further than the convergence disc of radius $1/3$ provided they 
 avoid the two singularities in $\zeta=\pm1/3$. From there, one can choose to work with the Borel transform $\hat\gamma$ or with the formal series 
 $\gamma$. The former was done in \cite{BeCl14}, where it was shown that $\hat\gamma$ has its singularities located on $\Z^*/3$. It can be done with the 
 transseries $\gamma$ using the same technics used in \cite{BeCl13}\footnote{This was not published since the perturbation around higher singularities of the 
 Borel transform are subdominant and difficult to approach numerically.}.
 
 {Finally, let us point out that the singularities of $\hat\gamma$ were computed in \cite[Sections 3.2 and 3.3]{BeCl14}. The precise formula are not important here, but it is to notice that these computations imply that the alien derivatives $\Delta_{\omega}$ acts non-trivially on $\hat\gamma$ for any $\omega\in\Z^*/3$.}
\end{rk}
Since we want to resum the two-point function in the direction $\theta=0$, we will focus on this direction. Therefore we will set $\Omega=\N^*/3$
and move on to prove that the two-point function is $\Omega$-resurgent.

\section{Resurgent analysis of the RGE}

\subsection{Solution of the renormalisation group equation}

We want to study the two-points function $G(L,a)$ as a formal series in $a$. We first show that $G(L,a)$ is indeed such a formal series thanks 
to the following lemma.
\begin{lem}
 For any $L\in\C$; the formula \eqref{eq:L_expansion_G} defines a formal series in $a$ with coefficients depending on $L$.
\end{lem}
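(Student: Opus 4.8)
The plan is to show that for each fixed $L \in \C$, the series $G(L,a) = \sum_{k=0}^{+\infty} \gamma_k(a) \frac{L^k}{k!}$, viewed as a formal series in $a$, is well defined in the sense that the coefficient of each fixed power $a^j$ receives contributions from only finitely many of the $\gamma_k$. First I would establish, by induction on $k$ using the recursion \eqref{eq:recursion_gamma}, namely $\gamma_{k+1}(a) = \gamma(a)(1+3a\partial_a)\gamma_k(a)$, that $\gamma_k(a)$ is a formal series in $a$ with lowest-order term of degree at least $k$, i.e.\ $\gamma_k(a) \in a^k \C[[a]]$. The base case is $\gamma_0(a) = 1$ (degree $0$) and $\gamma_1(a) = \gamma(a) \in a\C[[a]]$ since $\gamma$ has no constant term. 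For the inductive step, note that the operator $1+3a\partial_a$ preserves the valuation (it multiplies the coefficient of $a^m$ by $1+3m$, hence does not lower the lowest degree), and multiplication by $\gamma(a) \in a\C[[a]]$ raises the valuation by at least $1$; so if $\gamma_k \in a^k\C[[a]]$ then $\gamma_{k+1} \in a^{k+1}\C[[a]]$.

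Given this valuation estimate, the key step is to observe that in the double sum defining $G(L,a)$, the coefficient of a fixed monomial $a^j$ is $\sum_{k=0}^{j} [a^j]\gamma_k(a) \cdot \frac{L^k}{k!}$, a finite sum since $[a^j]\gamma_k(a) = 0$ for $k > j$. Hence $G(L,a)$ is a genuine element of $\C[[a]]$ with coefficients that are polynomials in $L$ of degree at most $j$ in the $a^j$-coefficient (in particular entire, indeed polynomial, functions of $L$), which is exactly the assertion of the lemma.

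I do not expect any serious obstacle here: the whole argument is a routine valuation/degree bookkeeping on the recursion \eqref{eq:recursion_gamma}, and the only mild point to be careful about is checking that $1+3a\partial_a$ does not decrease the $a$-valuation, which is immediate from its diagonal action $a^m \mapsto (1+3m)a^m$. The statement that the $\gamma_k$ are themselves well-defined formal series in $a$ follows along the way from the same induction, since each step only involves multiplication of formal series and the termwise-defined operator $a\partial_a$, both of which are well-defined operations on $\C[[a]]$.
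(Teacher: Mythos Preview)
Your proposal is correct and follows essentially the same approach as the paper's own proof: both argue by induction from the recursion \eqref{eq:recursion_gamma} that $\gamma_k(a)\in a^k\C[[a]]$, and then conclude that the coefficient of each $a^j$ in $G(L,a)$ is a finite sum. Your write-up is in fact slightly more explicit than the paper's, spelling out why $1+3a\partial_a$ preserves the $a$-valuation, which the paper leaves as a ``trivial induction.''
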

\begin{proof}
 Since $\gamma_0(a)=1$ by definition and $\gamma_1(a)=\gamma(a)$ lies in $a\C[[a]]$ as a result of \cite{BeLoSc07}, we obtain from
 \eqref{eq:recursion_gamma} with a trivial induction that, for any $k\in\N$, $\gamma_k(a)\in a^k\C[[a]]$.
 Then, for $n\geq1$, contributions to $a^n$ in $G(L,a)$ can only come from $\gamma_1(a),\cdots,\gamma_n(a)$ and their sum is therefore finite.
\end{proof}
The fact that we had to make this small manipulation indicates that the expansion \eqref{eq:L_expansion_G} is not suited to the study of $G(L,a)$ as a
formal series in $a$. We will therefore use the following alternative expansion of the two-points function.
\begin{equation} \label{eq:G_a_exp}
 G(L,a) = \sum_{n=0}^{+\infty} g_n(L)a^n\in A[[a]]
\end{equation}
with $A$ some suitable algebra of smooth functions or formal series.
\begin{prop} \label{prop:solution_RGE}
 The renormalisation group equation \eqref{eq:RGE} admits a solution of the form \eqref{eq:G_a_exp}, with $A=\C[L]$, explicitly given by $g_0(L)=1$ and
 \begin{equation} \label{eq:ansatz_sol_RGE}
  g_n(L) = \sum_{q=1}^n\left(\sum_{\substack{i_1,\cdots,i_q>0\\i_1+\cdots+i_q=n}}c_{i_1}\cdots c_{i_q}K_{i_1\cdots i_q}\right)\frac{L^q}{q}
 \end{equation}
 with the $c_n$ the coefficients of $\gamma(a)$ and $K_{i_1\cdots i_q}$ real numbers inductively defined for any $n\in\N$ and $q\in\{2,\cdots,n+1\}$ by $K_{n}=1$ and
 \begin{equation*}
  K_{i_1\cdots i_q} = (1+3(n+1-i_q))K_{i_1\cdots i_{q-1}}
 \end{equation*}
 with $i_1+\cdots +i_q=n+1$.
\end{prop}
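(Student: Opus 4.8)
The plan is to substitute the ansatz \eqref{eq:G_a_exp} into the renormalisation group equation \eqref{eq:RGE}, extract the coefficient of each power of $a$, and solve the resulting scalar recursion in closed form by induction on $n$. Writing $\gamma(a)=\sum_{k\geq1}c_ka^k$ and noting that $(1+3a\partial_a)G(L,a)=\sum_{n\geq0}(1+3n)g_n(L)a^n$, the coefficient of $a^n$ in \eqref{eq:RGE} gives $g_0'(L)=0$ and, for $n\geq1$,
\begin{equation*}
 g_n'(L)=\sum_{k=1}^{n}c_k\bigl(1+3(n-k)\bigr)g_{n-k}(L).
\end{equation*}
Since $G(0,a)=\gamma_0(a)=1$ by \eqref{eq:L_expansion_G}, the accompanying initial data are $g_0\equiv1$ and $g_n(0)=0$ for $n\geq1$. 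Together with the recursion this determines each $g_n$ uniquely, and a trivial induction shows $g_n\in\C[L]$, so the assertion that the ansatz works with $A=\C[L]$ is clear; the content of the proposition is the explicit formula \eqref{eq:ansatz_sol_RGE}.

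The initial conditions are automatic for \eqref{eq:ansatz_sol_RGE}: every summand carries a positive power of $L$, so $g_n(0)=0$ for $n\geq1$. For the recursion I would proceed by strong induction on $n$. The base case $n=1$ is $g_1'(L)=c_1$, matching $g_1(L)=c_1K_1L=c_1L$ since $K_1=1$. For the step, assume \eqref{eq:ansatz_sol_RGE} for all $m<n$, insert it (together with $g_0=1$) into the right-hand side of the recursion, and integrate in $L$ from $0$, which is legitimate because $g_n(0)=0$. Grouping the result according to the number $q$ of composition parts, the double sum over $k$ and over compositions of $n-k$ fuses into a single sum over compositions $(i_1,\dots,i_q)$ of $n$, the last part $i_q=k$ being matched against the factor $c_k(1+3(n-k))$. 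The combinatorial identity that makes this work is precisely the defining relation of the $K$'s: for a composition $(i_1,\dots,i_q)$ of $n$ one has $i_1+\cdots+i_{q-1}=n-i_q$, whence
\begin{equation*}
 K_{i_1\cdots i_q}=\bigl(1+3(n-i_q)\bigr)K_{i_1\cdots i_{q-1}},
\end{equation*}
which is exactly the factor produced when one peels off the last part from a composition of $n-k$ to obtain a composition of $n$. Re-assembling the powers of $L$ then identifies the integrated right-hand side with \eqref{eq:ansatz_sol_RGE}, closing the induction.

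The step that requires genuine care is this last reassembly: one must track the normalisation of the powers of $L$ generated by the term-by-term integration $\int_0^L(L')^{q}\,dL'$ and check it is compatible with the normalising coefficient attached to the $L^q$ block of \eqref{eq:ansatz_sol_RGE}, and one must verify that the $q=1$ block is produced exactly by the $k=n$ term, where $g_0=1$ and $1+3(n-n)=1$ yield the correct leading coefficient $c_n=c_nK_n$. Everything else is routine bookkeeping of compositions and of the $K$-recursion. Finally, uniqueness of the solution with the prescribed data at $L=0$ shows that the $G(L,a)$ built here agrees with the two-point function defined by the $L$-expansion \eqref{eq:L_expansion_G}, so the two presentations of $G$ are consistent.
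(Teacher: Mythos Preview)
Your approach is essentially the paper's: derive the scalar recursion $g_n'(L)=\sum_{k=1}^n c_k(1+3(n-k))g_{n-k}(L)$ from \eqref{eq:RGE}, fix the initial data $g_n(0)=0$ from $G(0,a)=1$, and then verify \eqref{eq:ansatz_sol_RGE} by induction, appending the last part $i_q$ to a composition of $n-i_q$ and recognising the factor $1+3(n-i_q)$ as exactly the $K$-recursion. The paper does the induction from $n$ to $n+1$ with the same regrouping of sums; there is no substantive difference in method.

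Your caution about the normalisation step is, however, not merely boilerplate: it is where the proof actually bites. If you carry it out as you describe, integrating $\int_0^L (L')^{q-1}\,dL'$ from the induction hypothesis and matching against the $L^q$-block of \eqref{eq:ansatz_sol_RGE}, you will find that the stated normalisation $L^q/q$ does not close the induction. Concretely, at $n=3$ the recursion gives a top term $\tfrac{14}{3}L^3$, whereas \eqref{eq:ansatz_sol_RGE} with $K_{111}=28$ and denominator $q=3$ yields $\tfrac{28}{3}L^3$. The identity that does close is with $L^q/q!$ in place of $L^q/q$ (for $q\leq2$ the two coincide, which is why the base cases pass). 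The paper's own proof elides the same point, writing ``integrating and switching the sum over $q$ by one'' to obtain $L^q/q$ where the honest computation produces $L^q/(q(q-1))$ from $L^q/q$, or $L^q/q!$ from $L^q/q!$. So: your strategy is correct and matches the paper, but the check you rightly single out will force you to amend the normalising factor in the formula you are proving.
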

\begin{proof}
 First, observe that the SDE \eqref{eq:SDnlin} taken at $a=0$ gives $G(L,0)=g_0(L)=1$. Furthermore, the RGE \eqref{eq:RGE} implies the following 
 family of differential equations (with $n\geq1$) when one replaces $G(L,a)$ by its representation \eqref{eq:G_a_exp}
 \begin{equation*}
  g_n'(L) = \sum_{p=1}^n c_p(1+3(n-p))g_{n-p}(L).
 \end{equation*}
 Notice that at this stage the derivative can be the derivative of function or the derivative of formal series.
 
 We now prove that these equations are solved as claimed by \eqref{eq:ansatz_sol_RGE} by induction. For the case $n=1$, the equation reduces to
 $g_1'(L)=1$ since $c_1=1=g_0(L)$. This is solved to $g_1(L)=L$ since by the expansion \eqref{eq:G_a_exp}, $G(L,a)$ has only $1=g_0(L)$ as a term 
 independent of $L$. We thus find $K_1=1$ as claimed. 
 
 It will be important for the induction step to have performed the case $n=2$. Observing that $g_1(L)=c_1L$ since $c_1=1$ we find for $g_2$ the 
 equation $g'_2(L) = c_1(1+3(2-1))c_1L + c_2$. This integrates to 
 \begin{equation*}
  g_2(L) = (c_1)^2(1+3(2-1))\frac{L^2}{2} + c_2L
 \end{equation*}
 without constant term for the same reason than the case $n=1$ treated above. We then find $K_2=1$ and $K_{11}=(1+3(2-1))\frac{K_1}{2}$ 
 as claimed.
 
 Let us now assume that the statement of the proposition holds for $n\geq2$. Writing aside the term $p=n+1$, integrating and switching the sum over 
 $q$ by one we find
 \begin{equation*}
  g_{n+1}(L) = c_{n+1}L + \sum_{p=1}^nc_p(1+3(n+1-p)\sum_{q=2}^{n+1-(p-1)}\frac{L^q}{q} \sum_{\substack{i_1,\cdots,i_{q-1}>0\\i_1+\cdots+i_{q-1}=n+1-p}}c_{i_1}\cdots c_{i_{q-1}}K_{i_1\cdots i_{q-1}}.
 \end{equation*}
 As before, we do not have a constant term thanks to the expansion \eqref{eq:G_a_exp}.
 
 Noticing that $\sum_{p=1}^n\sum_{q=2}^{n+1-(p-1)}=\sum_{q=2}^{n+1}\sum_{p=1}^{n+1-(q-1)}$ we can rewrite $g_{n+1}(L)$ as 
 \begin{equation*}
  c_{n+1}L + \sum_{q=2}^{n+1}\frac{L^q}{q}\sum_{p=1}^{n+1-(q-1)}(\cdots).
 \end{equation*}
 Now we can relabel the sum over $p$ as a sum over $i_q$. Thus the sums over $p$ and $i_1,\cdots,i_{q-1}$ can be merged. We obtain
 \begin{equation*}
  g_{n+1}(L) = c_{n+1}L +  \sum_{q=2}^{n+1}\frac{L^q}{q}\left(\sum_{\substack{i_1,\cdots,i_{q}>0\\i_1+\cdots+i_{q}=n+1}}c_{i_1}\cdots c_{i_q}\underbrace{(1+3(n+1-i_q))K_{i_1\cdots i_{q-1}}}_{=:K_{i_1\cdots i_q}}\right)
 \end{equation*}
 We therefore have the right form for $g_{n+1}(L)$, $K_{n+1}=1$ and the induction relation over the $K_{i_1\cdots i_q}$ claimed in the 
 Proposition.
\end{proof}

\subsection{The two-point function is 1-Gevrey}

To prove that the formal series \eqref{eq:G_a_exp} is indeed 1-Gevrey, we first need a reformulation of the formula \eqref{eq:asymp_behavior_cn}.
\begin{lem} \label{lem:bounds_cn}
 For any $n\in\N^*$, the following bounds hold
 \begin{equation*}
  (3\delta)^{n-1}(n-1)!\leq |c_n| \leq (3K)^n n!
 \end{equation*}
 for some $K>1$ and $\delta\in]0,1]$.
\end{lem}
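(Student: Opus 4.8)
The plan is to convert the asymptotic recursion \eqref{eq:asymp_behavior_cn} into two-sided bounds by a straightforward induction, being careful about where the $\mathcal{O}(n^{-1})$ error term becomes effective. First I would record the precise content of \eqref{eq:asymp_behavior_cn}: there is an integer $N_0$ and a constant $M>0$ such that for all $n\geq N_0$ one has $|c_{n+1}/c_n| = 3n+2+r_n$ with $|r_n|\leq M/n$. In particular, for $n$ large enough the ratio $|c_{n+1}|/|c_n|$ lies between, say, $3n$ and $3n+3$, and more generally between $3n(1-\epsilon)$ and $3n(1+\epsilon)$ for any fixed $\epsilon>0$ once $n\geq N_1(\epsilon)$. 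Since $c_1=1$ and $c_2=-2$ are nonzero, and the recursion shows no $c_n$ vanishes (the ratios are eventually positive and the finitely many early ones are nonzero by the explicit first values), all the $|c_n|$ are strictly positive, so taking ratios is legitimate.

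Next I would establish the upper bound $|c_n|\leq (3K)^n n!$. Telescoping the ratio gives $|c_n| = |c_{N}|\prod_{j=N}^{n-1}(3j+2+r_j)$ for any fixed base point $N\geq N_0$. Bounding each factor by $3j+2+M/j \leq 3(j+1)\cdot\kappa_j$ with $\kappa_j\to 1$, one gets $\prod_{j=N}^{n-1}(3j+2+r_j) \leq 3^{n-N}\frac{n!}{N!}\prod_{j\geq N}\kappa_j$, and the infinite product $\prod_{j\geq N}\kappa_j$ converges because $\sum M/j^2<\infty$ — wait, more carefully $\kappa_j = 1 + O(1/j)$ so one should instead bound $3j+2+M/j \le 3(j+1)$ outright for $j$ large (valid since $2 + M/j \le 3$ eventually), giving $\prod_{j=N}^{n-1}(3j+2+r_j)\le 3^{n-N}\frac{n!}{N!}$. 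Absorbing $|c_N| 3^{-N}/N!$ and the finitely many initial terms $c_1,\dots,c_{N-1}$ into a single constant, one can choose $K>1$ large enough that $|c_n|\leq (3K)^n n!$ holds for all $n\in\mathbb{N}^*$, since only finitely many small $n$ need to be handled by inflating $K$. The lower bound $|c_n|\geq (3\delta)^{n-1}(n-1)!$ is symmetric: for $j$ large, $3j+2+r_j \geq 3j - M \geq 3j\cdot\delta_j$ with $\delta_j\to 1$, and in fact $\ge 3j \cdot c$ for a fixed $c\in(0,1)$ once $j$ is large; telescoping gives $|c_n| \ge |c_N| c^{n-N} 3^{n-N} \frac{(n-1)!}{(N-1)!}$, and again absorbing constants and adjusting $\delta\in(0,1]$ downward to cover the finitely many initial indices yields $|c_n|\geq (3\delta)^{n-1}(n-1)!$ for all $n\geq 1$ (noting $(3\delta)^0 0! = 1 = |c_1|$ forces nothing problematic, and $\delta\leq 1$ keeps the base cases safe).

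The only mild subtlety — and what I'd call the main obstacle, though it is not a deep one — is bookkeeping the transition between the "asymptotic regime" $n\geq N_0$ where \eqref{eq:asymp_behavior_cn} applies and the finitely many small indices where it does not: one must check that the constants $K>1$ and $\delta\in(0,1]$ can be chosen uniformly. This works because for each fixed $n<N_0$ the desired inequalities $(3\delta)^{n-1}(n-1)!\le |c_n|\le (3K)^n n!$ are satisfied for all sufficiently large $K$ and sufficiently small $\delta$ (as $c_n\neq 0$), and there are only finitely many such constraints, so one takes the max over the required $K$'s and the min over the required $\delta$'s. The shape of the bounds — $n!$ on the upper side versus $(n-1)!$ on the lower side — is slightly asymmetric but deliberately slack, which gives ample room; one does not need the sharp constant $3$ in front, only that the exponential base is proportional to $3$, which is exactly what \eqref{eq:asymp_behavior_cn} delivers. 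The result will be used to show the coefficients $g_n(L)$ of \eqref{eq:G_a_exp} are $1$-Gevrey, so this controlled two-sided estimate is precisely what is needed downstream.
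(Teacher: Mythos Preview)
Your proposal is correct and follows essentially the same route as the paper: both arguments feed the ratio asymptotics \eqref{eq:asymp_behavior_cn} into an induction/telescoping, choosing $K$ large and $\delta$ small to absorb the $\mathcal{O}(n^{-1})$ error. The paper's version is terser --- it simply writes $|c_{n+1}|\leq 3K(n+1)|c_n|$ and $|c_{n+1}|\geq 3n\delta|c_n|$ ``provided $K$ (resp.\ $\delta$) has been chosen large (resp.\ small) enough'' and inducts from $c_1=1$ --- whereas you make explicit the split between the asymptotic regime $n\geq N_0$ and the finitely many initial indices, which is the same content stated with more care.
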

\begin{proof}
 The proof is by induction. The case $n=1$ holds since $c_1=1$. Assuming both inequalities hold for $n\in\N^*$, we first have
 \begin{equation*}
  |c_{n+1}| = |3n+2+\mathcal{O}(n^{-1})||c_n| \leq 3K(n+1)|c_n|
 \end{equation*}
 provided $K$ has been chosen large enough. The upper bound of $|c_{n+1}|$ then follows from the upper bound of $|c_n|$. For the lower bound, one 
 writes
 \begin{equation*}
  |c_{n+1}| = |3n+2+\mathcal{O}(n^{-1})||c_n| 
  \geq 3n\delta|c_n|
 \end{equation*}
 (provided $\delta$ has been chosen small enough) and the lower bound of $|c_{n+1}|$ then follows from the lower bound of $|c_n|$.
\end{proof}
One can {easily show} that 
\begin{equation*}
 \frac{1}{q} K_{i_1\cdots i_q} \leq \frac{1}{n}K_{\underbrace{1\cdots1}_{n\text{ times}}} = (3n-2)!!!
\end{equation*}
with $n=i_1+\cdots+i_q$ and $(3n-2)!!!=\prod_{i=0}^{n-1}(3n-2-i)$.
However this bound is too crude: we need a bound that is not uniform in $q$. Indeed, one obtain from the Lemma \ref{lem:bounds_cn} that the term 
$c_{i_1}\cdots c_{i_q}$ in the solution \eqref{eq:ansatz_sol_RGE} is dominated by the case $q=1$ while the term $K_{i_1\cdots i_q}$ is dominated by 
the term $q=n$. It is the fact that these two bounds cannot be reached together that will allow to prove that the solution \eqref{eq:ansatz_sol_RGE} is 
1-Gevrey.

Recall that for $n\in\N^*$, a {\bf composition} of $n$ is a finite sequence $(i_1,\cdots,i_q)$ of strictly positive integers such that ${i_1}+\cdots+i_q=n$. 
For any composition $(i_1,\cdots,i_q)$ of $n\in\N^*$ recall that the 
{\bf multinomial number} $\binom{n}{i_1,\cdots,i_q}$ is defined by
\begin{equation*}
 \binom{n}{i_1,\cdots,i_q} := \frac{n!}{i_1!\cdots i_q!}.
\end{equation*}
These numbers famously appear in the multinomial theorem and have many important combinatorics properties.
\begin{lem} \label{lem:bound_K_w}
 For any $n$ in $\N^*$ and composition $(i_1,\cdots,i_q)$ of $n$, we have
 \begin{equation*}
  \frac{1}{q}K_{i_1\cdots i_q} \leq \frac{3^n}{n}\binom{n}{i_1,\cdots,i_q}.
 \end{equation*}
\end{lem}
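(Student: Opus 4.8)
The plan is to turn the recursive definition of the $K_{i_1\cdots i_q}$ into a closed product of partial sums, and then match this product against the multinomial coefficient, which itself factors as a telescoping product of ordinary binomial coefficients; the inequality then reduces to an elementary estimate at each ``level'' of the composition, plus one scalar inequality at the end.

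First I would set $s_j:=i_1+\cdots+i_j$ for $1\le j\le q$, so that $s_q=n$, and note that the defining relation is simply $K_{i_1\cdots i_q}=(1+3s_{q-1})K_{i_1\cdots i_{q-1}}$ (since $i_1+\cdots+i_{q-1}=s_{q-1}$). An immediate induction on the number of parts $q$, with base case $K_n=1$, then gives
\[
 K_{i_1\cdots i_q}=\prod_{j=1}^{q-1}(1+3s_j).
\]
Because every part satisfies $i_{j+1}\ge1$, we have for $1\le j\le q-1$ the crude bound $1+3s_j<3(s_j+i_{j+1})=3s_{j+1}$, hence
\[
 K_{i_1\cdots i_q}\le 3^{\,q-1}\prod_{j=2}^{q}s_j .
\]

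On the combinatorial side I would use the telescoping identity
\[
 \binom{n}{i_1,\ldots,i_q}=\prod_{j=2}^{q}\binom{s_j}{i_j},
\]
which follows at once from $\binom{s_j}{i_j}=s_j!/(s_{j-1}!\,i_j!)$. For each $j\ge2$ one has $1\le i_j\le s_j-1$ (since $s_{j-1}=i_1+\cdots+i_{j-1}\ge1$), so $\binom{s_j}{i_j}\ge\binom{s_j}{1}=s_j$, and therefore $\binom{n}{i_1,\ldots,i_q}\ge\prod_{j=2}^{q}s_j$. Comparing with the upper bound on $K_{i_1\cdots i_q}$ gives $K_{i_1\cdots i_q}\le 3^{\,q-1}\binom{n}{i_1,\ldots,i_q}$, that is, $\tfrac1q K_{i_1\cdots i_q}\le \tfrac{3^{\,q-1}}{q}\binom{n}{i_1,\ldots,i_q}$. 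To conclude I would check the scalar inequality $3^{\,q-1}/q\le 3^{\,n}/n$ for $1\le q\le n$: it is equivalent to $n\le q\,3^{\,n-q+1}$, and using $3^{m}\ge m+1$ one gets $q\,3^{\,n-q+1}\ge q(n-q+1)=q(n-q)+q\ge (n-q)+q=n$, the last step being $(q-1)(n-q)\ge0$. The degenerate case $q=1$ is contained in the same chain, all empty products being equal to $1$.

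I do not expect any genuine obstacle: the computation is routine once the right intermediate quantity is identified. The only thing that really has to be got right is that the two estimates — replacing $1+3s_j$ by $3s_{j+1}$ to bound $K$ from above, and replacing $\binom{s_j}{i_j}$ by $s_j$ to bound the multinomial from below — must be applied at matching indices $j$, which is exactly why one extracts the explicit product $\prod_{j=1}^{q-1}(1+3s_j)$ first. This $q$-dependent estimate is precisely what the discussion preceding the lemma anticipates: the factor $c_{i_1}\cdots c_{i_q}$ is largest when $q=1$ while $K_{i_1\cdots i_q}$ is largest when $q=n$, and the bound above is of the right shape to exploit that these two extremes cannot be attained simultaneously.
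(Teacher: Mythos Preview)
Your proof is correct and takes a genuinely different route from the paper's. The paper argues by induction on $n$: given a composition of $n+1$, it peels off the last part $i_q$ via the recursion, applies the induction hypothesis to the shorter composition of $n+1-i_q$, and closes the step with the identity $\binom{n+1-i_q}{i_1,\ldots,i_{q-1}}=\binom{n+1}{i_q}^{-1}\binom{n+1}{i_1,\ldots,i_q}$ together with an elementary inequality relating $(3+1/(n+1-i_q))$, $\binom{n+1}{i_q}^{-1}$ and $3^{i_q}$. You instead unroll the recursion completely into the closed product $K_{i_1\cdots i_q}=\prod_{j=1}^{q-1}(1+3s_j)$ and compare it term-by-term with the telescoped multinomial $\binom{n}{i_1,\ldots,i_q}=\prod_{j=2}^q\binom{s_j}{i_j}$, obtaining the sharper intermediate bound $K_{i_1\cdots i_q}\le 3^{q-1}\binom{n}{i_1,\ldots,i_q}$ before relaxing it via the scalar inequality $3^{q-1}/q\le 3^n/n$. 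Your approach is more transparent and makes the $q$-dependence explicit throughout, which is exactly the refinement the discussion before the lemma calls for; the paper's induction is more routine but hides the product structure. One cosmetic point: in the final step you cite $3^m\ge m+1$ but in fact only use $3^m\ge m$ (with $m=n-q+1$); either suffices.
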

\begin{proof}
 First, observe that, for any $n\in\N^*$, the case $q=1$ trivially hold since $K_n=1=\binom{n}{n}$. We now prove that the result holds for every $n$ and 
 every $q$ by induction over $n$.
 
 For $n=1$, the inequality trivially holds (it is the equality case). Assume it holds for all $p\in\{1,\cdots,n\}$ for some $n\in\N^*$ and let 
 $(i_1,\cdots,i_q)$ be a composition of $n+1$. We have 
 already seen {that} if $q=1$ the result holds. If $q\geq2$ we then have
 \begin{equation*}
  \frac{1}{q}K_{i_1\cdots i_q} 
  \leq (1+3(n+1-i_q)\frac{K_{i_1\cdots i_{q-1}}}{q-1} \leq (1+3(n+1-i_q)\frac{3^{n+1-i_q}}{n+1-i_q}\binom{n+1-i_q}{i_1,\cdots,i_{q-1}}
 \end{equation*}
 by the induction hypothesis, which we can use since $q\geq2$ and thus $i_q\in\{1,\cdots,n\}$.
 
 From the definition of the multinomial numbers, we have
 \begin{equation*}
  \binom{n+1-i_q}{i_1,\cdots,i_{q-1}} = \binom{n+1}{i_q}^{-1}\binom{n+1}{i_1,\cdots,i_q}.
 \end{equation*}
 The result on rank $n+1$ then follows from the observation that 
 \begin{equation*}
  \left(3+\frac{1}{n+1-i_q}\right)\binom{n+1}{i_q}^{-1} \leq 3^{i_q}
 \end{equation*}
 for every $n\in\N^*$ and $i_q\in\{1,\cdots,n\}$.
\end{proof}
We are now ready to prove the main result of this subsection, namely that the two-point function is 1-Gevrey
\begin{prop} \label{prop:G_one_Gevrey}
 The two-point function $G(L,a)$ is 1-Gevrey as a formal series in $a$: for any $L\in\R$ 
 \begin{equation*}
  |g_n(L)| \leq \frac{3}{2}(18K^2 \tilde L)^n n!
 \end{equation*}
 with $\tilde L:=\max\{L,1\}$ and $K$ the constant appearing in the upper bound of $|c_n|$ in Lemma \ref{lem:bounds_cn}.
\end{prop}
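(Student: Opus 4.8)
The plan is to substitute the explicit formula \eqref{eq:ansatz_sol_RGE} for $g_n(L)$ into the triangle inequality and then bound each ingredient with the two lemmas just established. Fix a composition $(i_1,\dots,i_q)$ of $n$. The key observation is that the two bounds fit together with an exact cancellation of factorials: from Lemma~\ref{lem:bounds_cn} (applied multiplicatively to each part) one gets
\[
 |c_{i_1}\cdots c_{i_q}| \le (3K)^{i_1}i_1!\cdots(3K)^{i_q}i_q! = (3K)^n\, i_1!\cdots i_q!,
\]
while Lemma~\ref{lem:bound_K_w} gives
\[
 \frac{1}{q}K_{i_1\cdots i_q} \le \frac{3^n}{n}\binom{n}{i_1,\dots,i_q} = \frac{3^n}{n}\cdot\frac{n!}{i_1!\cdots i_q!}.
\]
Multiplying, the factors $i_1!\cdots i_q!$ cancel and the resulting bound $\tfrac{(9K)^n n!}{n}$ no longer depends on the composition. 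This is exactly the mechanism announced before Lemma~\ref{lem:bound_K_w}: $c_{i_1}\cdots c_{i_q}$ is largest for $q=1$, $K_{i_1\cdots i_q}$ is largest for $q=n$, and the two extremes are incompatible — which is why the crude $(3n-2)!!!$ bound is not enough and the multinomial form of Lemma~\ref{lem:bound_K_w} is needed.

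The next step is purely combinatorial bookkeeping. Since the summand is now constant over compositions of $n$ into exactly $q$ positive parts, and there are $\binom{n-1}{q-1}$ such compositions, one obtains
\[
 |g_n(L)| \le \frac{(9K)^n n!}{n}\sum_{q=1}^n \binom{n-1}{q-1}|L|^q = \frac{(9K)^n n!}{n}\,|L|\,(1+|L|)^{n-1}
\]
by the binomial identity $\sum_{q\ge 1}\binom{n-1}{q-1}x^q = x(1+x)^{n-1}$. Finally, using $\tilde L=\max\{L,1\}\ge 1$ so that $|L|\le\tilde L$ and $1+|L|\le 2\tilde L$, one has $|L|(1+|L|)^{n-1}\le\tfrac12(2\tilde L)^n$; and since $K>1$ we may weaken $(9K)^n 2^n=(18K)^n\le(18K^2)^n$, while $\tfrac{1}{2n}\le\tfrac32$. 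Collecting these yields
\[
 |g_n(L)| \le \frac{(18K)^n\,\tilde L^n}{2n}\,n! \le \frac32\,(18K^2\tilde L)^n\,n!,
\]
which is the claim (in fact with some room to spare).

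I do not expect a real obstacle here, since the technical effort has already been spent on Lemmas~\ref{lem:bounds_cn} and~\ref{lem:bound_K_w}. The only point requiring care is to keep the estimate on $c_{i_1}\cdots c_{i_q}$ fully multiplicative across the parts, so that it pairs cleanly with the multinomial bound on $\tfrac1q K_{i_1\cdots i_q}$; once the factorials cancel, the summation over compositions and over $q$ is routine and the final passage from $(18K)^n$ to $(18K^2)^n$ is just a cosmetic weakening making the constant uniform in $K$.
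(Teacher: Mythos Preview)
Your proof is correct and follows the same overall strategy as the paper: bound each summand in \eqref{eq:ansatz_sol_RGE} using Lemmas~\ref{lem:bounds_cn} and~\ref{lem:bound_K_w}, then count compositions with $\binom{n-1}{q-1}$ and sum over $q$.

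The one noteworthy difference is in how the bound on $|c_{i_1}\cdots c_{i_q}|$ is organised. The paper takes a slight detour: it uses the \emph{lower} bound $|c_n|\ge(3\delta)^{n-1}(n-1)!$ from Lemma~\ref{lem:bounds_cn} to write $|c_{i_1}\cdots c_{i_q}|\le 3nK^n|c_n|\binom{n}{i_1,\dots,i_q}^{-1}$, and only afterwards applies the upper bound $|c_n|\le(3K)^n n!$. This round trip through $|c_n|$ is why the paper ends up with $(18K^2)^n$. You instead apply the upper bound of Lemma~\ref{lem:bounds_cn} directly to each factor, obtaining $(3K)^n i_1!\cdots i_q!$, which pairs perfectly with the multinomial in Lemma~\ref{lem:bound_K_w} and yields the sharper $(18K)^n$. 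Your final weakening to $(18K^2)^n$ is, as you note, purely cosmetic to match the stated constant. So your route is in fact the more economical of the two, and it makes the cancellation mechanism (factorials from $c$'s versus the multinomial from $K$'s) more transparent.
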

\begin{rk}
 In practice, we are interested in the non perturbative regime which in the WZ model appears for $p^2=\mu^2\exp(L)\to\infty$. In this regime, we see 
 that the locus of the first singularity of the two-point function could depend on $L$ and in particular go to zero as $L\to\infty$. We will see later 
 that this is not the case. However the first singularities of $\hat{G}(\zeta,L)$ can move in an intermediate regime.
 This indicates that the singularities of the Borel transform\footnote{at least the first one, but since a singularities in $\omega\in\C^*$
 generally produces new singularities in $\omega\N^*$ (as in Example \ref{ex:comvo_resu}), we expect that all singularities will depend on $L$, at 
 least in some 
 non perturbative regime.} contains non perturbative information of the theory (which is not a new 
 observation: see for example \cite{Ho79}). Therefore resurgence theory has to be an important tool to unravel 
 non perturbative aspects of QFTs.
\end{rk}
\begin{proof}
 Using Lemma \ref{lem:bounds_cn} we have
\begin{equation*}
  \left|\frac{c_n}{c_{i_1}\cdots c_{i_q}}\right| \geq \frac{(3\delta)^{n-1}(n-1)!}{(3K)^{i_1}i_1!\cdots(3K)^{i_q}i_q!}  = \frac{1}{3n}\frac{\delta^{n-1}}{K^n}\binom{n}{i_1,\cdots,i_q} = \frac{1}{3n}\frac{1}{K^n}\binom{n}{i_1,\cdots,i_q}.
 \end{equation*}
 Using this as an upper bound for $|c_{i_1}\cdots c_{i_q}|$ together with the bound for $\frac{1}{q}K_{i_1\cdots i_q}$ of Lemma 
 \ref{lem:bound_K_w} we obtain
 \begin{equation} \label{eq:bound_gn}
  |g_n(L)| \leq 3\sum_{q=1}^n\left(\sum_{\substack{i_1,\cdots,i_q>0 \\ i_1+\cdots+i_q=n}}(3K)^n|c_n|\right)L^q  = 3 (3K)^n|c_n| \sum_{q=1}^n\binom{n-1}{q-1}L^q
 \end{equation}
 where we have used the simple combinatorial result that there are $\binom{n-1}{q-1}$ compositions of $n$ with 
 length $q$. Using that $L^q\leq \tilde L^n$ for any $q\in\{1,\cdots,n\}$ and once more the upper bound for $|c_n|$ of Lemma \ref{lem:bounds_cn} we find 
 the result of the Theorem since $\sum_{q=1}^n\binom{n-1}{q-1}=2^{n-1}$.
\end{proof}
\begin{rk}
 One can use the bound \eqref{eq:bound_gn} more directly to find a more precise bound:
 \begin{equation*}
  |g_n(L)| \leq 3 (9K^2)^n L(L+1)^{n-1}n!
 \end{equation*}
 which holds for all $L$. This bound indicates that the first singularities of the Borel transform is rejected to infinity in the perturbative 
 limit $L\to0$ (but \emph{not} that $G(L,a)$ is analytic in this limit), and therefore that the non perturbative effects encoded in the singularities 
 of the Borel transform vanish as expected in the perturbative limit $L\to 0$.
\end{rk}

\subsection{The two-point function is resurgent} \label{subsec:G_res}

We start with an easy Lemma:
\begin{lem} \label{lem:gamma_n_resu}
 The function $\hat\gamma_n$ is  $\Omega$-resurgent for all $n$ in $\N^*$.
\end{lem}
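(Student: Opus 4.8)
The plan is to proceed by induction on $n\in\N^*$, using the Borel-transformed renormalisation group equation \eqref{eq:RGE_borel} as the recursion, together with Claim \ref{thm:resurgence_gamma} as the base step and Theorem \ref{thm:stability_resu_fct} to handle the convolution.

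For the base case $n=1$ we have $\hat\gamma_1 = \hat\gamma$, which is $\Z^*/3$-resurgent by Claim \ref{thm:resurgence_gamma}; since $\N^*/3\subset\Z^*/3$, continuability along paths in $\C\setminus(\N^*/3)$ is implied by continuability along paths in $\C\setminus(\Z^*/3)$ (fewer forbidden points means the relevant paths form a subfamily), so $\hat\gamma\in\resOm$ with $\Omega=\N^*/3$. For the induction step, assume $\hat\gamma_n\in\resOm$. The recursion \eqref{eq:RGE_borel} reads $\hat\gamma_{n+1} = \hat\gamma\star(4+3\zeta\partial_\zeta)\hat\gamma_n$. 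First I would note that the multiplication operator $\zeta\mapsto\zeta\cdot$ and the differentiation operator $\partial_\zeta$ both preserve $\resOm$: analytic continuation commutes with multiplication by the entire function $\zeta$ and with differentiation, so $(4+3\zeta\partial_\zeta)\hat\gamma_n$ is again an element of $\resOm$ (it is a germ in $\C\{\zeta\}$ since $\hat\gamma_n$ is, and it is $\Omega$-continuable along any rectifiable path avoiding $\Omega$). Then, since $\Omega=\N^*/3$ is closed under addition, Theorem \ref{thm:stability_resu_fct} guarantees that $\resOm$ is stable under the convolution product, so $\hat\gamma_{n+1}=\hat\gamma\star\bigl((4+3\zeta\partial_\zeta)\hat\gamma_n\bigr)\in\resOm$. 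This closes the induction.

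The only delicate point is the claim that $\zeta\partial_\zeta$ preserves $\resOm$; I would justify it by recalling that if $\hat\phi$ is holomorphic on a convex open set $D_i$ then so is $\partial_\zeta\hat\phi$, hence the same finite covering $(D_i)$ witnessing the continuability of $\hat\phi$ along a path $\gamma$ witnesses that of $\partial_\zeta\hat\phi$, and multiplication by the entire function $\zeta$ is harmless for the same reason. Everything else is a direct application of results already stated in the excerpt, so I expect no real obstacle here; the substantive input — resurgence of $\hat\gamma$ and stability of $\resOm$ under convolution — has been imported as Claim \ref{thm:resurgence_gamma} and Theorem \ref{thm:stability_resu_fct} respectively.
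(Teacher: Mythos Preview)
Your overall approach --- induction on $n$ via the Borel-transformed RGE \eqref{eq:RGE_borel}, using Claim \ref{thm:resurgence_gamma} for the base case and stability of $\resOm$ under convolution (Theorem \ref{thm:stability_resu_fct}), derivation, and multiplication by entire functions for the induction step --- is exactly the route the paper takes, and the inductive step is fine.

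There is, however, a genuine slip in your base case. You write that ``since $\N^*/3\subset\Z^*/3$, continuability along paths in $\C\setminus(\N^*/3)$ is implied by continuability along paths in $\C\setminus(\Z^*/3)$ (fewer forbidden points means the relevant paths form a subfamily)''. The implication runs the other way: if $\Omega_1\subset\Omega_2$ then $\C\setminus\Omega_2\subset\C\setminus\Omega_1$, so paths avoiding the \emph{smaller} set $\Omega_1$ form a \emph{larger} family, not a subfamily. Hence $\Omega_1$-resurgence (here $\N^*/3$-resurgence) is the \emph{stronger} condition and does not follow from $\Omega_2$-resurgence (here $\Z^*/3$-resurgence). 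Concretely, a germ with a genuine singularity at $-1/3$ is $\Z^*/3$-resurgent but not $\N^*/3$-resurgent. The paper itself simply takes $\hat\gamma\in\resOm$ for $\Omega=\N^*/3$ as the content of Claim \ref{thm:resurgence_gamma} without spelling out this passage; your explicit justification, though, is backwards and should be dropped or replaced --- for instance by working with an additively closed $\Omega$ containing $\Z^*/3$, or by appealing (as the paper implicitly does) to the fact that only forward-going paths matter for resummation along $\theta=0$.
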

\begin{proof}
 This result is a direct consequence of the fact that the space of $\Omega$-resurgent functions is stable under convolution, derivation and
 multiplication by an analytic function together with the fact that $\hat\gamma$ is resurgent ({Claim} \ref{thm:resurgence_gamma}). This Lemma 
 is {then easily} shown by induction using the renormalisation group equation \eqref{eq:RGE_borel}.
\end{proof}
The space of resurgent functions is stable by sums, but the above Lemma is not enough to prove that 
$\sum_{n\geq1}\hat\gamma_n(\zeta)\frac{\Lambda^n}{n!}=:\hat{G}(\zeta,\Lambda)$ is $\Omega$-resurgent. In order to tame the combinatorics of the objects appearing in 
the proof, let us introduce some intermediate objects.
\begin{defn}
 For any $n\in\N^*$ define the set $W_n$ as the subset of words written in the alphabet $\{\star,.\}$ such that 
 \begin{equation*}
  W_1:=\{\emptyset\},\quad W_{n+1}:=\{(\star)\sqcup w|w\in W_n\}\bigcup\{(\star.)\sqcup w|w\in W_n\}
 \end{equation*}
 with $\sqcup$ the concatenation product of words. We further set $W:=\bigcup_{n\in\N^*} W_n$.
\end{defn}
\begin{lem} \label{lem:W_n}
 For any $n\in\N^*$ we have $|W_n|=2^{n-1}$.
\end{lem}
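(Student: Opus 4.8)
The plan is a straightforward induction on $n$, the only subtlety being to check that the two sets whose union defines $W_{n+1}$ are disjoint, so that their cardinalities add.

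First I would record an auxiliary observation, proved by an immediate induction: every nonempty word in $W_n$ begins with the letter $\star$. Indeed $W_1=\{\emptyset\}$ contains no nonempty word, and by construction every element of $W_{n+1}$ is of the form $(\star)\sqcup w$ or $(\star.)\sqcup w$, hence begins with $\star$.

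Now the induction itself. For $n=1$ we have $|W_1|=|\{\emptyset\}|=1=2^{0}$. Assume $|W_n|=2^{n-1}$. The two maps $w\mapsto(\star)\sqcup w$ and $w\mapsto(\star.)\sqcup w$ from $W_n$ are injective, since left-concatenation by a fixed prefix is injective; hence each of the sets $A:=\{(\star)\sqcup w\mid w\in W_n\}$ and $B:=\{(\star.)\sqcup w\mid w\in W_n\}$ has cardinality $2^{n-1}$. It remains to see $A\cap B=\emptyset$. Suppose $(\star)\sqcup w=(\star.)\sqcup w'$ for some $w,w'\in W_n$. Cancelling the common initial $\star$ gives $w=(.)\sqcup w'$, so $w$ is a nonempty word beginning with the letter ``$.$''. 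This contradicts the auxiliary observation, since every nonempty word of $W_n$ begins with $\star$. Therefore $W_{n+1}=A\sqcup B$ and $|W_{n+1}|=2^{n-1}+2^{n-1}=2^{n}$, completing the induction.

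I do not anticipate any real obstacle here; the one place to be careful is precisely the disjointness of $A$ and $B$, which is exactly where the auxiliary observation on the first letter is used (and which fails if one forgets that $W_1$ contributes the empty word, for which the argument ``$w$ begins with $.$'' still yields a contradiction since $\emptyset$ is not of the form $(.)\sqcup w'$).
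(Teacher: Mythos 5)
Your proposal is correct and follows essentially the same route as the paper: split $W_{n+1}$ into the two concatenation images, prove they are disjoint by comparing the second letter using the fact that every nonempty word of $W$ begins with $\star$, and conclude $|W_{n+1}|=2|W_n|$ by induction from $|W_1|=1$. The only cosmetic difference is that you make the injectivity of left-concatenation and the empty-word edge case explicit, which the paper leaves implicit.
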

\begin{proof}
 For any $n\in\N^*$ write $W_{n+1}=A_n\bigcup B_n$ with $A_n:=\{(\star)\sqcup w|w\in W_n\}$ and $B_n:=\{(\star.)\sqcup w|w\in W_n\}$. Let us check 
 that $A_n\cap B_n=\emptyset$. Let $W_{n+1}\ni w\in A_n\cap B_n$. Then it exists $w_1\in A_n$ and $w_2\in B_n$ such that 
 \begin{equation*}
  w = (\star)\sqcup w_1=(\star.)\sqcup w_2.
 \end{equation*}
 This implies that $w_1\neq\emptyset$ and since every nonempty word in $W$ starts with $\star$ we can write $w_1=(\star)\sqcup w_3$ for some word $w_3$ not 
 necessarily in $W$. We then have $w=(\star\star)\sqcup w_3 = (\star.)\sqcup w_2$ which a contradiction. Then $A_n\cap B_n=\emptyset$ and 
 $|W_{n+1}|=2|W_n|$. The result then follows from $|W_1|=1=2^0$.
\end{proof}
Finally, let us prove a simple but useful lemma about analytic continuation of series.
\begin{lem} \label{lem:analytic_continuation_series}
 Let $U\subset V$ be two open subsets of $\C$. Let $f_n:U\mapsto \C$ be a sequence of holomorphic functions such that:
 \begin{enumerate}
  \item $f:=\sum_{n=0}^\infty f_n$ is holomorphic in $U$;
  \item $f_n$ admits an analytic continuation $\tilde f_n$ to $V$;
  \item $\tilde f_n$ is bounded on $V$ by an analytic function $F_n$: $|\tilde f_n|\leq F_n$;
  \item The series $F=\sum_{n=0}^\infty F_n$ converges in $V$.
 \end{enumerate}
 Then $f$ admits an analytic continuation $\tilde f$ to $V$ and $|\tilde f|\leq F$.
\end{lem}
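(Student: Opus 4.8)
The plan is to prove Lemma \ref{lem:analytic_continuation_series} by the standard Weierstrass argument on normal (locally uniform) convergence of series of holomorphic functions, exploiting the majorant series $F$ to get the needed local uniform control on $V$. The statement is essentially a packaging of Weierstrass' theorem together with the observation that an analytic continuation, once it exists, is unique on connected components.

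First I would fix an arbitrary point $\zeta_0\in V$ and choose a closed disc $\bar D(\zeta_0,r)\subset V$. Since each $F_n$ is analytic (hence continuous) on $V$ and the series $F=\sum_n F_n$ converges in $V$, I would invoke the hypothesis that this convergence is strong enough to bound the $\tilde f_n$ uniformly on $\bar D(\zeta_0,r)$; concretely, $\sum_n \max_{\bar D(\zeta_0,r)}|\tilde f_n| \leq \sum_n \max_{\bar D(\zeta_0,r)} F_n < \infty$ because $F$ is finite on the compact disc. (Here one uses that a convergent series of nonnegative analytic — in particular continuous — functions is bounded on compacta; if the paper intends ``converges'' to mean ``converges locally uniformly'' this is immediate, and in the application to resurgent functions the $F_n$ will be explicit nonnegative functions for which this holds.) Hence the series $\sum_n \tilde f_n$ converges uniformly on $\bar D(\zeta_0,r)$, so by Weierstrass' theorem its sum $\tilde f := \sum_n \tilde f_n$ is holomorphic on $D(\zeta_0,r)$. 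Covering $V$ by such discs and using that the sums agree on overlaps gives a well-defined holomorphic function $\tilde f$ on all of $V$, and the triangle inequality gives $|\tilde f| = |\sum_n \tilde f_n| \leq \sum_n |\tilde f_n| \leq \sum_n F_n = F$ pointwise on $V$.

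Next I would check that $\tilde f$ genuinely extends $f$, i.e. $\tilde f|_U = f$. On $U$ we have $\tilde f_n|_U = f_n$ by hypothesis (2), so $\tilde f|_U = \sum_n f_n = f$ by hypothesis (1), the interchange being justified by the uniform convergence just established (restricted to $U$). Finally I would remark that $\tilde f$ is the unique analytic continuation of $f$ to $V$ on the connected component of $V$ containing $U$, by the identity theorem, which is the sense in which ``the'' analytic continuation is meant.

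I do not expect a serious obstacle here: the only subtlety is the precise meaning of ``the series $F=\sum F_n$ converges in $V$'' (hypothesis 4) — whether pointwise convergence suffices or locally uniform convergence is intended. The clean statement follows when the convergence is locally uniform (equivalently, normal), and since in the intended applications (the resurgent estimates of Theorem \ref{thm:bound_resu_Sauzin} and the bound \eqref{eq:bound_star_shaped}) the majorants $F_n$ will come with such uniform control, I would phrase the proof so as to make this explicit: the crux is simply ``majorant series $\Rightarrow$ local uniform convergence $\Rightarrow$ Weierstrass''. Everything else is bookkeeping with the identity theorem.
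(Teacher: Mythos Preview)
Your proposal is correct and takes essentially the same majorant-series approach as the paper. You are in fact more careful than the paper's own argument, which only establishes pointwise absolute convergence via $S_N(z):=\sum_{n\le N}|\tilde f_n(z)|\leq F(z)$ and then asserts holomorphy of the sum without further comment, whereas you explicitly pass through local uniform convergence and Weierstrass' theorem and correctly flag the mild ambiguity in hypothesis~(4).
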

\begin{proof}
 For any $z\in V$, let us set 
 \begin{equation*}
  S_N(z) := \sum_{n=0}^N|\tilde f_n(z)| \leq \sum_{n=0}^N F_n(z) \longrightarrow F(z)
 \end{equation*}
 as ${N\to \infty}$. Then $S_N(z)$ is increasing and bounded and therefore convergent. {Hence} the series $\tilde f(z) := \sum_{n=0}^\infty\tilde f_n(z)$ is absolutely 
 convergent and thus convergent. This series {is by definition} an analytic continuation of $f$ to $V$ and is bounded by $F$.
\end{proof}
We are now ready to prove the main result of this section.
\begin{thm} \label{thm:resurgence_two_points_function}
 For any $\Lambda\in\R$, the map $\zeta\mapsto\hat{G}(\zeta,\Lambda)$ is $\Omega$-resurgent. 
\end{thm}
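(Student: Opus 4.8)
The plan is to reduce the resurgence of $\hat G(\zeta,\Lambda)=\sum_{n\geq1}\hat\gamma_n(\zeta)\Lambda^n/n!$ to a normal convergence statement on every compact set $\calK_{\delta,L}(\Omega)$ of the Riemann surface $\scrS_\Omega$, and then to invoke Lemma~\ref{lem:analytic_continuation_series} together with Theorem~\ref{thm:bound_resu_Sauzin}. First I would use the Borel-transformed RGE \eqref{eq:RGE_borel}, namely $\hat\gamma_{n+1}=\hat\gamma\star(4+3\zeta\partial_\zeta)\hat\gamma_n$, to expand $\hat\gamma_n$ as a sum, indexed by the words $w\in W_n$ introduced above, of iterated convolution products of $\hat\gamma$ with itself, dressed by the differential operator $4+3\zeta\partial_\zeta$ acting on various factors; the alphabet $\{\star,.\}$ of $W_n$ is precisely designed to bookkeep where a bare convolution versus a ``convolve-then-differentiate'' step occurs, and Lemma~\ref{lem:W_n} tells us there are $2^{n-1}$ such words. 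So schematically $\hat\gamma_n = \sum_{w\in W_n} P_w(\hat\gamma)$, where each $P_w(\hat\gamma)$ is an $n$-fold convolution of copies of $\hat\gamma$ with some of the copies hit by $4+3\zeta\partial_\zeta$, and each term is $\Omega$-resurgent by Lemma~\ref{lem:gamma_n_resu}'s underlying stability properties.

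Next I would fix $\delta<\rho(\Omega)$ and $L>0$ and bound $\max_{\calK_{\delta,L}(\Omega)}|\hat\gamma_n|$. On $\calK_{\delta,L}(\Omega)$ the variable $\zeta$ is bounded, say $|\zeta|\leq L$, so the operator $4+3\zeta\partial_\zeta$ is controlled by a Cauchy estimate: passing to a slightly larger compact $\calK_{\delta'',L''}(\Omega)$ it contributes a factor that is at worst linear in $n$ (bounded by $C_1 n$ for a constant depending only on $\delta,L$). Applying the Sauzin bound \eqref{eq:bound_conv_resu} to the $n$-fold convolution $P_w(\hat\gamma)$ gives
\begin{equation*}
 \max_{\calK_{\delta,L}(\Omega)}|P_w(\hat\gamma)| \leq \frac{2}{\delta}\frac{C^n}{n!}\left(C_1 n\right)^{n}\left(\max_{\calK_{\delta',L'}(\Omega)}|\hat\gamma|\right)^{n},
\end{equation*}
wait --- the naive $n$ factors of $C_1 n$ would be fatal, so here is the point where I must be careful: the operator $4+3\zeta\partial_\zeta$ is applied to only one factor at each of the $n-1$ induction steps, but a cleaner route is to bound $(4+3\zeta\partial_\zeta)\hat\gamma$ on $\calK_{\delta',L'}$ directly in terms of $\hat\gamma$ on a marginally larger compact via a single Cauchy estimate with a fixed radius $\delta/2$, absorbing the $\zeta$-derivative once and for all into a new resurgent function $\hat\psi:=(4+3\zeta\partial_\zeta)\hat\gamma$ (still $\Omega$-resurgent). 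Then $P_w(\hat\gamma)$ is genuinely an $n$-fold convolution of the two fixed resurgent functions $\hat\gamma,\hat\psi$, so \eqref{eq:bound_conv_resu} yields $\max_{\calK_{\delta,L}}|P_w(\hat\gamma)|\leq \frac{2}{\delta}\frac{C^n}{n!}M^n$ with $M:=\max(\max_{\calK_{\delta',L'}}|\hat\gamma|,\max_{\calK_{\delta',L'}}|\hat\psi|)$, uniformly in $w$. Summing over the $2^{n-1}$ words and dividing by $n!$ from the definition of $\hat G$:
\begin{equation*}
 \max_{\calK_{\delta,L}(\Omega)}\left|\hat\gamma_n(\zeta)\frac{\Lambda^n}{n!}\right| \leq 2^{n-1}\cdot\frac{2}{\delta}\frac{C^n M^n}{n!}\cdot\frac{|\Lambda|^n}{n!} =: F_n,
\end{equation*}
and $\sum_n F_n$ converges (the $1/(n!)^2$ crushes everything). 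Finally I would apply Lemma~\ref{lem:analytic_continuation_series} along an arbitrary rectifiable path in $\C\setminus\Omega$: each $\hat\gamma_n\Lambda^n/n!$ continues along such a path (Lemma~\ref{lem:gamma_n_resu}), the partial sums continue and are dominated by $F_n$ on the relevant $\calK_{\delta,L}$, hence the full sum $\hat G(\cdot,\Lambda)$ continues along that path; since the path and $\Lambda\in\R$ were arbitrary, $\hat G(\cdot,\Lambda)$ is $\Omega$-resurgent.

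The main obstacle is exactly the one flagged above: controlling the differential operator $4+3\zeta\partial_\zeta$ that appears at every level of the recursion \eqref{eq:RGE_borel} without accumulating an uncontrolled product of $n$ Cauchy-estimate factors. The resolution is structural rather than computational --- one must recognize that the recursion only ever differentiates the \emph{partially built} function once per step, so that after unfolding the recursion into the word sum $\sum_{w\in W_n}P_w(\hat\gamma)$ each letter ``$.$'' of $w$ corresponds to a derivative applied to an already-formed convolution product, and these can be re-expressed (shrinking the domain by a controlled amount each time, or better, by absorbing them via the auxiliary function $\hat\psi$) so that $P_w(\hat\gamma)$ is a bona fide iterated convolution of finitely many fixed resurgent germs, to which Theorem~\ref{thm:bound_resu_Sauzin} applies with its decisive $C^n/n!$ factor. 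The bound \eqref{eq:bound_star_shaped} on a star-shaped domain handles convergence near the origin and the behavior on $\mathcal{U}_\Omega$, but it is the Sauzin estimate \eqref{eq:bound_conv_resu} on $\calK_{\delta,L}(\Omega)$ that supplies the $1/n!$ needed for the word sum $\sum 2^{n-1}(\cdots)/n!$ --- and it is the product of \emph{two} such factorial gains ($1/n!$ from Sauzin and $1/n!$ from the $\Lambda^n/n!$ in the definition of $\hat G$) that makes the dominating series $\sum F_n$ converge comfortably for every value of $\Lambda$.
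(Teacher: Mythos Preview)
Your overall architecture is right --- reduce to normal convergence on each $\calK_{\delta,L}(\Omega)$, expand $\hat\gamma_n$ over the words $W_n$, and feed the result into Sauzin's bound and Lemma~\ref{lem:analytic_continuation_series}. But the ``cleaner route'' via $\hat\psi:=(4+3\zeta\partial_\zeta)\hat\gamma$ does not work, and this is the genuine gap. The operator $(4+3\zeta\partial_\zeta)$ in \eqref{eq:RGE_borel} acts on $\hat\gamma_n$, i.e.\ on the \emph{already built} convolution product, not on the innermost copy of $\hat\gamma$. Already at $n=3$ you need $(4+3\zeta\partial_\zeta)(\hat\gamma\star\hat\psi)$, and the term $3\zeta\,\partial_\zeta(\hat\gamma\star\hat\psi)$ is not a convolution of $\hat\gamma$ and $\hat\psi$: writing $\zeta=\eta+(\zeta-\eta)$ inside the integral produces $(\text{id}\cdot\hat\gamma')\star\hat\psi+\hat\gamma'\star(\text{id}\cdot\hat\psi)$, and the second summand introduces the new function $\text{id}\cdot\hat\psi$, which in turn spawns $\text{id}^2\hat\gamma'$, etc. So $P_w(\hat\gamma)$ is not an $n$-fold convolution of two fixed germs, and your displayed bound $\frac{2}{\delta}\frac{C^n}{n!}M^n$ is unjustified.

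Ironically, the approach you dismiss as ``fatal'' is the correct one, and it is exactly what the paper does. A single Cauchy estimate with a \emph{fixed} radius cannot be iterated $n$ times on the same compact; one must shrink the domain at each of the $N$ steps. The paper fixes $N$, interpolates between $\calK_{\delta/2,L+\delta/2}$ and $\calK_{\delta,L}$ by a chain of compacts $\calK_{\delta_n,L_n}$ with increments $\delta/(2N)$, and applies the Cauchy inequality at radius $\delta/(2N)$ at each of the $N$ derivative steps. This costs a factor proportional to $N$ per step, hence $N^N$ overall --- precisely your $(C_1 n)^n$. But this is \emph{not} fatal: combined with Sauzin's $C^{N+1}/(N+1)!$, Stirling gives $N^N/(N+1)!\sim e^N/(N\sqrt{2\pi N})$, so $\max_{\calK_{\delta,L}}|\hat\gamma_{N+1}|\leq(\text{const})^{N+1}/(N\sqrt{N})$. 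The series for $\hat G$ then converges thanks to the single remaining $\Lambda^{N+1}/(N+1)!$. In other words, Sauzin's factorial is entirely consumed by the accumulated Cauchy factors, and only the factorial from the definition of $\hat G$ survives to give convergence --- not the double $1/(n!)^2$ you claim at the end.
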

\begin{proof}
 Let $\delta,L>0$ with $\delta<\rho(\Omega)/2$. Let $\gamma$ be a path in $\calK_{\delta,L}(\Omega)$.
 According to Lemma \ref{lem:analytic_continuation_series} we only need to prove that the series 
 \begin{equation*}
  \sum_{n\geq1}(\text{cont}_\gamma\hat\gamma_n)(\zeta)\frac{\Lambda^n}{n!}
 \end{equation*}
 converges normally. Indeed, in this case, it will be equal to
 \begin{equation*}
  (\text{cont}_\gamma\hat G)(\zeta,\Lambda):=\left(\text{cont}_\gamma\sum_{n=1}^{\infty}\hat\gamma_n\right)(\zeta).
 \end{equation*}
For any $N\in\N^*$, we will deduce from a bound on $\hat\gamma$ a bound on $\hat\gamma_{N+1}$ in the domain $\calK_{\delta,L}(\Omega)$ which contain 
 the path $\gamma$. So, fix $N\in\N^*$ and for $n\in\{1,\cdots,N+1\}$, set 
 \begin{equation*}
  \delta_n:=\frac{\delta}{2} + (n-1)\frac{\delta}{2N},\quad L_n:=L+\frac{\delta}{2} - (n-1)\frac{\delta}{2N}.
 \end{equation*}
 We did not write the dependence on $N$ of $\delta_n$ and $L_n$ to lighten the notations. Notice however that $\delta_1=\delta/2$ and $L_1=L+\delta/2$ for 
 any $N\in\N^*$.

 We now define a map 
 \begin{align*}
  f:W & \longrightarrow \widehat{\mathcal{R}}_\Omega \\
    w & \longmapsto f_w
 \end{align*}
 recursively by
 \begin{equation*}
  f_\emptyset(\zeta) :=  |\hat\gamma(\zeta)| + S,\quad f_{(\star)\sqcup w}(\zeta):=4(f_\emptyset\star f_w)(\zeta),\quad f_{(\star.)\sqcup w}(\zeta):=\frac{6NK}{\delta}(f_\emptyset\star f_w)(\zeta)
 \end{equation*}
 where we have set 
 \begin{equation*}
  S:=\max_{\zeta\in\calK_{\delta_1,L_1}(\Omega)}|\hat\gamma(\zeta)|\quad\text{and}\quad K:= \max_{\zeta\in\calK_{\delta_1,L_1}(\Omega)}|\zeta|.
 \end{equation*}
 The map $f$ is well-defined due to the proof above that the sets $A_n$ and $B_n$ do not intersect. Furthermore its image is a subset 
 of the $\Omega$-resurgent functions since they are stable by convolution and by multiplication by analytic functions.
 
 The analytical part of this proof is now essentially contained is the next Lemma.
 \begin{lem} \label{lem:analytical_bound_f_w}
  For any $N\in\N^*$ and $n\in\{1,\cdots,N+1\}$ we have 
  \begin{equation*}
   |\hat\gamma_n(\zeta)|\leq \sum_{w\in W_n}f_w(\eta)
  \end{equation*}
  for any $\zeta,\eta\in\calK_{\delta_n,L_n}(\Omega)$.
 \end{lem}
 \begin{proof}
  We prove this result by induction on $n$. For $n=1$ we have $f_\emptyset(\zeta)\geq S=\max_{\zeta\in\calK_{\delta_1,L_1}(\Omega)}|\hat\gamma(\zeta)|$ 
  and therefore the 
  lemma holds. Assume it holds for $n\in\{1,\cdots,N\}$. We then have, for any $\zeta\in\calK_{\delta_{n+1},L_{n+1}}(\Omega)$
  \begin{equation*}
   |\hat\gamma_{n+1}(\zeta)| \leq 4|\hat\gamma|\star|\hat\gamma_n|(\zeta) + 3|\hat\gamma|\star|\zeta\partial_\zeta\hat\gamma_n|(\zeta).
  \end{equation*}
  Then using the induction hypothesis and the continuity of the convolution product we have
  \begin{equation*}
   4|\hat\gamma|\star|\hat\gamma_n|(\zeta) \leq \sum_{w\in W_n} 4(f_\emptyset\star f_w)(\eta) = \sum_{w\in W_n}f_{(\star)\sqcup w}(\eta)
  \end{equation*}
  for any $\eta\in\calK_{\delta_{n},L_{n}}(\Omega)\subset\calK_{\delta_{n+1},L_{n+1}}(\Omega)$.
  
  Now, by definition, for any $\zeta\in\calK_{\delta_{n+1},L_{n+1}}(\Omega)$, the disc of center $\zeta$ and radius $\frac{\delta}{2N}$ lies in 
  $\calK_{\delta_{n},L_{n}}(\Omega)$. Therefore, using the definition of $K$ and Cauchy inequality on the disc of center $\zeta$ and radius 
  $\frac{\delta}{2N}$ we find
  \begin{equation*}
   |\zeta\partial_\zeta\hat\gamma_n(\zeta)| \leq \frac{2NK}{\delta}\max_{\zeta\in D(\zeta,\delta/2N)}|\hat\gamma_n(\zeta)| \leq \sum_{w\in W_n}\frac{2NK}{\delta}f_w(\eta)
  \end{equation*}
  for any $\eta\in\calK_{\delta_{n},L_{n}}(\Omega)\subset\calK_{\delta_{n+1},L_{n+1}}(\Omega)$. Thus
  \begin{equation*}
   3|\hat\gamma|\star|\zeta\partial_\zeta\hat\gamma_n|(\zeta) \leq \sum_{w\in W_n}\frac{6NK}{\delta}(f_\emptyset\star f_w)(\eta) = f_{(\star.)\sqcup w}(\eta)
  \end{equation*}
  for any $\eta\in\calK_{\delta_{n},L_{n}}(\Omega)\subset\calK_{\delta_{n+1},L_{n+1}}(\Omega)$. Combining this bound with the one for $4|\hat\gamma|\star|\hat\gamma_n|(\zeta)$
  we obtain
  \begin{equation*}
   |\hat\gamma_{n+1}(\zeta)| \leq \sum_{w\in W_n}\left(f_{(\star)\sqcup w}(\eta) + f_{(\star.)\sqcup w}(\eta)\right) = \sum_{w\in W_{n+1}}f_{w}(\eta)
  \end{equation*}
  for any $\eta\in\calK_{\delta_{n+1},L_{n+1}}(\Omega)$.
 \end{proof}
 We now need to bound $f_w$. 
 {Let $||w||$ be} the number of times the 
 letter $.$ is present in the word $w\in W$. Then for any $n\in\{1,\cdots,N+1\}$ and $w\in W_n$ we have
 \begin{equation*}
  f_w(\zeta) = \left(\frac{6NK}{\delta}\right)^{||w||}4^{n-||w||}f_\emptyset^{\star n}(\zeta).
 \end{equation*}
 We can now use Sauzin's bound \eqref{eq:bound_conv_resu} for $n=N+1$:
 \begin{equation*}
  \max_{\zeta\in\calK_{\delta,L}(\Omega)} f_w(\zeta) \leq \left(\frac{6NK}{\delta}\right)^N4^{N+1}\frac{C^{N+1}}{(N+1)!}\left(\max_{\zeta\in\calK_{\delta/2,L+\delta/2}(\Omega)}f_\emptyset(\zeta)\right)^{N+1}
 \end{equation*}
 where we have used that $||w||\in\{0,1,\cdots,N\}$. Now, using that $\delta/2=\delta_1$ and $L+\delta/2=L_1$ we find 
 $\max_{\zeta\in\calK_{\delta/2,L+\delta/2}(\Omega)}f_\emptyset(\zeta)=2S$. Using Lemmas \ref{lem:analytical_bound_f_w} and \ref{lem:W_n} we obtain
 \begin{equation*}
  \max_{\zeta\in\calK_{\delta,L}(\Omega)}|\hat\gamma_{N+1}(\zeta)| \leq \frac{\delta}{12K}\left(\frac{96}{\delta}SKC\right)^{N+1}\frac{N^N}{(N+1)!}.
 \end{equation*}
 Using Stirling's formula we then have the following bound, for $N$ big 
 \begin{equation*}
  \max_{\zeta\in\calK_{\delta,L}(\Omega)}|\hat\gamma_{N+1}(\zeta)| \leq \frac{\delta}{12Ke}\left(\frac{96}{\delta}SKCe\right)^{N+1} \frac{1}{\sqrt{2\pi N}N}\left(1+\mathcal{O}\left(\frac{1}{\sqrt{N}}\right)\right).
 \end{equation*}
 This implies the normal convergence of the series 
 $\sum_{n\geq1}(\text{cont}_\gamma\hat\gamma_n)(\zeta)\frac{\Lambda^n}{n!}=:(\text{cont}_\gamma\hat G)(\zeta,\Lambda)$ and concludes the proof.
\end{proof} 
{ 
\begin{rk}
 From Remark \ref{remark:need_for_well_behaved1} and Lemma \ref{lem:gamma_n_resu} one obtains that an infinite number of alien derivative act non-trivially on each of the $\hat\gamma_n$. As a consequence, and excluding miraculous cancellation of singularities, the same is true for $\zeta\mapsto\hat G(\zeta,\Lambda)$. This is {corroborated} by the computations of \cite[Section 4.1]{BeCl16} where the main contributions to the (lateral) alien derivatives applied to $\hat G$ were computed and shown to be non-zero. Therefore, the full theory or well-behaved averages is needed for the summation of the two-points function of the Wess-Zumino model.
\end{rk}
}
 
\section{Asymptotic bound of the two-point function}

We now prove that $\hat G(\zeta,L)$ admits an exponential bound in the {star-shaped domain} $\mathcal{U}_\Omega$ of $\C\dsetminus\Omega$ {introduced in Section \ref{subsec:res_fct}}.

\subsection{Statement of the problem}

The following lemma implies that one actually {needs} to study the Schwinger-Dyson equation in order to find the right type of bound on the two-point function.
\begin{lem} \label{lem:bound_gamma_n}
 Let $g:\mathcal{U}_\Omega\longrightarrow\R_+$ be an increasing analytic function such that, for any $\zeta\in\mathcal{U}_\Omega$
 \begin{equation*}
  \max\Big\{\max_{\eta\in[0,\zeta]}|\hat\gamma(\eta)|,\max_{\eta\in[0,\zeta]}|\hat\gamma'(\eta)|\Big\} \leq g(\zeta).
 \end{equation*}
 Then for any $n\in\N^*$ we have
 \begin{equation*}
  \max\Big\{\max_{\eta\in[0,\zeta]}|\hat\gamma_n(\eta)|,\max_{\eta\in[0,\zeta]}|\hat\gamma_n'(\eta)|\Big\} \leq \left[(4+3|\zeta|)(1 + g(\zeta)|\zeta|)\right]^{n-1} g(\zeta).
 \end{equation*}
\end{lem}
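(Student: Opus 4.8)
The plan is to prove the bound by induction on $n$, using the Borel-transformed renormalisation group equation \eqref{eq:RGE_borel}, namely $\hat\gamma_{n+1} = \hat\gamma\star(4+3\zeta\partial_\zeta)\hat\gamma_n$, together with the elementary convolution estimate \eqref{eq:bound_star_shaped} on the star-shaped domain $\mathcal{U}_\Omega$. The base case $n=1$ is exactly the hypothesis on $g$, since the claimed bound reduces to $\max\{\max_{[0,\zeta]}|\hat\gamma|,\max_{[0,\zeta]}|\hat\gamma'|\}\leq g(\zeta)$.

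For the inductive step, assume the bound holds at rank $n$ and write $\hat\gamma_{n+1}=4(\hat\gamma\star\hat\gamma_n)+3(\hat\gamma\star(\zeta\partial_\zeta\hat\gamma_n))$. The first thing I would do is control $\zeta\partial_\zeta\hat\gamma_n$ on $[0,\zeta]$: for $\eta\in[0,\zeta]$ one has $|\eta\,\hat\gamma_n'(\eta)|\leq|\eta|\max_{[0,\zeta]}|\hat\gamma_n'|\leq|\zeta|\,[(4+3|\zeta|)(1+g(\zeta)|\zeta|)]^{n-1}g(\zeta)$. Then applying \eqref{eq:bound_star_shaped} with two factors (so the prefactor is $|\zeta|^{1}/1!=|\zeta|$) gives
\begin{align*}
 \max_{[0,\zeta]}|\hat\gamma_{n+1}| &\leq |\zeta|\Big(4\max_{[0,\zeta]}|\hat\gamma|\max_{[0,\zeta]}|\hat\gamma_n| + 3\max_{[0,\zeta]}|\hat\gamma|\max_{[0,\zeta]}|\eta\hat\gamma_n'|\Big)\\
 &\leq |\zeta|\,g(\zeta)\big(4+3|\zeta|\big)g(\zeta)\,[(4+3|\zeta|)(1+g(\zeta)|\zeta|)]^{n-1}\\
 &\leq \big(4+3|\zeta|\big)\big(g(\zeta)|\zeta|\big)g(\zeta)\,[(4+3|\zeta|)(1+g(\zeta)|\zeta|)]^{n-1},
\end{align*}
which is bounded by $[(4+3|\zeta|)(1+g(\zeta)|\zeta|)]^{n}g(\zeta)$ since $g(\zeta)|\zeta|\leq 1+g(\zeta)|\zeta|$. (Here I have used monotonicity of $g$ to replace $\max_{[0,\zeta]}g$ by $g(\zeta)$, and the fact that $|\eta|\leq|\zeta|$ on the segment.)

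It remains to bound $\hat\gamma_{n+1}'$. Differentiating the convolution $\hat\gamma\star\psi$ under the integral sign (or using $\frac{d}{d\zeta}(\hat\gamma\star\psi)(\zeta)=\hat\gamma(\zeta)\psi(0)+(\hat\gamma\star\psi')(\zeta)$, equivalently $(\hat\gamma'\star\psi)(\zeta)+\hat\gamma(0)\psi(\zeta)$) reduces the derivative of each convolution term to convolutions and pointwise products already controlled by the same quantities $g(\zeta)$, $(4+3|\zeta|)$, $(1+g(\zeta)|\zeta|)$; one again invokes \eqref{eq:bound_star_shaped} and the induction hypothesis applied to $\hat\gamma_n'$. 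The main obstacle is purely bookkeeping: one must choose the form in which the derivative of the convolution is expanded so that every resulting term is absorbed into the single geometric factor $(4+3|\zeta|)(1+g(\zeta)|\zeta|)$ without picking up extra constants; the $1$ inside $(1+g(\zeta)|\zeta|)$ is precisely what leaves room for the boundary term coming from differentiating the convolution. A careful accounting shows all the derivative contributions together are again $\leq[(4+3|\zeta|)(1+g(\zeta)|\zeta|)]^{n}g(\zeta)$, completing the induction.
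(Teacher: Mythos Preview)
Your approach is correct and essentially the same as the paper's: induction via the Borel RGE \eqref{eq:RGE_borel}, the star-shaped convolution estimate \eqref{eq:bound_star_shaped}, and the Leibniz formula for the derivative of a convolution. The paper makes explicit the choice you correctly single out as necessary, namely the form $\partial_\zeta(\hat\gamma\star\psi)=\hat\gamma(0)\psi+(\hat\gamma'\star\psi)$ with $\hat\gamma(0)=1$, so that the boundary term $\psi(\zeta)$ contributes exactly the ``$1$'' in $(1+g(\zeta)|\zeta|)$ and the remaining convolution $(\hat\gamma'\star\psi)$ contributes the $g(\zeta)|\zeta|$; the alternative form $\hat\gamma(\zeta)\psi(0)+(\hat\gamma\star\psi')$ would bring in $\hat\gamma_n''$ and fail.
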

\begin{rk}
 The function $g$ exists since $\hat\gamma$ and $\hat\gamma'$ are analytic (but not bounded) on $\mathcal{U}_\Omega$.
\end{rk}
\begin{proof}
 We prove this Lemma by induction. The case $n=1$ holds by definition of $g$. Assuming the Lemma holds for some $n\in\N^*$; we use the bound 
 \eqref{eq:bound_star_shaped} (which we can use on $\mathcal{U}_\Omega$ since it is star-shaped with respect to the origin) 
 on the renormalisation group equation \eqref{eq:RGE_borel} to obtain, for any $\zeta\in\mathcal{U}_\Omega$
 \begin{align*}
  |\hat\gamma_{n+1}(\zeta)| & \leq g(\zeta)|\zeta|(4\max_{\eta\in[0,\zeta]}|\hat\gamma_n(\eta)|+3|\zeta|\max_{\eta\in[0,\zeta]}|\hat\gamma'_n(\eta)|) \\
			    & \leq (4+3|\zeta|)g(\zeta)|\zeta|\max\Big\{\max_{\eta\in[0,\zeta]}|\hat\gamma_n(\eta)|,\max_{\eta\in[0,\zeta]}|\hat\gamma_n'(\eta)|\Big\} \\
			    & \leq (4+3|\zeta|)(1+g(\zeta)|\zeta|)\max\Big\{\max_{\eta\in[0,\zeta]}|\hat\gamma(\eta)|,\max_{\eta\in[0,\zeta]}|\hat\gamma'(\eta)|\Big\}.
 \end{align*}
 For any $\eta\in[0,\zeta]$ we further have
 \begin{align*}
  |\hat\gamma_{n+1}(\eta)| & \leq (4+3|\eta|)(1+g(\eta)|\eta|)\max\Big\{\max_{\sigma\in[0,\eta]}|\hat\gamma_n(\sigma)|,\max_{\sigma\in[0,\eta]}|\hat\gamma_n'(\sigma)|\Big\} \\
			    & \leq (4+3|\zeta|)(1+g(\zeta)|\zeta|)\max\Big\{\max_{\eta\in[0,\zeta]}|\hat\gamma(\eta)|,\max_{\eta\in[0,\zeta]}|\hat\gamma'(\eta)|\Big\}
 \end{align*}
 since we have assumed $g$ to be increasing.
 Therefore $\max_{\eta\in[0,\zeta]}|\hat\gamma_{n+1}(\zeta)|$ admits the bound of the Lemma.
 
 To obtain a bound on $|\hat\gamma_{n+1}'(\zeta)|$ we use Leibniz's formula
 \begin{equation} \label{eq:Leibniz}
  \frac{d}{dt}\int_{a(t)}^{b(t)} f(t,x)dx = b'(t)f(t,b(t)) - a'(t)f(t,a(t)) + \int_{a(t)}^{b(t)} \frac{\partial f}{\partial t}(t,x)dx;
 \end{equation}
 which holds provided $a$, $b$  and $f$ are $\mathscr{C}^1$.
 
 In our case this formula gives
 \begin{equation*}
  \partial_\zeta(f\star g)(\zeta) = f(0)g(\zeta) + \int_0^\zeta f'(\zeta-\eta)g(\eta) d\eta = f(\zeta)g(0) + \int_0^\zeta f(\zeta-\eta) g'(\eta) d\eta.
 \end{equation*}
 (one gets the second equality through an integration by part). Using $\hat\gamma(0)=1$ and again the bound \eqref{eq:bound_star_shaped} {on} the
 renormalisation group equation \eqref{eq:RGE_borel} {derived once one obtains}, for any $\zeta\in\mathcal{U}_\Omega$
 \begin{equation*}
  |\hat\gamma_{n+1}'(\zeta)|\leq \left[(4+3|\zeta|)(1 + g(\zeta)|\zeta|)\right]\max\Big\{\max_{\eta\in[0,\zeta]}|\hat\gamma_n(\eta)|,\max_{\eta\in[0,\zeta]}|\hat\gamma_n'(\eta)|\Big\}
 \end{equation*}
 The same bound holds for any $\eta\in[0,\zeta]$ from the same argument than the one used for $\hat\gamma_n$.
 
 From these bounds, the Lemma holds by induction.
 \end{proof}
 {Summing} these $\hat\gamma_n$ we end up with the following 
bound for the two-points function (at infinity):
\begin{equation*}
 |\hat G(\zeta,L)| \leq K\exp(c|\zeta|^2g(\zeta)L),
\end{equation*}
for some bound $g(\zeta)$ of $\hat\gamma$ and $\hat\gamma'$ at infinity. This is too weak a bound to apply Borel-\'Ecalle resummation method. The 
square of $|\zeta|$ comes from the $\zeta$ in the renormalisation group 
equation \eqref{eq:RGE_borel} and the $\zeta^{n-1}$ in the Equation \eqref{eq:bound_star_shaped}, which we used with $n=2$. In order to apply 
Borel-\'Ecalle resummation without accelero-summation, we have two challenges to tackle:
\begin{itemize}
 \item relate the bounds for $\hat\gamma_n$ and for $\hat\gamma'_n$ in order to get ride of one of the power of $\zeta$;
 \item find a specific bound on the asymptotic behavior of $\hat\gamma$.
\end{itemize}
The second issue will be solved using the Schwinger-Dyson equation, but the solution of the first one will actually use inputs from the Schwinger-Dyson 
equation as well.

 \subsection{Rewriting the Schwinger-Dyson equation}
 
Expanding the sum in the Schwinger-Dyson equation in the Borel plane, and using $\mathcal{B}(af(a))=1\star \hat f$ we find 
\begin{equation*}
  \hat\gamma(\zeta) = 1 +2\sum_{n=1}^{+\infty}X_{0n}(1\star\hat\gamma_n)(\zeta) + \sum_{n,m=1}^{+\infty}X_{nm}(1\star\hat\gamma_n\star\hat\gamma_m)(\zeta).
  \end{equation*}
with 
\begin{equation*}
 X_{nm}:=\frac{1}{n!m!}\frac{d^n}{dx^n}\frac{d^m}{dy^m}H(x,y)|_{x=y=0}.
\end{equation*}
Using the representation \eqref{eq:Mellin} of the Mellin transform $H$, we find $X_{0n}=X_{n0}=(-1)^n$. Indeed the series 
$\sum_{k=1}^{+\infty}\frac{\zeta(2k+1)}{2k+1}\left((x+y)^{2k+1}-x^{2k+1}-y^{2k+1}\right)$ contains no terms of the form 
$x^Ny^0$ nor $x^0y^N$. Thus 
\begin{equation*}
 \partial_x^n \left.\exp\Bigl(2\sum_{k=1}^{+\infty}\frac{\zeta(2k+1)}{2k+1}\left((x+y)^{2k+1}-x^{2k+1}-y^{2k+1}\right)\Bigr)\right|_{x=y=0} = 0;
\end{equation*}
and the same holds for the derivatives with respect to $y$. We thus find the Schwinger-Dyson equation in the Borel plane:
\begin{equation} \label{eq:SDE_Borel_expanded}
 \hat\gamma(\zeta) = 1 +2\sum_{n=1}^{+\infty}(-1)^n(1\star\hat\gamma_n)(\zeta) + \sum_{n,m=1}^{+\infty}X_{nm}(1\star\hat\gamma_n\star\hat\gamma_m)(\zeta).
\end{equation}
\begin{rk} \label{rk:analytic_continuation_bound}
 It is crucial to the rest of this proof to realise that, while Equation \eqref{eq:SDE_Borel_expanded} holds for any $\zeta\in\C\dsetminus\Omega$, the 
 series on the R.H.S. only converge in a small open subset of $\C\dsetminus\Omega$ which is mapped to a neighborhood of the origin in $\C$. Indeed, deriving 
 \eqref{eq:SDE_Borel_expanded} we obtain
 \begin{equation*}
  \hat\gamma'(\zeta) = 2\sum_{n=1}^{+\infty}(-1)^n\hat\gamma_n(\zeta) + \sum_{n,m=1}^{+\infty}X_{nm}(\hat\gamma_n\star\hat\gamma_m)(\zeta).
 \end{equation*}
 The renormalisation group equation \eqref{eq:RGE_borel} together with the result of \cite{BeCl14} that $\hat\gamma(\zeta)\sim A\ln(1/3-\zeta)$ when 
 $\zeta$ goes to $1/3$ implies that $\hat\gamma_n$ has the same behavior when $\zeta$ goes to $1/3$. Thus 
 $\sum_{n=1}^{+\infty}(-1)^n\hat\gamma_n(\zeta)$ trivially diverges in an open set close to $1/3$.
 
 Therefore, the series of the R.H.S. of \eqref{eq:SDE_Borel_expanded} should be read as the analytic continuation of these series when one is away from 
 their convergent domain. This will be important since we will use bounds on $\hat\gamma_n$ of the form of  the bounds of Lemma \ref{lem:bound_gamma_n}
 which holds for any $\zeta\in\mathcal{U}_\Omega$. Provided the series of these bounds will admit an analytic extension to the whole of 
 $\mathcal{U}_\Omega$, it will provide a bound for $\hat\gamma$ as needed.
\end{rk}

Now, the other numbers $X_{nm}$ could be computed using the same type of argument we used to find $X_{n0}$, or directly using the Fa\`a-di-Bruno formula. 
However the result of this computation is not particularly enlightening. It will be enough for us to find a bound for $|X_{nm}|$.
\begin{lem} \label{lem:bound_X_nm}
 For any any $r\in]0,1/2[$ it exists a {real positive number}  $K_r>0$ such that, for any $n,m\in\N^*$ we have
 \begin{equation} \label{eq:X_nm}
  |X_{nm}| \leq \frac{K_r}{r^{n+m}}.
 \end{equation}
\end{lem}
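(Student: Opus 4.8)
I would prove Lemma~\ref{lem:bound_X_nm} by recognising the $X_{nm}$ as the Taylor coefficients of the holomorphic function $H$ at the origin and then applying the Cauchy estimates in two variables. By the very definition $X_{nm}=\frac{1}{n!m!}\partial_x^n\partial_y^m H(x,y)\big|_{x=y=0}$, so that $H(x,y)=\sum_{n,m\geq 0}X_{nm}x^ny^m$ is the Taylor expansion of $H$ about $(0,0)$, and \eqref{eq:X_nm} is precisely the assertion that this double power series has polyradius of convergence at least $(r,r)$ with a controlled bound, for every $r\in\,]0,1/2[$.

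The first step is to exhibit a bidisc on which $H$ is holomorphic. From \eqref{eq:Mellin0}, $H$ is a ratio of products of $\Gamma$-functions; since $\Gamma$ is holomorphic and non-vanishing away from the non-positive integers and $1/\Gamma$ is entire, the denominator produces no singularities, and $H$ is holomorphic wherever none of $1+x$, $1+y$, $1-x-y$ is a non-positive integer. On the open bidisc $D:=\{(x,y):|x|<\tfrac12,\ |y|<\tfrac12\}$ one has $|x|<1$, $|y|<1$ and $|x+y|\leq|x|+|y|<1$, so none of these conditions can occur ($x+y\in\{1,2,\dots\}$ is excluded by $|x+y|<1$, and so is $x+y=-1$, the pole of the factor $\frac{1}{1+x+y}$ in \eqref{eq:Mellin}); hence $H$ is holomorphic on $D$. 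Equivalently, one may read this off \eqref{eq:Mellin}: the factor $\frac{1}{1+x+y}$ is holomorphic for $x+y\neq -1$, and the series $\sum_{k\geq1}\frac{\zeta(2k+1)}{2k+1}t^{2k+1}$ has radius of convergence $1$ (because $\zeta(2k+1)\to1$), so the exponential factor is holomorphic on $\{|x|<1,\ |y|<1,\ |x+y|<1\}\supseteq D$.

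Now fix $r\in\,]0,1/2[$. Since $2r<1$, the compact torus $T_r:=\{|x|=|y|=r\}$ is contained in $D$, so $M_r:=\max_{T_r}|H|$ is finite, and the two-variable Cauchy integral formula gives
\begin{equation*}
 X_{nm}=\frac{1}{(2\pi i)^2}\oint_{|x|=r}\oint_{|y|=r}\frac{H(x,y)}{x^{n+1}y^{m+1}}\,dx\,dy,
\end{equation*}
whence $|X_{nm}|\leq M_r/r^{n+m}$; taking $K_r:=M_r$ yields \eqref{eq:X_nm}. There is no serious obstacle in this argument: the only point needing care is that the closed polydisc of radius $r$ must stay inside the domain of holomorphy of $H$, and this is exactly where $r<1/2$ is used, through $|x+y|\leq 2r<1$. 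If an explicit constant is wanted, one can bound directly from \eqref{eq:Mellin}: on that polydisc $\frac{1}{|1+x+y|}\leq\frac{1}{1-2r}$, and using $|\zeta(2k+1)|\leq\zeta(3)$ the modulus of the exponent is at most $2\zeta(3)\sum_{k\geq1}\frac{(2r)^{2k+1}+2r^{2k+1}}{2k+1}=:c_r<\infty$, so one may take $K_r=\dfrac{e^{c_r}}{1-2r}$.
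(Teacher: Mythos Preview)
Your proof is correct and is essentially the same as the paper's: both observe that $H$ is holomorphic on the open bidisc of polyradius $(1/2,1/2)$ and then apply the two-variable Cauchy estimates on the closed polydisc of radius $r<1/2$, taking $K_r=\sup_{|x|\leq r,\,|y|\leq r}|H(x,y)|$. You give more detail on why $H$ is holomorphic there and an optional explicit constant, but the argument is identical in substance.
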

\begin{proof}
 We use the multivariate Cauchy inequality (see for example \cite[Theorem 2.2.7]{Hormander66}); namely that if a function $f:\C^n\longrightarrow\C$ is 
 analytic and bounded by $M$ in the polydisc $\{z:|z_i|\leq r_i,~i=1,\cdots,n\}$, then $|\partial^{\alpha}f(0)|\leq M\frac{\alpha!}{r^\alpha}$ for 
 any multi-index $\alpha\in\N^n$ and with obvious notations for factorial and powers. According to \eqref{eq:Mellin0}, the Mellin transform $H$ is 
 analytic in the polydisc $\{(z_1,z_2)\in\C^2:|z_1|\leq r~\wedge~|z_2|\leq r\}$ for any $r\in]0,1/2[$. For any such $r$, set 
 $K_r:=\sup_{|z_1\leq r,z_2\leq r}|H(z_1,z_2)|$. The bound \eqref{eq:X_nm} follows then directly from the multivariate Cauchy inequality.
\end{proof}

\subsection{Intermediate bounds} \label{intermediate}

We start with a common bound of $\hat\gamma$ and $\zeta\partial_\zeta\hat\gamma$ to find bounds on 
$\hat\gamma_n$ and $\hat\gamma_n'$ for any $n\in\N^*$.
\begin{lem} \label{lem:un_autre_lemme}
 Let $g:\mathcal{U}_\Omega\setminus\{0\}\longrightarrow\R$ be a holomorphic function increasing with $|\zeta|$ such that, for any $\zeta\in\mathcal{U}_\Omega\setminus\{0\}$, 
\begin{equation*}
 \max_{\eta\in[0,\zeta]}|\hat\gamma(\eta)|\leq g(\zeta) \quad {\rm and} \quad \max_{\eta\in[0,\zeta]}|\hat\gamma'(\eta)|\leq \frac{g(\zeta)}{|\zeta|}.
\end{equation*}
Let $(g_n)_{n\in\N^*}$ and $(h_n)_{n\in\N^*}$ be two 
 sequences of functions from $\mathcal{U}_\Omega\setminus \{0\}$ 
 to 
 $\R$ inductively defined for any $\zeta\in\mathcal{U}_\Omega\setminus \{0\}$ by $g_1{(\zeta)}:=g(\zeta)$, $h_1{(\zeta)}:=g(\zeta)/|\zeta|$ and 
 \begin{align*}
  g_{n+1}(\zeta) := g(\zeta)|\zeta|\left[4g_n(\zeta)+3|\zeta| h_n(\zeta)\right],\\
  h_{n+1}(\zeta) := \frac{g_{n+1}(\zeta)}{|\zeta|} + 4g_n(\zeta)+3|\zeta|h_n(\zeta).
 \end{align*}
 Then, for any $n\in\N^*$ and $\zeta\in\mathcal{U}_\Omega\setminus \{0\}$
 \begin{equation*}
  \max_{\eta\in[0,\zeta]}|\hat\gamma_n(\eta)| \leq g_n(\zeta), \quad \max_{\eta\in[0,\zeta]}|\hat\gamma_n'(\eta)| \leq h_n(\zeta).
 \end{equation*}
\end{lem}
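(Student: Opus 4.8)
The plan is to prove Lemma \ref{lem:un_autre_lemme} by induction on $n$, mimicking the structure of the proof of Lemma \ref{lem:bound_gamma_n} but keeping track separately of the bound on $\hat\gamma_n$ and on $\hat\gamma_n'$, and crucially using the hypothesis that $\hat\gamma'$ is bounded by $g(\zeta)/|\zeta|$ (one power of $\zeta$ better than in Lemma \ref{lem:bound_gamma_n}) so that, after division by $|\zeta|$ in $h_{n+1}$, the recursion does not accumulate spurious powers of $\zeta$. The base case $n=1$ is exactly the hypothesis on $g$. For the inductive step, I would start from the Borel renormalisation group equation \eqref{eq:RGE_borel}, namely $\hat\gamma_{n+1} = \hat\gamma\star(4+3\zeta\partial_\zeta)\hat\gamma_n$, and apply the star-shaped convolution bound \eqref{eq:bound_star_shaped} with $n=2$ on $\mathcal{U}_\Omega$ (which is star-shaped with respect to the origin, as recalled in the excerpt): this yields
\begin{equation*}
 |\hat\gamma_{n+1}(\zeta)| \leq |\zeta|\,\max_{\eta\in[0,\zeta]}|\hat\gamma(\eta)|\cdot\max_{\eta\in[0,\zeta]}\big|4\hat\gamma_n(\eta)+3\eta\hat\gamma_n'(\eta)\big| \leq g(\zeta)|\zeta|\big(4g_n(\zeta)+3|\zeta|h_n(\zeta)\big),
\end{equation*}
using the induction hypothesis together with $|\eta|\leq|\zeta|$ on the segment $[0,\zeta]$ and the monotonicity of $g$, $g_n$, $h_n$ in $|\zeta|$. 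Since the right-hand side is increasing in $|\zeta|$, the same bound holds for $\max_{\eta\in[0,\zeta]}|\hat\gamma_{n+1}(\eta)|$, which is precisely $g_{n+1}(\zeta)$.

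For the derivative bound I would differentiate the convolution using Leibniz's formula \eqref{eq:Leibniz}, exactly as in the proof of Lemma \ref{lem:bound_gamma_n}: writing $\psi_n:=(4+3\zeta\partial_\zeta)\hat\gamma_n$, one has
\begin{equation*}
 \partial_\zeta(\hat\gamma\star\psi_n)(\zeta) = \hat\gamma(0)\psi_n(\zeta) + \int_0^\zeta \hat\gamma'(\zeta-\eta)\psi_n(\eta)\,d\eta.
\end{equation*}
Here $\hat\gamma(0)=1$, so the first term contributes $|\psi_n(\zeta)|\leq 4g_n(\zeta)+3|\zeta|h_n(\zeta)$; for the integral term I would apply \eqref{eq:bound_star_shaped} again, now using the \emph{improved} hypothesis $\max_{[0,\zeta]}|\hat\gamma'|\leq g(\zeta)/|\zeta|$, which gives a contribution bounded by $|\zeta|\cdot\frac{g(\zeta)}{|\zeta|}\cdot(4g_n(\zeta)+3|\zeta|h_n(\zeta)) = g(\zeta)(4g_n(\zeta)+3|\zeta|h_n(\zeta)) = g_{n+1}(\zeta)/|\zeta|$. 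Adding the two contributions yields $|\hat\gamma_{n+1}'(\zeta)|\leq g_{n+1}(\zeta)/|\zeta| + 4g_n(\zeta)+3|\zeta|h_n(\zeta) = h_{n+1}(\zeta)$, and, as before, monotonicity in $|\zeta|$ upgrades this to a bound on $\max_{\eta\in[0,\zeta]}|\hat\gamma_{n+1}'(\eta)|$. This closes the induction.

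The main obstacle, as in Lemma \ref{lem:bound_gamma_n}, is bookkeeping rather than conceptual: one must be careful that every intermediate bound is genuinely increasing in $|\zeta|$ so that passing from a pointwise bound at $\zeta$ to a bound on the supremum over the segment $[0,\zeta]$ is legitimate — this requires observing that $g_n$ and $h_n$ inherit monotonicity from $g$ by an auxiliary (trivial) induction, and that the factors $|\zeta|$, $4+3|\zeta|$ appearing in the recursions are themselves increasing. A secondary subtlety is the behaviour near $\zeta=0$: the functions $h_n$ and the quantity $g(\zeta)/|\zeta|$ a priori blow up at the origin, which is why the statement is phrased on $\mathcal{U}_\Omega\setminus\{0\}$; since $\hat\gamma_n\in a^n\C[[a]]$ in the original variable its Borel transform vanishes to order $n-1$ at $0$, so these bounds are not sharp near $0$ but remain valid there, and no issue arises. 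One should also note that the applicability of the Leibniz formula \eqref{eq:Leibniz} requires $\hat\gamma$ and $\hat\gamma_n$ to be $\mathscr{C}^1$ on $\mathcal{U}_\Omega$, which holds since they are holomorphic there by Theorem \ref{thm:resurgence_two_points_function} and Claim \ref{thm:resurgence_gamma}.
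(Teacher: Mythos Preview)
Your proof is correct and follows essentially the same route as the paper's: induction on $n$, the convolution bound \eqref{eq:bound_star_shaped} applied to the Borel RGE \eqref{eq:RGE_borel} for $g_{n+1}$, and Leibniz's formula \eqref{eq:Leibniz} together with $\hat\gamma(0)=1$ and the improved hypothesis $\max_{[0,\zeta]}|\hat\gamma'|\leq g(\zeta)/|\zeta|$ for $h_{n+1}$. The only cosmetic difference is that you bundle $\psi_n=(4+3\zeta\partial_\zeta)\hat\gamma_n$ while the paper treats the two summands separately, and you make the monotonicity bookkeeping (needed to pass from the pointwise bound at $\zeta$ to the supremum over $[0,\zeta]$) more explicit than the paper, which simply refers back to the argument of Lemma~\ref{lem:bound_gamma_n}.
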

\begin{rk}
 Such a function $g$ exists since $\hat\gamma$ and $\zeta\partial_\zeta\hat\gamma$ are analytic on $\mathcal{U}_\Omega$. We will later {work with} a 
 bound {that has these properties} but {this latter bound}  will be defined by {a former one} $g$.
\end{rk}
\begin{proof}
 We prove this by induction: the case $n=1$ holds by definition of $g$.
 
 Assuming the result holds for $n\in\N^*$, using the renormalisation group equation \eqref{eq:RGE_borel}, the 
 bound \eqref{eq:bound_star_shaped} and the induction hypothesis we obtain
 \begin{equation*}
  |\hat\gamma_{n+1}(\zeta)| \leq g(\zeta)|\zeta|\left[4g_n(\zeta)+3|\zeta| h_n(\zeta)\right] =: g_{n+1}(\zeta).
 \end{equation*}
 Taking once again the derivative of the renormalisation group equation \eqref{eq:RGE_borel} we obtain, using Leibniz's formula \eqref{eq:Leibniz}
 \begin{equation*}
  \hat\gamma_{n+1}'(\zeta) = 4[\hat\gamma_n(\zeta) + (\hat\gamma'\star\hat\gamma_n)(\zeta)] + 3 [\zeta\hat\gamma_n'(\zeta) + (\hat\gamma'\star(\zeta\hat\gamma_n'))(\zeta)].
 \end{equation*}
 Using the bound \eqref{eq:bound_star_shaped} and the induction hypothesis on this equation gives the result for $\zeta$. The case of 
 $\eta\in[0,\zeta]$ holds from the same argument than the one of Lemma \ref{lem:bound_gamma_n}, which still holds since we assume $g$ to be increasing.
\end{proof}
We can now express together the bounds of $\hat\gamma_n$ and $\hat\gamma_n'$. 
\begin{lem} \label{lem:encore_un_lemme}
 For any $\zeta\in\mathcal{U}_\Omega\setminus \{0\}$, set 
 \begin{equation*}
  \alpha(\zeta) : = \frac{g(\zeta)}{g(\zeta)+1}
 \end{equation*}
 with $g$ a bound of $\hat\gamma$ and $\zeta\hat\gamma'$ as in Lemma \ref{lem:un_autre_lemme}. Then, for any 
 $n\in\N^*$ and any $\zeta\in\mathcal{U}_\Omega\setminus \{0\}$
 \begin{equation*}
  h_n(\zeta) \leq \frac{1}{\alpha(\zeta)}\frac{g_n(\zeta)}{|\zeta|}.
 \end{equation*}
\end{lem}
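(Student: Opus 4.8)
The plan is to read the inequality directly off the two recursions defining $(g_n)$ and $(h_n)$ in Lemma~\ref{lem:un_autre_lemme}, with essentially no induction: the decisive point is that the bracket $4g_n(\zeta)+3|\zeta|h_n(\zeta)$ occurs in \emph{both} recursions. First I would fix $\zeta\in\mathcal{U}_\Omega\setminus\{0\}$ and record that $g(\zeta)\geq1$: indeed $g$ dominates $\max_{\eta\in[0,\zeta]}|\hat\gamma(\eta)|$, which is at least $|\hat\gamma(0)|=1$ by the normalisation of the anomalous dimension. In particular $g(\zeta)\neq0$, so $\alpha(\zeta)$ is well-defined and $\alpha(\zeta)^{-1}=(g(\zeta)+1)/g(\zeta)=1+1/g(\zeta)$.

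Next, from the recursion $g_{n+1}(\zeta)=g(\zeta)|\zeta|\bigl[4g_n(\zeta)+3|\zeta|h_n(\zeta)\bigr]$ one solves, for every $n\in\N^*$,
\begin{equation*}
 4g_n(\zeta)+3|\zeta|h_n(\zeta)=\frac{g_{n+1}(\zeta)}{g(\zeta)|\zeta|},
\end{equation*}
the division being legitimate since $g(\zeta)\geq1$ and $\zeta\neq0$. Substituting this into the recursion for $h_{n+1}$ gives
\begin{equation*}
 h_{n+1}(\zeta)=\frac{g_{n+1}(\zeta)}{|\zeta|}+\frac{g_{n+1}(\zeta)}{g(\zeta)|\zeta|}=\Bigl(1+\frac{1}{g(\zeta)}\Bigr)\frac{g_{n+1}(\zeta)}{|\zeta|}=\frac{1}{\alpha(\zeta)}\frac{g_{n+1}(\zeta)}{|\zeta|},
\end{equation*}
so the claimed inequality is in fact an \emph{equality} for every $n\geq2$.

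Finally I would dispatch the base case $n=1$: here $h_1(\zeta)=g(\zeta)/|\zeta|$, while $\alpha(\zeta)^{-1}g_1(\zeta)/|\zeta|=(1+1/g(\zeta))\,g(\zeta)/|\zeta|=(g(\zeta)+1)/|\zeta|$, and $g(\zeta)/|\zeta|\leq(g(\zeta)+1)/|\zeta|$ is immediate. This establishes the lemma for all $n\in\N^*$. There is no real obstacle in this argument; the only point that deserves care is the well-definedness of $\alpha$, that is the strict positivity of $g$, which is guaranteed by $\hat\gamma(0)=1$ together with the hypothesis that $g$ bounds $|\hat\gamma|$ on $[0,\zeta]$.
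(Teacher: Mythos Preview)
Your proof is correct and follows the same route as the paper: both exploit that the bracket $4g_n+3|\zeta|h_n$ occurs in the recursions for $g_{n+1}$ and for $h_{n+1}$, which yields the identity $h_{n+1}=\alpha(\zeta)^{-1}g_{n+1}/|\zeta|$ for every $n\geq1$, together with the easy base case $n=1$. Your version is in fact slightly tidier than the paper's, which treats $n=2$ by a separate direct computation (with a small slip, writing $h_2=14g$ where the recursion actually gives $h_2=7g(g+1)$) before invoking the general identity only from $n\geq2$ onward; you correctly observe that the identity already covers $n=2$, making that separate step unnecessary.
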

\begin{proof}
 For $n=1$, the inequality to show is the case $n=1$ of Lemma \ref{lem:un_autre_lemme} since $1/\alpha(\zeta)>1$.
 
 For $n=2$, direct computations give 
 \begin{equation*}
  \frac{1}{\alpha(\zeta)}\frac{g_2(\zeta)}{|\zeta|} = 7g(\zeta)(g(\zeta)+1) \geq h_2(\zeta)=14g(\zeta)
 \end{equation*}
 since $g(\zeta)\geq\max_{\eta\in[0,\zeta]}|\hat\gamma(\zeta)| \geq 1=\hat\gamma(0)$. 
 
 For any $n\geq2$ we have 
 \begin{equation*}
  \frac{1}{\alpha(\zeta)}\frac{g_{n+1}(\zeta)}{|\zeta|} = (g(\zeta)+1)[4g_n(\zeta)+3|\zeta|h_n(\zeta)] = h_{n+1}(\zeta).
 \end{equation*}
 Therefore the result also hold for any $n\geq2$.
\end{proof}
We can now prove the main result of this subsection.
\begin{prop} \label{prop:main_bound}
 Let $g:\mathcal{U}_\Omega\longrightarrow\R$ be a bound of $\hat\gamma$ and $\zeta\hat\gamma'$ as in Lemma 
 \ref{lem:un_autre_lemme}. 
 Then, for any $n\in\N^*$ and $\zeta\in\mathcal{U}_\Omega\setminus \{0\}$ 
 \begin{equation*}
  \max_{\eta\in[0,\zeta]}|\hat\gamma_n(\eta)| \leq \left[(7g(\zeta)+3)|\zeta|\right]^{n-1}g(\zeta).
 \end{equation*}
\end{prop}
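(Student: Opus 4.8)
The plan is to reduce everything to a bound on the majorant sequence $(g_n)$ introduced in Lemma \ref{lem:un_autre_lemme} and then feed in the comparison between $h_n$ and $g_n$ supplied by Lemma \ref{lem:encore_un_lemme}. By Lemma \ref{lem:un_autre_lemme} we have $\max_{\eta\in[0,\zeta]}|\hat\gamma_n(\eta)| \leq g_n(\zeta)$ for every $n\in\N^*$ and every $\zeta\in\mathcal{U}_\Omega\setminus\{0\}$, so it suffices to prove the purely elementary inequality $g_n(\zeta) \leq \left[(7g(\zeta)+3)|\zeta|\right]^{n-1}g(\zeta)$.

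First I would rewrite the recursion for $g_{n+1}$ by inserting the estimate of Lemma \ref{lem:encore_un_lemme}. Since $h_n(\zeta) \leq \frac{1}{\alpha(\zeta)}\frac{g_n(\zeta)}{|\zeta|}$ with $\alpha(\zeta) = \frac{g(\zeta)}{g(\zeta)+1}$, the defining relation $g_{n+1}(\zeta) = g(\zeta)|\zeta|\left[4g_n(\zeta)+3|\zeta|h_n(\zeta)\right]$ gives
\begin{equation*}
 g_{n+1}(\zeta) \leq g(\zeta)|\zeta|\,g_n(\zeta)\left(4 + \frac{3}{\alpha(\zeta)}\right).
\end{equation*}
A one-line computation shows $\frac{1}{\alpha(\zeta)} = 1 + \frac{1}{g(\zeta)}$, hence $g(\zeta)\left(4 + \frac{3}{\alpha(\zeta)}\right) = 7g(\zeta)+3$, so that
\begin{equation*}
 g_{n+1}(\zeta) \leq (7g(\zeta)+3)|\zeta|\,g_n(\zeta).
\end{equation*}

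Then I would conclude by a trivial induction on $n$: the base case $g_1(\zeta) = g(\zeta)$ is exactly the $n=1$ instance of the claimed bound, and the inductive step is precisely the last displayed inequality. Combining this with Lemma \ref{lem:un_autre_lemme} yields the Proposition.

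There is essentially no analytic difficulty remaining here, since all the real work was done in Lemmas \ref{lem:un_autre_lemme} and \ref{lem:encore_un_lemme}; the only points needing a moment's care are that Lemma \ref{lem:encore_un_lemme} is valid for every $n\in\N^*$ (so the substitution is legitimate at each step of the recursion) and the elementary simplification of $3/\alpha(\zeta)$, both of which rely only on $g(\zeta)\geq 1$, a fact that follows from $\hat\gamma(0)=1$ and was already used in the proof of Lemma \ref{lem:encore_un_lemme}.
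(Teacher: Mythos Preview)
Your proof is correct and follows essentially the same route as the paper: reduce to the inequality $g_n(\zeta)\leq [(7g(\zeta)+3)|\zeta|]^{n-1}g(\zeta)$ via Lemma \ref{lem:un_autre_lemme}, plug the bound $h_n\leq g_n/(\alpha|\zeta|)$ from Lemma \ref{lem:encore_un_lemme} into the recursion for $g_{n+1}$, simplify $g(\zeta)(4+3/\alpha(\zeta))=7g(\zeta)+3$, and conclude by induction from $g_1=g$. One tiny remark: the algebraic simplification $g(4+3/\alpha)=7g+3$ does not itself require $g(\zeta)\geq1$; that hypothesis is only needed inside Lemma \ref{lem:encore_un_lemme}.
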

\begin{proof}
 By Lemma \ref{lem:un_autre_lemme} it is sufficient to prove $g_n(\zeta)\leq \left[(7g(\zeta)+3)|\zeta|\right]^{n-1}g(\zeta)$ for any $n\in\N^*$. We prove 
 this by induction: the case $n=1$ trivially holds. Assuming the result holds for $n\in\N^*$, we have according to Lemma \ref{lem:encore_un_lemme}
 \begin{equation*}
  g_{n+1}(\zeta) \leq g(\zeta)|\zeta|\left(4+\frac{3}{\alpha(\zeta)}\right)g_n(\zeta) = |\zeta|\left(7g(\zeta)+3)\right)g_n(\zeta)
 \end{equation*}
 by definition of $\alpha(\zeta)$.
\end{proof}

\subsection{Borel-{\'Ecalle} resummation of the two-points function}

The one quantity that we have not bounded yet and that could still give $\hat G$ a {superexponential} behavior at infinity {on the subset
$\mathcal{U}_\Omega$} of $\C\dsetminus\Omega$ is the bound $g$ of $\hat\gamma$. This is taken care of in the next Proposition.
\begin{prop} \label{prop:bound_gamma_infinity}
 {On  $\mathcal{U}_\Omega$,} $|\hat\gamma(\zeta)|$ {and $|\hat\gamma'(\zeta)|$ are} bounded in a neighborhood of infinity by $1$ {and $1/|\zeta|$ respectively}.
\end{prop}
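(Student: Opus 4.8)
The plan is to bootstrap the Schwinger-Dyson equation \eqref{eq:SDE_Borel_expanded} against the bounds of Proposition \ref{prop:main_bound}. The key observation is that Proposition \ref{prop:main_bound} already reduces everything to a bound $g$ on $\hat\gamma$ and $\zeta\hat\gamma'$ on $\mathcal{U}_\Omega$: given any such $g$, plugging the estimate $\max_{[0,\zeta]}|\hat\gamma_n|\leq\left[(7g(\zeta)+3)|\zeta|\right]^{n-1}g(\zeta)$ together with the bound \eqref{eq:bound_star_shaped} on the convolution products into \eqref{eq:SDE_Borel_expanded} gives, after summing the two geometric series in $n$ (and the double series in $n,m$, controlled by Lemma \ref{lem:bound_X_nm} with some $r\in\,]0,1/2[$), a new bound of the form $|\hat\gamma(\zeta)|\leq \Phi(g(\zeta),|\zeta|)$ for an explicit rational function $\Phi$. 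The point is that $\Phi$ is \emph{better} than linear in $g$ once $|\zeta|$ is large: the $1\star$ in front of each term contributes a factor $|\zeta|$, the convolution of $n$ factors contributes $|\zeta|^{n-1}/(n-1)!$, so the dangerous summands behave like $\bigl(C|\zeta|^2 g(\zeta)\bigr)^n$ with geometric-series convergence requiring $C|\zeta|^2g(\zeta)<1$ — which is exactly the self-improving inequality we want. The same procedure applied to the differentiated equation (displayed in Remark \ref{rk:analytic_continuation_bound}) yields a companion bound on $|\hat\gamma'(\zeta)|$.

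Concretely, I would argue as follows. First, fix $r\in\,]0,1/2[$ and, from Theorem \ref{thm:resurgence_two_points_function}, Lemma \ref{lem:bound_gamma_n}, and analyticity of $\hat\gamma$, record that $\hat\gamma$ and $\hat\gamma'$ are analytic on $\mathcal{U}_\Omega$, hence bounded on compact subsets; this gives a crude starting bound $g_0$ valid on all of $\mathcal{U}_\Omega$. Second, substitute $g_0$ into \eqref{eq:SDE_Borel_expanded} via Proposition \ref{prop:main_bound} and \eqref{eq:bound_star_shaped}; the resulting majorant series converges for $\zeta$ outside a compact set — say for $|\zeta|\geq R_0$ — and there it produces a strictly smaller bound $g_1(\zeta)$. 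Third, iterate: each pass through the SDE contracts the bound on the region where the majorant series converges, and one checks (this is the quantitative heart of the argument, analogous to the $K$-estimates of Section \ref{intermediate}) that the fixed point of this contraction satisfies $g_\infty(\zeta)\to 1$ as $|\zeta|\to\infty$, since the ``$1$'' is the only term in \eqref{eq:SDE_Borel_expanded} that survives when $|\zeta|$ is large and all the $n\geq1$ contributions are $O(|\zeta|^2 g|\zeta|^{n-2})$, which vanish. Fourth, feed the resulting asymptotic bound $g_\infty\to1$ back into Proposition \ref{prop:main_bound} to control $\hat\gamma_n$, and into the differentiated SDE to get $|\hat\gamma'(\zeta)|\lesssim g_\infty(\zeta)/|\zeta|\to 0$, refining $1/|\zeta|$ as claimed (or at worst yielding the bound $1/|\zeta|$ up to a constant that can be absorbed by enlarging the neighborhood of infinity).

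The main obstacle is the bookkeeping around Remark \ref{rk:analytic_continuation_bound}: the series on the right-hand side of \eqref{eq:SDE_Borel_expanded} does \emph{not} converge throughout $\mathcal{U}_\Omega$, so one cannot literally ``substitute and sum'' on the whole domain. The fix, which I expect to occupy the bulk of the proof, is to run the bootstrap only where the majorant series converges — a region of the form $\{|\zeta|\geq R\}\cap\mathcal{U}_\Omega$ whose threshold $R$ improves at each iteration as $g$ improves — and to use the fact that these majorant bounds are themselves analytic functions on $\mathcal{U}_\Omega$ (rational in $g(\zeta)$ and $|\zeta|$, with $g$ analytic) so that, as flagged at the end of Remark \ref{rk:analytic_continuation_bound}, they furnish genuine bounds for $\hat\gamma$ via analytic continuation rather than via termwise convergence. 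One must also take care that $g$ be chosen increasing in $|\zeta|$ so that Lemma \ref{lem:un_autre_lemme} and Proposition \ref{prop:main_bound} apply at each stage; replacing $g$ by $\zeta\mapsto\max_{[0,\zeta]}g$ at each step handles this harmlessly. Once the contraction is set up with these caveats, the conclusion $|\hat\gamma(\zeta)|\leq 1$ and $|\hat\gamma'(\zeta)|\leq 1/|\zeta|$ near infinity on $\mathcal{U}_\Omega$ follows by passing to the limit of the iteration.
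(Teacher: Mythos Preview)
Your overall strategy --- feed the bounds of Proposition~\ref{prop:main_bound} and Lemma~\ref{lem:bound_X_nm} into the Schwinger--Dyson equation \eqref{eq:SDE_Borel_expanded} together with \eqref{eq:bound_star_shaped}, and read off a new bound for $\hat\gamma$ and $\hat\gamma'$ --- is exactly the paper's approach. But two points in your execution are off, and one of them is a genuine gap.

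First, the convergence region of the majorant series is backwards. From Proposition~\ref{prop:main_bound} and \eqref{eq:bound_star_shaped} the term $|1\star\hat\gamma_n|$ is bounded by $|\zeta|\,[(7g(\zeta)+3)|\zeta|]^{n-1}g(\zeta)$, so the geometric ratio is $(7g(\zeta)+3)|\zeta|$, not $C|\zeta|^2g(\zeta)$. This series converges for \emph{small} $|\zeta|$ (namely $(7g+3)|\zeta|<1$), never on a set of the form $\{|\zeta|\geq R\}$. So the bootstrap cannot be ``run where the majorant series converges'' near infinity as you describe; that region is a neighbourhood of the origin, exactly the opposite of where you want to work. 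One must use the analytic-continuation reading of Remark~\ref{rk:analytic_continuation_bound} from the outset, not as a patch.

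Second, the iteration is unnecessary, and the paper's argument shows why. Summing the majorant series in closed form gives
\[
|\hat\gamma(\zeta)|\ \leq\ 1+\frac{2|\zeta|g(\zeta)}{1-(7g(\zeta)+3)|\zeta|}+K_r\Bigl(\frac{|\zeta|g(\zeta)}{\,r-(7g(\zeta)+3)|\zeta|}\Bigr)^{2}\ =:\ G(\zeta,g(\zeta)),
\]
interpreted by analytic continuation away from the small-$|\zeta|$ region. The crucial observation you miss is that $G(\zeta,X)$ has a \emph{finite limit as $|\zeta|\to\infty$ for every $X\geq0$}:
\[
G(\zeta,X)\ \longrightarrow\ f(X):=1-\frac{2X}{7X+3}+K_r\Bigl(\frac{X}{7X+3}\Bigr)^{2}.
\]
Choosing $r$ small so that $K_r$ is close to $H(0,0)=1$ (in any case $K_r<7$), one checks $f$ is decreasing on $\R_+$, hence $f(X)\leq f(0)=1$ for all $X$. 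Thus a \emph{single} pass through the SDE already yields $|\hat\gamma(\zeta)|\lesssim 1$ near infinity, whatever the initial bound $g$ was; no contraction argument is needed. The companion bound $|\hat\gamma'(\zeta)|\leq G(\zeta,g(\zeta))/|\zeta|$ follows by differentiating \eqref{eq:SDE_Borel_expanded} first (removing the $1\star$) and repeating the estimate, which is what you outlined for the derivative.
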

\begin{proof}
 As before
 let $g:\mathcal{U}_\Omega{\setminus\{0\}}\longrightarrow\R$ be a bound of $\hat\gamma$ and $\zeta\hat\gamma'$ as in Lemma 
 \ref{lem:un_autre_lemme}. 
 Using the bound \eqref{eq:bound_star_shaped} on the Schwinger-Dyson equation \eqref{eq:SDE_Borel_expanded}
 with the bounds of Proposition \ref{prop:main_bound} for {the} $\hat\gamma_n$ and the bounds of Lemma \ref{lem:bound_X_nm} for the coefficients $X_{nm}$ we 
 find that $|\hat\gamma|$ is bounded on $\mathcal{U}_\Omega\setminus\{0\}$ by two geometric series. {More properly, and in the spirit of Remark \ref{rk:analytic_continuation_bound},} $|\hat\gamma|$ is bounded in $\mathcal{U}_\Omega\setminus \{0\}$ by the analytic continuation of ({products} of) geometric series. To be more 
 precise, one has 
 \begin{align*}
  |\hat\gamma(\zeta)| & \leq 1 + 2|\zeta|\sum_{n=1}^{\infty} \max_{\eta\in[0,\zeta]}|\hat\gamma_n(\eta)| + \frac{K_r}{2}\max_{\eta\in[0,\zeta]}|\zeta|^2\sum_{n,m=1}^{\infty}\frac{1}{r^{n+m}}\max_{\eta\in[0,\zeta]}|\hat\gamma_n(\zeta)|\max_{\eta\in[0,\zeta]}|\hat\gamma_m(\zeta)| \\
		      & \leq 1 +\frac{2|\zeta| g(\zeta)}{1-(7g(\zeta)+3)|\zeta|} + K_r\left(\frac{|\zeta|g(\zeta)}{r-(7g(\zeta)+3)|\zeta|}\right)^2 =: G(\zeta,g(\zeta))
 \end{align*}
 for any $r\in]0,1/2[$, $\zeta\in\mathcal{U}_\Omega{\setminus\{0\}}$ and with $K_r:=\sup_{|z_1\leq r,z_2\leq r}|H(z_1,z_2)|$. 
 Notice that we removed the $1/2$ in {the third term of} the last 
 bound in order for $G$ to have the following property: for any $\zeta\in\mathcal{U}_\Omega\setminus{\{0\}}$
 \begin{equation} \label{eq:bound_gamma_prime}
  |\hat\gamma'(\zeta)| \leq \frac{G(\zeta,g(\zeta))}{|\zeta|}.
 \end{equation}
 To prove this, we take the derivative of the Schwinger-Dyson equation \eqref{eq:SDE_Borel_expanded}:
\begin{equation*}
 \hat\gamma'(\zeta) = 2\sum_{n=1}^{+\infty}(-1)^n\hat\gamma_n(\zeta) + \sum_{n,m=1}^{+\infty}X_{nm}(\hat\gamma_n\star\hat\gamma_m)(\zeta).
\end{equation*}
Therefore 
\begin{align*}
 |\hat\gamma'(\eta)| & \leq 2\sum_{n=1}^{+\infty}|\hat\gamma_n(\zeta)| + \sum_{n,m=1}^{+\infty}|X_{nm}(\hat\gamma_n\star\hat\gamma_m)(\zeta)| \\
		      & \leq 2\sum_{n=1}^{\infty} \max_{\eta\in[0,\zeta]}|\hat\gamma_n(\eta)| + K_r\max_{\eta\in[0,\zeta]}|\zeta|\sum_{n,m=1}^{\infty}\frac{1}{r^{n+m}}\max_{\eta\in[0,\zeta]}|\hat\gamma_n(\zeta)|\max_{\eta\in[0,\zeta]}|\hat\gamma_m(\zeta)| \\
		      &\leq \frac{G(\zeta,g(\zeta))}{|\zeta|}
\end{align*}
{as claimed.}
It is a cumbersome but simple exercise to study the variations of $G$. However it is enough for the task at hand to check that $G$ is bounded at infinity by $1$. For $\zeta$ {in 
$\mathcal{U}_\Omega$}, we have 
\begin{equation*}
 G(\zeta,X)\sim 1 - \frac{2 X}{7X+3} + K_r\left(\frac{X}{7X+3}\right)^2 =: f(X)
\end{equation*}
for $|\zeta|\to\infty$. We can still choose $r\in]0,1/2[$. Since $H(0,0)=1$ and since $H$ is holomorphic in a neighborhood of $(0,0)$, we can take $r$ small enough 
{for $K_r$ to be} arbitrarily close to $1=H(0,0)$. It then is a simple exercise of real analysis to show that, provided $K_r<7$, $f$ is continuous and decreases over 
 $\R_+^*$. Therefore
 \begin{equation*}
  |\hat\gamma(\zeta)| \lesssim f(0) = 1
 \end{equation*}
 in a neighborhood of infinity. The bound for $\hat\gamma'$ in the same neighborhood of infinity comes from the inequality \eqref{eq:bound_gamma_prime}.
\end{proof}
\begin{thm} \label{thm:bound_two_point_infinity}
 It exists {real} constants $K,~M>0$ such that, for any $L\in\R$, the Borel transform of the solution of the Schwinger-Dyson equation \eqref{eq:SDnlin} and the renormalisation group equation 
 \eqref{eq:RGE} admits the following bound in $\mathcal{U}_\Omega$ around the infinity
 \begin{equation*}
  |\hat G(\zeta,L)| \leq \frac{K}{|\zeta|}\exp\left(M|\zeta|L\right).
 \end{equation*}
\end{thm}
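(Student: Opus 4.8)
The plan is to assemble the estimate directly from the term-by-term bounds already proved. By definition $\hat G(\zeta,L)=\sum_{n\geq1}\hat\gamma_n(\zeta)\frac{L^n}{n!}$, where for $\zeta\in\mathcal{U}_\Omega$ each $\hat\gamma_n(\zeta)$ is the relevant analytic continuation in the spirit of Remark~\ref{rk:analytic_continuation_bound}, and Proposition~\ref{prop:main_bound} gives, for any bound $g$ of $\hat\gamma$ and $\zeta\hat\gamma'$ as in Lemma~\ref{lem:un_autre_lemme}, the inequality $\max_{\eta\in[0,\zeta]}|\hat\gamma_n(\eta)|\leq\big[(7g(\zeta)+3)|\zeta|\big]^{n-1}g(\zeta)$ for all $n\in\N^*$ and $\zeta\in\mathcal{U}_\Omega\setminus\{0\}$. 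In particular $|\hat\gamma_n(\zeta)|$ grows at most geometrically in $n$, so the factor $1/n!$ dominates and the series converges absolutely on $\mathcal{U}_\Omega\setminus\{0\}$, representing $\hat G$ there, consistently with Theorem~\ref{thm:resurgence_two_points_function}.

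The key input is then Proposition~\ref{prop:bound_gamma_infinity}: it allows us to take the bounding function $g$ so that $g(\zeta)\leq C_0$ for some constant $C_0>0$ as soon as $|\zeta|$ is large enough. This is precisely what keeps the exponential rate linear, rather than quadratic, in $|\zeta|$ --- the defect of the naive bound $|\hat G(\zeta,L)|\leq K\exp(c|\zeta|^2g(\zeta)L)$ recorded after Lemma~\ref{lem:bound_gamma_n} --- and I regard it as the only genuinely delicate point, but it is already in hand. With this choice, and writing $a(\zeta):=(7g(\zeta)+3)|\zeta|$, for $\zeta$ in such a neighborhood of infinity one computes
\begin{equation*}
 |\hat G(\zeta,L)|\leq\sum_{n\geq1}|\hat\gamma_n(\zeta)|\frac{|L|^n}{n!}\leq g(\zeta)\sum_{n\geq1}\frac{a(\zeta)^{n-1}|L|^n}{n!}=\frac{g(\zeta)}{a(\zeta)}\big(e^{a(\zeta)|L|}-1\big)\leq\frac{g(\zeta)}{(7g(\zeta)+3)|\zeta|}\,e^{(7g(\zeta)+3)|\zeta|\,|L|}.
\end{equation*}

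Since $x\mapsto x/(7x+3)$ is bounded by $1/7$ on $\R_+$ and $7g(\zeta)+3\leq 7C_0+3=:M$ for $|\zeta|$ large, this yields $|\hat G(\zeta,L)|\leq\frac{1}{7|\zeta|}\exp(M|\zeta|\,|L|)$, which is the asserted bound with $K=1/7$ and this $M$ (for $L\geq0$ one has $|L|=L$, the case stated; for $L<0$ the same computation gives the bound with $L$ replaced by $|L|$). No real obstacle remains at this stage: all the analytic work has been absorbed into Propositions~\ref{prop:main_bound} and~\ref{prop:bound_gamma_infinity}, and the present step amounts to recognizing an exponential series and reading off the constants $K$ and $M$.
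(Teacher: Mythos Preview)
Your argument is correct and matches the paper's own proof essentially line for line: both feed the term-by-term estimate of Proposition~\ref{prop:main_bound} into the series $\hat G=\sum_{n\geq1}\hat\gamma_n\,L^n/n!$, invoke Proposition~\ref{prop:bound_gamma_infinity} to get a bound $g$ that is increasing and bounded at infinity, sum the resulting exponential series, and read off $M=7\sup g+3$. Your treatment is in fact slightly tidier in making the constant $K$ explicit via $x/(7x+3)\leq1/7$ and in handling the sign of $L$.
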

\begin{proof}
 From Proposition \ref{prop:bound_gamma_infinity} we can find a bound of $g$ of $\hat\gamma$ and $\hat\gamma'$ which is increasing and bounded at infinity. Using such a bound in Proposition \ref{prop:main_bound} we obtain
 \begin{align*}
  |\hat G(\zeta,L)| & \leq \sum_{n=1}^\infty[(7g(\zeta)+3)|\zeta|]^{n-1} g(\zeta)\frac{L^n}{n!} \\
		    & = \frac{g(\zeta)}{(7g(\zeta)+3)|\zeta|}{\Big(\exp\left[(7g(\zeta)+3)|\zeta|L\right] - 1\Big)} \\
		    & \leq \frac{{K}}{|\zeta|}\exp(M|\zeta|L)
 \end{align*}
 for some $K>0$, and where we have set $M:=7{[\sup_{\zeta\in\mathcal{U}_\Omega}g(\zeta)]}+3{<\infty}$ {since we have assumed $g$ to be bounded at infinity}. 
\end{proof}
This result, together with Theorem \ref{thm:resurgence_two_points_function}, directly implies
\begin{coro} \label{coro:BE_res}
 The solution of the renormalisation group equation \eqref{eq:RGE} and the Schwinger-Dyson equation \eqref{eq:SDnlin} is Borel-\'Ecalle resummable.
\end{coro}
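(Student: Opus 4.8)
\emph{Plan of proof.} The plan is simply to verify that the formal solution $\tilde G(L,a)=\sum_{n\geq 0}g_n(L)a^n$, viewed for each fixed real $L$ as an element of $\C[[a]]$ and as a solution of the coupled system \eqref{eq:SDnlin}--\eqref{eq:RGE}, satisfies all the hypotheses of Theorem~\ref{thm:Borel_Ecalle_resummation}, and then to invoke that theorem with a suitable well-behaved average.

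First I would collect the three analytic inputs already established. Proposition~\ref{prop:G_one_Gevrey} gives $|g_n(L)|\leq \tfrac{3}{2}(18K^2\tilde L)^n n!$, so $\tilde G(L,\cdot)\in\C[[a]]_1$ and its Borel transform $\hat G(\cdot,L)$ lies in $\C\{\zeta\}$ with a positive radius of convergence; this is the $1$-Gevrey requirement. Next, $\Omega=\N^*/3=\tfrac{1}{3}\N^*$ is of the required form $\omega\N^*$ with $\omega=1/3\in\R_+^*$, and Theorem~\ref{thm:resurgence_two_points_function} asserts precisely that $\zeta\mapsto\hat G(\zeta,L)\in\resOm$. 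Finally, Theorem~\ref{thm:bound_two_point_infinity} gives $|\hat G(\zeta,L)|\leq \tfrac{K}{|\zeta|}e^{ML|\zeta|}$ near infinity on $\mathcal U_\Omega$; since $\hat G(\cdot,L)$ is holomorphic, hence bounded, on a neighbourhood of the origin inside $\mathcal U_\Omega$, this upgrades to a global bound $|\hat G(\zeta,L)|\leq K' e^{c|\zeta|}$ on $\mathcal U_\Omega$ with $c:=ML$. Using the identification of $\zeta^{\pm\cdots\pm}$ with a point of $\mathcal U_\Omega$ recorded after Definition~\ref{defn:well_behaved_averages}, this is exactly the hypothesis $|\hat G(\zeta^{\pm\cdots\pm},L)|\leq K' e^{c|\zeta|}$ required by Theorem~\ref{thm:Borel_Ecalle_resummation}.

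It then remains to fix a well-behaved average. The Catalan average ${\bf man}$ of Example~\ref{ex:averages} satisfies (P1), (P2) and (P3) — the last being the nontrivial result of \cite{Menous} — so it is well-behaved in the sense of Definition~\ref{defn:well_behaved_averages}; moreover, since infinitely many alien derivatives act nontrivially on each $\hat\gamma_n$ and hence on $\hat G(\cdot,L)$, property (P3) is genuinely needed here and one cannot content oneself with an average having only (P1) and (P2). Applying Theorem~\ref{thm:Borel_Ecalle_resummation} with ${\bf m}={\bf man}$ then shows that $G^{\rm res}(a,L):=\mathcal L\circ{\bf man}\circ\mathcal B[\tilde G(L,\cdot)]$ is a solution of \eqref{eq:SDnlin}--\eqref{eq:RGE} analytic on an open disc determined by $c=ML$, which proves the corollary; the precise analyticity domain claimed in Theorem~\ref{thm:main} is then pinned down afterwards by the explicit estimate of Proposition~\ref{prop:numerical_bound}.

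The only point requiring more than bookkeeping is that Theorem~\ref{thm:Borel_Ecalle_resummation} is phrased for a single differential equation in $a$, whereas here one has a system: the renormalisation group equation \eqref{eq:RGE} is differential in $L$ and contains the operator $a\partial_a$, while the Schwinger--Dyson equation \eqref{eq:SDnlin} is quadratic in the two-point function. I would dispose of this by noting that all the estimates above are uniform in $L\in\R$, that (P1) makes ${\bf man}$ compatible with the quadratic term of \eqref{eq:SDnlin} after Borel transform, and that averages commute with $\partial_a$, $\partial_L$ and with multiplication by $a$ or $L$ — because $\mathcal B(\partial_z\tilde f)=-\zeta\hat f$ with $\zeta\mapsto-\zeta$ uniform on $\C\dsetminus\Omega$, and $\mathcal B(z^{-1}\tilde f)=\int\hat f$ likewise — so that applying ${\bf man}$ termwise preserves both equations. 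Beyond this, the corollary carries no real difficulty: its entire substance lies in the two theorems it quotes, resurgence (Theorem~\ref{thm:resurgence_two_points_function}, from Sauzin's convolution bound) and the sharp exponential bound on $\mathcal U_\Omega$ (Theorem~\ref{thm:bound_two_point_infinity}, from the Schwinger--Dyson equation), so the main obstacle is already behind us.
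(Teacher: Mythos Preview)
Your proposal is correct and follows exactly the paper's route: the corollary is stated there as a direct consequence of Theorem~\ref{thm:resurgence_two_points_function} (resurgence) and Theorem~\ref{thm:bound_two_point_infinity} (exponential bound on $\mathcal U_\Omega$), which together feed into Theorem~\ref{thm:Borel_Ecalle_resummation}. Your write-up is simply more explicit than the paper's one-line ``directly implies'', spelling out the 1-Gevrey input, the choice of the Catalan average, and the compatibility with the coupled system---none of which the paper bothers to detail at this point.
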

The main Theorem \ref{thm:main} is obtained with one more result.
\begin{prop} \label{prop:numerical_bound}
 The Borel-\'Ecalle resummed function $G^{\rm res}(a,L)$ is analytic {in} the open subset of $\C$ defined by
 \begin{equation*}
  \left|a-\frac{1}{20L}\right| < \frac{1}{20L}
 \end{equation*}
 for any $L$ in $\R^*_+$.
\end{prop}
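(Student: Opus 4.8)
The plan is to extract the analyticity disc directly from the convergence of the Laplace integral, feeding in the exponential bound of Theorem~\ref{thm:bound_two_point_infinity}. By Corollary~\ref{coro:BE_res} the solution $\tilde G(L,a)$ is Borel-\'Ecalle resummable, so, fixing a well-behaved average ${\bf m}$, the resummed function is, up to its constant term (irrelevant for analyticity), the Laplace transform in the direction $0$ of $({\bf m}\hat G)(\cdot,L)$ evaluated at $z=1/a$:
\begin{equation*}
 G^{\rm res}(a,L) = 1 + \mathcal{L}^0\big[({\bf m}\hat G)(\cdot,L)\big](1/a),\qquad \mathcal{L}^0\big[({\bf m}\hat G)(\cdot,L)\big](z)=\int_0^{+\infty}({\bf m}\hat G)(\zeta,L)e^{-\zeta z}\,\d\zeta.
\end{equation*}
This integral is holomorphic in $z$ on each half-plane $\{\Re z>c\}$ for which $|({\bf m}\hat G)(\zeta,L)|\leq Ke^{c\zeta}$ holds for $\zeta$ large and positive, the integrability of the integrand on any bounded interval $[0,R]$ (in particular across the first singularity $\zeta=\omega_1$) being part of what Corollary~\ref{coro:BE_res} already provides. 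Since $z\mapsto 1/z$ maps $\{\Re z>c\}$ onto $\{a:|a-1/(2c)|<1/(2c)\}$, it suffices to prove that such a $c$ may be chosen arbitrarily close to $10L$.

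First I would pin down the exponential rate of $({\bf m}\hat G)(\cdot,L)$ on the positive real axis. By property {\bf (P3)} of the well-behaved average, together with the identification of the sheets $\zeta^{\pm\cdots\pm}$ with points of $\mathcal{U}_\Omega$, any exponential bound valid for $\hat G(\cdot,L)$ on $\mathcal{U}_\Omega$ near infinity transfers to $({\bf m}\hat G)(\cdot,L)$ along $\R_+$ with the same constant. Revisiting the proof of Theorem~\ref{thm:bound_two_point_infinity}, for a bound $g$ of $\hat\gamma$ and $\zeta\hat\gamma'$ as in Lemma~\ref{lem:un_autre_lemme} one has, for $\zeta\in\mathcal{U}_\Omega$,
\begin{equation*}
 |\hat G(\zeta,L)| \leq \frac{g(\zeta)}{(7g(\zeta)+3)|\zeta|}\Big(\exp\big[(7g(\zeta)+3)|\zeta|L\big]-1\Big);
\end{equation*}
and Proposition~\ref{prop:bound_gamma_infinity} guarantees that $g$ may be taken increasing with $\lim_{|\zeta|\to\infty}g(\zeta)=1$. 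Hence for every $\varepsilon>0$ there is $R_\varepsilon>0$ with $7g(\zeta)+3<10+7\varepsilon$ for $|\zeta|>R_\varepsilon$, which yields $|\hat G(\zeta,L)|\leq K_\varepsilon e^{(10+7\varepsilon)L|\zeta|}$ there, and therefore $|({\bf m}\hat G)(\zeta,L)|\leq K_\varepsilon e^{(10+7\varepsilon)L\zeta}$ for $\zeta>R_\varepsilon$.

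Combining the two previous paragraphs, $G^{\rm res}(\cdot,L)$ is holomorphic on $\{a:\Re(1/a)>(10+7\varepsilon)L\}$ for every $\varepsilon>0$, that is, on the disc of centre and radius $\tfrac{1}{2(10+7\varepsilon)L}$. As $\varepsilon\downarrow0$ these discs are nested and increase to
\begin{equation*}
 \bigcup_{\varepsilon>0}\big\{a:\Re(1/a)>(10+7\varepsilon)L\big\} = \big\{a:\Re(1/a)>10L\big\} = \left\{a:\left|a-\frac{1}{20L}\right|<\frac{1}{20L}\right\},
\end{equation*}
which is the announced domain, and (together with Corollary~\ref{coro:BE_res}) completes the proof of Theorem~\ref{thm:main}.

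The routine parts — the Laplace-integral estimate, the reciprocal change of variable, the monotone passage to the limit in $\varepsilon$ — are standard; the real point, and the only delicate one, is the faithful bookkeeping of the constants $7$ and $3$. One must justify that the asymptotic value of an admissible increasing bound $g$ is exactly $1$: this is exactly Proposition~\ref{prop:bound_gamma_infinity}, whose conclusion is an asymptotic ``$\lesssim$'' rather than a clean ``$\leq$'', which is what forces the $\varepsilon$-limit above instead of a direct argument with $c=10L$. Subordinate to this, one should check that the $L$-independent factors in the displayed bound for $|\hat G(\zeta,L)|$ — the prefactor $1/|\zeta|$, the $-1$, the constant $K_r$, and the behaviour of $({\bf m}\hat G)(\cdot,L)$ through $\zeta=\omega_1$ — affect only the constant $K$ and the integrability, never the exponential rate, and that the bound of Proposition~\ref{prop:main_bound} (hence the displayed one) is read, in the spirit of Remark~\ref{rk:analytic_continuation_bound}, as the analytic continuation of a product of geometric series which is well defined on all of $\mathcal{U}_\Omega$ before one lets $|\zeta|\to\infty$ in it.
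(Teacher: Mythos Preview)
Your proof is correct and follows essentially the same approach as the paper's: both trace the exponential rate $10L$ back to the constants $7$ and $3$ in Proposition~\ref{prop:main_bound} combined with the asymptotic bound $1$ for $g$ from Proposition~\ref{prop:bound_gamma_infinity}, and then read off the disc via Theorem~\ref{thm:Borel_Ecalle_resummation}. The only cosmetic difference is in how the ``$\lesssim$'' of Proposition~\ref{prop:bound_gamma_infinity} is upgraded: the paper subtracts from $\hat\gamma$ a compactly supported function (which does not affect the analyticity domain) so as to force $\sup_{\mathcal{U}_\Omega} g\le 1$ and hence $M\le 10$ outright, whereas you keep $\hat\gamma$ and take the nested union over $\varepsilon\downarrow 0$---two equivalent standard devices for the same issue.
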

\begin{proof}
 The analyticity domain of the resummed function only depends on the asymptotic of the Borel transform. We can therefore subtract to $\hat\gamma$ a function $\psi$ with a compact support 
 without changing the analyticity domain. Doing this, one can assume that the bound $g$ of Proposition \ref{prop:main_bound} is {bounded} at infinity by the function $G$. In this case we have
 \begin{equation*}
  \sup_{\zeta\in\mathcal{U}_\Omega}g(\zeta) \leq \sup_{X\in\R_+}f(X)=f(0)=1
 \end{equation*}
 and therefore $M\leq10$. {The result then follows from} Theorem \ref{thm:Borel_Ecalle_resummation}.
\end{proof}
Let us finish this article by pointing out that we have shown the analyticity of a solution of the Schwinger-Dyson equation in an open disc tangent to the origin. {Assuming that the bound of Theorem \ref{thm:bound_two_point_infinity} is optimal, standard results of the theory of Laplace transform and of Borel-\'Ecalle resummation theory indicate that the resummed function $G^{\rm res}(a,L)$ admits a logarithmic singularities at $a=(10L)^{-1}$. Notice that this logarithmic singularity was already pointed out in the conclusion of \cite{BeCl16}.

If one sees the resummed function $G^{\rm res}(a,L)$ as a function of $p^2=\mu^2\exp(L)$, its singularities at finite $p^2$ can be seen as masses that were not present in the lagrangian but can only be seen after a resurgent analysis. Further notice that if the Borel transform of the two-points function has an exponential behavior at infinity 
\begin{equation} \label{bound_G_asympt_free}
 \widehat{G}(\zeta,L)\sim K\exp(ML|\zeta|)
\end{equation}
then the associated resummed function admits a simple pole at $ML=1/a~\Longleftrightarrow p^2=\mu^2\exp((aM)^{-1})$. In other words: under the assumption of the bound \eqref{bound_G_asympt_free} we have generated a mass $\mu^2\exp((aM)^{-1})$ for our theory. 

Finally, let us point out two things. First, that a bound of the form \eqref{bound_G_asympt_free} is what one should expect to obtain after performing an acceleration of the Borel transform. Furthermore, according to \cite{BeCl18}
such an acceleration will likely take place in the context of asymptotically free QFTs. Therefore we are confident that the proposed mechanism could, at least in principle, be applied to some Yang-Mills theories. Second, if one improves the bound \eqref{bound_G_asympt_free}\footnote{this being of course an abuse of language: it is only possible if Equation \eqref{bound_G_asympt_free} is a bound not an equivalence. We are not more precise in order to not burden the text with too much technical details.} that is to say find an $M'<M$ then the induced mass $\mu^2\exp((aM')^{-1})$ will increase. In other words: improving the bound \eqref{bound_G_asympt_free} increases the mass gap of the theory.

This non perturbative mass generation mechanism stems from the ideas of \cite{BeCl16}, were a similar mechanism was proposed for a transseries approach of the problem. Our refined mechanism will require a finer analysis of the Laplace transform and \'Ecalle's acceleration. Such an analysis is beyond the scope of the current work but will be tackled in the future. \\
}

\noindent
{\bf Acknowledgments:} The author thanks Marc Bellon for many exciting discussions on resurgence theory and the Wess-Zumino model. I also {thank} 
David Sauzin for having kindly answered my questions regarding his non linear analysis for resurgent functions and Sylvie Paycha for encouragements,
discussions and suggestions. I am very { grateful for Marc Bellon's and Sylvie Paycha's corrections an an early} draft of this paper. {I would also like to thank the two anonymous referees whose questions and suggestions have greatly improved the quality of this paper.} {This work was partly completed while at the} Perimeter Institute.

\bibliographystyle{unsrturl}

\end{document}